\documentclass[letter,reqno,11pt]{article}
\usepackage[pdfstartview=FitH,pdfpagemode=None,colorlinks=true,citecolor=verydarkblue,linkcolor=verydarkblue]{hyperref}
\usepackage{amssymb,amsmath,graphicx,verbatim,amsthm}
\usepackage{color,enumerate}
\usepackage[hmargin=1.2in,vmargin=1.2in]{geometry} 
\sloppy
\newcommand{\new}[1]{{\textcolor{black}{#1}}} 
\newcommand{\iddo}[1]{{\textcolor{blue}{#1}}} 
\newcommand{\iddosml}[1]{{\textcolor{black}{#1}}} 
\newcommand{\iddoacc}[1]{{\textcolor{black}{#1}}} 
\newcommand{\iddodefer}[1]{} 
\newcommand{\Commentdel}[1]{}

\newtheorem{theorem}{Theorem}
\newtheorem{lemma}[theorem]{Lemma}
\newtheorem{claim}[theorem]{Claim}
\newtheorem{definition}{Definition}
\newtheorem{proposition}[theorem]{Proposition}
\newtheorem{corollary}[theorem]{Corollary}
\newtheorem{remark}[theorem]{Remark}

\newtheorem*{claim*}{Claim}
\newtheorem*{claim1}{Claim 1}
\newtheorem*{claim2}{Claim  2}
\newtheorem*{fact*}{Fact}

\newcounter{thm1}

\def\squareforqed{\hbox{\rlap{$\sqcap$}$\sqcup$}}
\def\qed{\ifmmode\squareforqed\else{\unskip\nobreak\hfil
\penalty50\hskip1em\null\nobreak\hfil{\tt QED}
\parfillskip=0pt\finalhyphendemerits=0\endgraf}\fi}
\newcommand{\wh}[1]{\widehat{#1}}

\newcommand\ring{{\F}}

\newcommand\F{{\mathbb F}}
\newcommand\Q{{\mathbb Q}}
\newcommand{\ignore}[1]{}
\newcommand\PP{{\mathbb P}}
\newcommand\PF{{\PP_f}}
\newcommand\PC{\ensuremath{\PP_c}}
\newcommand\PI{\PP_{c}^{{-1}}}

\newcommand\mez{~~~~~}
\newcommand\cplus{{\oplus}}
\newcommand\ctimes{{\otimes}}
\newcommand{\tr}[1]{\left[#1\right]}

\newcommand\Num{{\hbox{Num} }}
\newcommand\Den {{\hbox{Den} }}

\newcommand\Inv{\ensuremath{\hbox{\rm Inv}}}
\newcommand\inv{{\hbox{\rm pow}}}

\newcommand\poly{\hbox{{\rm poly}}}
\newcommand\Det{\hbox{{\rm DET}}}
\newcommand\detc{{\det}_c}
\newcommand\Adj{\hbox{{\rm Adj}}}
\newcommand\Coef{\Delta}
\newcommand\dd{{\delta}}

\newcommand {\cd}{\cdot}
\newcommand{\Base}{\mbox{}\\ \ind{\textit{Base case: }}}
\newcommand{\Induction}{\mbox{}\\ \ind{\textit{Induction step: }}}
\newcommand{\case}[1]{\ind\textbf{Case #1}:\,}
\newcommand{\induction}{\Induction}



\newcommand{\mardel}[1]{}
\newcommand {\ind} {\noindent}

\newcommand {\para}[1] {\paragraph{#1}}

\DeclareMathAlphabet{\mathitbf}{OML}{cmm}{b}{it}

\newcommand{\mbf}[1]{\ensuremath{\mathbf{#1}}}
\newcommand{\NCT}{\ensuremath{\mathbf{NC}^2}}

\newcommand{\NP}{\mbf{NP}}



\font\sf=cmss10
\newcommand{\Nats}{{\hbox{\sf I\kern-.13em\hbox{N}}}}   
\newcommand{\Reals}{{\hbox{\sf I\kern-.14em\hbox{R}}}}  
\newcommand{\Ints}{{\hbox{\sf Z\kern-.43emZ}}}          
\newcommand{\CC}{{\hbox{\sf C\kern -.48emC}}}           
\newcommand{\QQ}{{\hbox{\sf C\kern -.48emQ}}}           

\ifx\theorem\undefined
\newtheorem{theorem}{Theorem}[section]
\newtheorem{lemma}[theorem]{Lemma}
\newtheorem{proposition}[theorem]{Proposition}
\newtheorem{corollary}[theorem]{Corollary}

\newenvironment{remark}{\HalfSpace\par\noindent{\bf Remark}:}{\HalfSpace}
\fi
\newenvironment{notation}{\QuadSpace\par\noindent{\bf Notation}:}{\HalfSpace}

\ifx\begin{proof}\undefined
\newenvironment{proof}{\QuadSpace\par\noindent{\bf Proof}:}{\end{proof}\HalfSpace}
\fi

\newcommand{\QuadSpace}{\vspace{0.25\baselineskip}}
\newcommand{\HalfSpace}{\vspace{0.5\baselineskip}}
\newcommand{\FullSpace}{\vspace{1.0\baselineskip}}


\definecolor{bluetxt}{rgb}{0,0,.5}
\definecolor{myred}{rgb}{0.6,0.0,0.1}
\definecolor{greentxt}{rgb}{0,.32,0}
\definecolor{redtxt}{rgb}{0.1,0.1,0.65}
\definecolor{purpletxt}{rgb}{0.6,0.1,0.7}
\definecolor{black}{rgb}{.0,.0,.0}
\definecolor{verydarkblue}{rgb}{.0,.0,.4}
\definecolor{lightgray}{rgb}{.7,.7,.7}

\ifx\begin{proof}\undefined
\newenvironment{proof}{

\smallskip
\noindent\emph{Proof.}}{\hfill\(\Box\)
\bigskip
}
\fi
\newlength{\defbaselineskip}
\setlength{\defbaselineskip}{\baselineskip}

\newcommand{\doublespacing}{\setlength{\baselineskip}{1.0\defbaselineskip}}
\begin{document}

\date{April 2013}
\title{Short Proofs for the Determinant Identities\footnote{Conference version appeared in STOC 2012.}}
\author{Pavel Hrube\v{s}\thanks{Computer Science and Engineering, University of Washington. Email: \texttt{pahrubes@gmail.com}} \and Iddo Tzameret\thanks{Institute for Theoretical Computer Science, The Institute for Interdisciplinary Information Sciences (IIIS), Tsinghua University, Beijing, 100084, China. Email: \texttt{tzameret@tsinghua.edu.cn}.~Supported in part by the National Basic Research Program of China Grant 2011CBA00300, 2011CBA00301, the National Natural Science Foundation of China Grant 61033001, 61061130540, 61073174, 61150110582.}}
\maketitle

\doublespacing 
\begin{abstract}
\doublespacing 
We study arithmetic proof systems $\PC(\F)$ and $ \PF(\F)$ operating with arithmetic circuits and arithmetic formulas, respectively, that prove polynomial identities over a field $\F$. We establish a series of structural theorems about these proof systems, the main one stating that $\PC(\F)$ proofs can be balanced: if a polynomial identity of syntactic degree $ d $ and depth $k$ has a  $\PC(\F)$ proof of size $s$,  then it also has a $\PC(\F)$ proof of size $ {\rm poly}(s,d) $ and depth $ O(k+\log^2 d + \log d\cd \log s) $. As a corollary, we obtain a quasipolynomial simulation of $\PC(\F)$ by $\PF(\F)$, for identities of a polynomial syntactic degree.

Using these results we obtain the following: consider the identities
\begin{eqnarray*}
 \det(XY) = \det(X)\cd\det(Y) \quad\mbox{ and }\quad  \det(Z)= z_{11}\cdots z_{nn},
\end{eqnarray*}
where  $X,Y $ and $ Z$ are $n\times n$ square matrices and $Z$ is a triangular matrix with $z_{11},\dots, z_{nn}$ on the diagonal (and $ \det $ is the determinant polynomial). Then we can construct a polynomial-size arithmetic circuit $\det$ such that the above identities have $ \PC(\F)$ proofs of polynomial-size and $ O(\log^2 n)$ depth.  Moreover, there exists an arithmetic formula $ \det $ of size $n^{O(\log n)}$ such that the above identities have $\PF(\F)$ proofs of size $n^{O(\log n)}$.

This yields a solution to a \new{basic} open problem in propositional proof complexity, namely, whether there are polynomial-size \NCT-Frege proofs for the determinant identities and the \emph{hard matrix identities}, as considered, e.g.~in Soltys and Cook \cite{SC04} (cf., Beame and Pitassi \cite{BP98}). We show that matrix identities like $ AB=I \rightarrow BA=I $ (for matrices over the two element field) as well as basic properties of the determinant have polynomial-size \NCT-Frege proofs, and quasipolynomial-size Frege proofs.
\end{abstract}

\section{Introduction}
The field of proof complexity is dominated by the question of how hard is it to prove propositional tautologies. For weak proof systems, such as resolution, many hardness results are known (cf., \cite{Seg_BSL07} for a recent technical survey), but for strong propositional proof systems like Frege or extended Frege the question remains completely open. In this paper we continue to investigate a different but related problem:  how hard is it to prove polynomial identities? For this purpose, various systems for proving polynomial identities were introduced in \cite{HT08}. The main feature of these systems is that they manipulate arithmetic equations of the form $F=G$, where $F,G$ are arithmetic formulas over a given field. Such equations are manipulated by means of simple syntactic rules, in such a way that $F=G$ has a proof if and only if $F$ and $G$ compute the same polynomial. The central question in this framework is the following:
\begin{quote}
What is the \emph{length} of such proofs, namely, does every true polynomial equation have a short proof, or are there hard equations that require extremely long proofs?
\end{quote}
In this paper, we focus on two arithmetic equational proof systems (arithmetic proofs systems, for short) for proving polynomial identities: $ \PF$ and $ \PC$. The former system was introduced in \cite{HT08} and the latter is an extension of it. The difference between the two systems is that $ \PF$ operates with \emph{arithmetic formulas}, whereas $\PC $ operates with \emph{arithmetic circuits}---this is analogous to the distinction between Frege and extended Frege proof systems (Frege and extended Frege proofs are propositional proof systems establishing propositional tautologies, essentially operating with boolean formulas and circuits, respectively).

The study of proofs of polynomial identities is motivated by at least two reasons. First, as a study of the Polynomial Identity Testing (PIT) problem. As a decision problem, polynomial identity testing can be solved by an efficient randomized algorithm \cite{Sch80,Zip79}, but no efficient deterministic algorithm is known. In fact, it is not even known whether there is a polynomial time non-deterministic algorithm or, equivalently, whether  PIT is in \NP. A proof system such as $\PC $ can be interpreted as a specific non-deterministic algorithm for PIT: in order to verify that an arithmetic formula $F$ computes the zero polynomial, it is sufficient to guess a proof of $F=0$ in $\PC$. Hence, if every true equality has a polynomial-size proof then PIT is in \NP. Conversely, $ \PF $ and $ \PC $ systems capture the common syntactic procedures used to establish equality of algebraic expressions. Thus, showing the existence of identities that require superpolynomial arithmetic proofs would imply that those syntactic procedures are not enough to solve PIT efficiently.

The second motivation comes from propositional proof complexity. The systems $\PF$ and $\PC$ are in fact \emph{restricted} versions of their propositional counterparts, Frege and extended Frege, respectively (when operating over $ GF(2) $). One may hope that the study of the former would help to understand the latter. Arithmetic proof systems have the advantage that they work with arithmetic circuits. The structure of arithmetic circuits is perhaps better understood than the structure of their Boolean counterparts, or is at least different, \new{suggesting} different techniques and fresh perspectives.

In order to understand the strength of the systems $\PF$ and $\PC$, as well as their relative strength, we investigate quite a specific question, namely, how hard is it to prove basic properties of the determinant? In other words, we investigate lengths of proofs of identities such as $\,\det(A B)=\det(A)\cd\det(B)$, or the cofactor expansion of the determinant. We show that such identities have polynomial-size $\PC$ proofs of depth $ O(\log^2 n) $ and quasipolynomial size $\PF$ proofs (both results hold over any field).\footnote{The parameter $ n $ is the dimension of the matrices $ A,B $, and quasipolynomial size means size $n^{O(\log n)}$.}

The determinant polynomial has a central role in both linear algebra and arithmetic circuit complexity. Therefore, an  immediate motivation for our inquiry is to understand whether arithmetic proof systems are strong enough to reason efficiently about the determinant. More importantly, we take the determinant question as a pretext to present several structural properties of $ \PC$ and $\PF$. A large part of this work is not concerned with the determinant at all, but is rather a series of general theorems showing how classical results in arithmetic circuit complexity can be translated to the setting of arithmetic proofs. We thus show how to capture efficiently the following results: (i) homogenization of arithmetic circuits (implicit in \cite{Str73}); (ii) Strassen's technique for eliminating division gates over large enough fields (also in \cite{Str73}); (iii) eliminating division gates over small fields---this is done by simulating large fields in small ones; and (iv) balancing arithmetic circuits (Valiant et al.~\cite{VSB+83}; see also \cite{Hya79}). Most notably, the latter result gives a collapse of polynomial-size $ \PC $ proofs to polynomial-size $ O(\log^2 n) $-depth  $ \PC $ proofs (for proving identities of polynomial syntactic degrees) and a quasipolynomial simulation of $\PC$ by $\PF$. This is one important point where the arithmetic systems differ from Frege and extended Frege, for which no non-trivial simulation is known.

Furthermore, the proof complexity of linear algebra attracted a lot of attention in the past.
This was motivated, in part, by the goal of separating the propositional proof systems Frege and extended Frege.
A classical example, originally proposed by Cook \new{and Rackoff} (cf.,  \cite{BP98,SC04,SU04,Sol_PhD,Sol05}), is the so called \emph{inversion principle} asserting that  $ AB=I \rightarrow BA=I $. When $A,B$ are $n\times n$ matrices over $GF(2)$, the inversion principle is a collection of propositional tautologies. Soltys and Cook \cite{SC04,Sol_PhD} showed that the principle has polynomial size extended Frege proofs. On the other hand, no feasible Frege proof is known, and hence the inversion principle is a candidate for separating the two proof systems. Other candidates, including several based on linear algebra, were presented by Buss et al.~\cite{BBP95}.  The inversion principle is one of the ``hard matrix identities'' explored in \cite{SC04}. Inside Frege, the hard matrix identities have feasible proofs from one another, and they have short proofs from the aforementioned determinant identities. This connection between the hard matrix identities and the determinant identities serves as an evidence for the conjecture that hard matrix identities require superpolynomial Frege proofs: it seems that every Frege proof must in some sense construct  the determinant, which is believed to require a superpolynomial-size formula.

A related question is whether the hard matrix identities and the determinant identities have polynomial-size \NCT-Frege proofs\footnote{That is, polynomial size proofs using circuits of $O(\log^{2}n)$-depth}. This was conjectured in, e.g.,~\cite{BBP95}, based on the intuition that the determinant is \NCT\ computable, 
and so by the analogy between circuit classes and proofs, it is natural to assume that the determinant properties are efficiently provable in \NCT-Frege.  Again,  a polynomial-size extended Frege proofs of the determinant identities have been constructed in \cite{SC04}. Whether these identities have polynomial-size \NCT-Frege proofs (and hence, quasipolynomial-size Frege proofs) remained open.
In this paper, we positively answer this question: we show that over $GF(2)$, the hard hard matrix identities and the determinant identities have polynomial-size $\NCT$-Frege proofs.  This is  a simple corollary of the results on arithmetic proof systems. Over the two element field, an $O(\log^{2}n)$-depth $ \PC $ proof is formally also \NCT-Frege proof\footnote{When $ + $ and $ \cd $ modulo $ 2$ are interpreted as Boolean connectives and $ = $ is interpreted as logical equivalence.}. Thus, if determinant identities like $\det(AB)=\det(A)\cd\det(B)$ have polynomial-size $\PC(GF(2))$ proofs with depth $ O(\log^2 n) $, then the corresponding propositional tautologies have polynomial-size \NCT-Frege proofs.

Let us remark that one can also consider propositional translations of the determinant identities (and the hard matrix identities) over different finite fields or even the rationals. We do not explicitly study these translations, but there is no apparent obstacle to extending the result to these cases.


To understand our construction of short arithmetic proofs for the determinant identities, let us consider the following example. In \cite{Ber84}, Berkowitz constructed a quasipolynomial size arithmetic formula for the determinant. He used a clever combinatorial argument designed \emph{specifically} for the determinant function. However, one can build such a formula in a completely oblivious way: first compute the determinant by, say, Gaussian elimination algorithm. This gives an arithmetic circuit with division gates. Second, show that any circuit with division gates computing a polynomial can be efficiently simulated by a division-free circuit \cite{Str73}, and finally, show that any arithmetic circuit of a polynomial degree can be transformed to an $ O(\log^2 n) $-depth circuit computing the same polynomial, with only a polynomial increase in size \cite{VSB+83} (or to a formula with at most a quasipolynomial increase in size \cite{Hya79}). This paper follows a similar strategy, but in the proof-theoretic framework. 

It should be stressed that in full generality, the structural theorems about $\PC$ and $\PF$ \emph{cannot} be reproduced for propositional Frege and extended Frege systems.  As already mentioned, no non-trivial simulation between Frege and extended Frege is known, and the other theorems are difficult to even formulate in the Boolean context.  This also illustrates one final point: in order to construct a Frege proof of a tautology $T$, it may be useful to interpret $T$ as a polynomial identity and prove it in some of the---weaker but better structured---arithmetic proof systems.

\subsection{Arithmetic proofs with circuits and formulas}\label{sec:equational proofs}
Before presenting and explaining the main results of this paper (in Section \ref{sec:results}), we need to introduce our basic arithmetic proof systems.

\paragraph{Arithmetic circuits and formulas.}
Let $\ring$ be a field.  \emph{An arithmetic circuit} $F$ is a finite directed acyclic graph as follows.
Nodes (or gates) of in-degree zero are labeled by either a variable
or a field element in $\F$.
All the other nodes have in-degree two and they are labeled by
either $+$ or $\times$. Unless stated otherwise, we assume that $F$ has exactly one node of out-degree zero, called the \emph{output node}, and that moreover the
 two edges going into a gate $v$ labeled by $\times$ or $+$ are
labeled by \emph{left} and \emph{right}.  This
is to determine the order of addition and multiplication\footnote{Although ultimately, addition and multiplication are commutative. }.
An arithmetic circuit is called a \emph{formula}, if the out-degree of each node in it is one (and so the underlying graph is a directed tree). The \emph{size} of a circuit is the number of nodes in it,
and the \emph{depth} of a circuit is the length of the longest directed path in it. Arithmetic circuits and formulas will be referred to simply as \emph{circuits} and \emph{formulas}.
\smallskip

For a circuit $ F $ and a node $u$ in $ F $, $ F_u$ denotes the subcircuit of $F$ with output node $u$. If $F,G$ are circuits then
\[
\mbox{  $F\cplus G$ and $F\ctimes G$ }
\]
abbreviate any circuit $H$ whose output node is $u+v$ and $ u\cdot v$, respectively, where $H_u=F$ and $H_v=G$. Furthermore, \[
\mbox{  $F+ G$ and $F\cdot G$ }
\]
 denote the unique circuit \new{of the form $F'\cplus G'$ and $F'\ctimes G'$, respectively, where $F'$, $G'$ are disjoint copies of $F$ and $G$. In particular, if $F$ and $G$ are formulas then so are $F+ G$ and $F\cdot G$.}

A circuit $F$ computes a polynomial $\widehat {F}$ with coefficients from $\ring$ in the obvious manner.
That is, if $F$ consists of a single node labeled with $z$,  a variable or an element of $\F$, we have $\widehat F:=z$. Otherwise, $F$ is either of the form $G\cplus H$ or $G\ctimes H$, and we let $\widehat F:=\widehat G+\widehat H$ or $\widehat F:=\widehat G\cdot \widehat H$, respectively.

\new{
\emph{Substitution} is understood in the following sense. Let $F=F(z)$ be a circuit and $z$ a variable. For a circuit $G$, the circuit $F(G)$ is defined as
follows: let $z_{1},\dots, z_{k}$ be the nodes in $F$ labeled by $z$. Introduce $k$ disjoint copies $G_{1},\dots, G_{k}$ of $G$, and let $F(G)$ be the
union of $F,G_{1},\dots, G_{k}$ where we replace the node $z_{i}$ by the output node of $G_{i}$. Specifically, if $F$ and $G$ are formulas then so is $F(G)$. The circuit $F(G)$
will also be written as $F(z/G)$.
 }

\subsubsection*{The system $\PF(\ring)$}
We now define two proof systems for deriving polynomial identities. The systems manipulate \emph{arithmetic equations}, that is, expressions of the form $F=G$. In the case of $\PF(\ring)$,  $F,G$ are formulas, and in the case of $ \PC(\ring)$, $ F,G $ are circuits (see \cite{HT08} for similar proof systems).

Let $\ring$ be a field. The system $\PF(\ring)$ proves equations of the form $F=G$, where $F,G$ are \emph{formulas} over $\F$.
The inference rules are:
\begin{align*}
&{\rm R1}&   \frac{F=G}{G=F} \mez\mez\mez &\qquad\qquad {\rm R2}& \mez \frac{F=G\qquad G=H}{F=H} && \\
&{\rm R3}&  \frac{F_1=G_1 \qquad  F_2=G_2}{F_1+ F_2= G_1+ G_2}\mez& \qquad\qquad{\rm R4}& \mez\frac{F_1=G_1\qquad F_2=G_2}{F_1\cdot F_2= G_1\cdot G_2} .&&
\end{align*}
The axioms are equations of the following form, with $F,G,H$ formulas:

\begin{tabular}{l  l    l} A1&\mez  $F=F$  & \\
         A2    &\mez $F+G=G+F $                  &    A3    \mez   $F+(G+H) = (F+G)+H$\\
         A4    &\mez $F\cdot G=G\cdot F $, &   A5    \mez $ F \cdot (G\cdot H)= (F\cdot G)\cdot H$ \\
         A6    &\mez $F\cdot (G+H) = F\cdot G+F\cdot H$ &   \\
         A7    &\mez $F+0 = F$                      &   A8    \mez  $F\cdot 0 = 0$\\
         \iddosml{A9}    & \iddosml{\mez $F\cd 1 = F$}   &   \\
        A10    &\mez $a=b+c\,,     a^\prime=b^\prime\cdot c^\prime\,, $& if  $a,b,c,a^\prime,b^\prime,c^\prime\in  \ring$, are such that\\
        & &   the equations  hold in
$\ring$.
\end{tabular}
\smallskip

The rules and axioms can be divided into two groups. The rules
 R1-R4 and axiom A1 determine the logical properties of equality ``='', and axioms A2-A10 assert that  polynomials  form a commutative ring over $\F$.

\emph{A proof $S$ in $\PF(\ring)$} is a sequence of equations $F_1=G_1,\, F_2=G_2,\dots, F_k=G_k$, with $F_i, G_i$ formulas, such that every equation is either an axiom A1-A10, or was obtained from previous equations by one of the rules R1-R4. An equation $ F_i=G_i $ appearing in a proof is also called a \emph{proof line}. We consider two measures of complexity for $S$:  the \new{\emph{size of $S$}}
 is the sum of the sizes of $F_i$ and $G_i$, $i\in [k]$, and the \emph{number of proof lines in $S$} is $ k $. (Throughout the paper, $[k]$ stands for $\{1,\dots,k\}$.)

\subsubsection*{The system $\PC(\ring)$}
The system $\PC(\ring)$ differs from $\PF(\ring)$ in that it manipulates equations  with \emph{circuits}.
$\PC(\ring)$ has the same rules R1-R4  and axioms A1-A10 as $\PF(\F)$, but with $F,G,H, F_1,F_2,G_1,G_2 $ \emph{ranging over circuits}, augmented with the following two axioms:
\QuadSpace

\begin{tabular}{l  l    l}
C1&\mez  $F_1\cplus F_2=F_1+F_2$ \qquad \qquad\qquad &C2\mez $F_1\ctimes F_2=F_1\cdot F_2$.
\end{tabular}
\QuadSpace

A \emph{proof in $\PC(\F)$} is a sequence of equations  $F_1=G_1,\dots, F_k=G_k$, where $F_i, G_i$ are circuits,  and every equation is either an axiom or was derived by one of the rules. As for $\PF(\F)$, the \emph{size} of a proof is the sum of the sizes of all the circuits $F_i$ and $G_i$, $ i\in[k] $, and the \emph{number of proof lines} of the proof is $ k $. The \emph{depth} of a $\PC(\F)$ proof is the maximal depth of a circuit appearing in the proof.
\FullSpace

The main property of the two proof systems $ \PC(\F) $ and $ \PF(\F) $ is that they are sound and complete with respect to polynomial identities.  The systems prove an equation $F=G$ if and only if $F, G$ compute the same polynomial:
\begin{proposition} Let $\F$ be a field.
\begin{enumerate}[(i)]
\item \label{first: 1} For any pair $ F,G$ of arithmetic \new{formulas}, $\PF(\ring)$ proves $F=G$ iff $\widehat F= \widehat G$.
\item \label{first: 2} For any pair $ F,G$ of arithmetic circuits, $\PC(\ring)$ proves $F=G$ iff $\widehat F= \widehat G$.
\end{enumerate}
\end{proposition}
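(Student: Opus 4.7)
The plan is to prove both parts by the same two-step strategy: soundness by a routine induction on proof length, and completeness by reduction to a canonical form. For soundness, each axiom A1--A10 holds under $\wh{\cdot}$ because polynomials over $\F$ form a commutative ring with unit (A10 simply transfers the arithmetic of the constants into $\F$), each inference rule R1--R4 preserves the property that both sides compute the same polynomial, and for $\PC(\F)$ the extra axioms C1 and C2 are valid because $F_1 \cplus F_2$ and $F_1 + F_2$ both compute $\wh{F_1} + \wh{F_2}$ (similarly $\ctimes$ vs.\ $\cd$). Hence if a proof derives $F = G$ then $\wh{F} = \wh{G}$.

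For completeness in $\PF(\F)$, I would introduce a syntactic canonical form: a formula of the shape $c_1 m_1 + \cdots + c_k m_k$, where each $m_i$ is a product of variables written in a fixed (say lexicographic) order, each $c_i \in \F \setminus \{0\}$, the monomials $m_i$ are pairwise distinct, and the summands themselves are listed in a fixed order. Using A6 to expand products over sums, A4--A5 to reorder factors within a monomial, A2--A3 to reorder and regroup summands, A7--A9 to absorb $0$ and $1$, and A10 to combine constant coefficients, one shows by induction on formula structure that $\PF(\F)$ proves $F = \mathrm{Can}(F)$ for every formula $F$. Since two polynomials are equal iff their canonical expansions coincide symbol-by-symbol, $\wh{F} = \wh{G}$ forces $\mathrm{Can}(F)$ and $\mathrm{Can}(G)$ to be literally the same formula. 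Chaining $F = \mathrm{Can}(F)$, an A1-instance $\mathrm{Can}(F) = \mathrm{Can}(G)$, and $\mathrm{Can}(G) = G$ (the latter via R1 applied to the claim for $G$) then yields the desired proof.

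For completeness in $\PC(\F)$, I would first establish the auxiliary claim: for every circuit $F$ there is a formula $F^\star$ with $\wh{F^\star} = \wh{F}$ such that $\PC(\F)$ proves $F = F^\star$. The argument is a direct induction on the size of $F$. A leaf is already a formula. If $F$ has the form $G \cplus H$ at the output, apply axiom C1 to get $F = G + H$; inductively obtain formulas $G^\star, H^\star$ with $\PC(\F)$ proofs of $G = G^\star$ and $H = H^\star$; apply R3 to conclude $G + H = G^\star + H^\star$, and then by transitivity $F = G^\star + H^\star$, where the right-hand side is a formula by definition of $+$. The case $G \ctimes H$ is symmetric, using C2 and R4. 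Given circuits $F, G$ with $\wh{F} = \wh{G}$, this yields formulas $F^\star, G^\star$ with $\wh{F^\star} = \wh{G^\star}$; invoking the $\PF(\F)$ completeness already established (its axioms and rules being a subset of those of $\PC(\F)$) gives a proof of $F^\star = G^\star$, and transitivity closes the argument.

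The main technical obstacle is the canonical-form lemma for $\PF(\F)$: one must verify that all the required manipulations (fully distributing a nested product of sums, sorting factors within monomials and monomials within a sum, merging repeated monomials, collapsing constants via A10) can be carried out by finitely many applications of the stated axioms and rules. Each individual step is elementary, but the bookkeeping is delicate; once it is in place, the rest of the proposition follows by straightforward induction and transitivity.
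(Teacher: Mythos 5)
Your proposal is correct and follows essentially the same approach as the paper: soundness by induction on the number of proof lines, and completeness by provably rewriting $F$ and $G$ into a canonical sum-of-monomials form (with the $\PC$ case reduced to the $\PF$ case by first provably unfolding a circuit into a formula via C1, C2, R3, R4). The paper only sketches this in one sentence and credits part (i) to \cite{HT08}; your write-up fills in the same skeleton with more detail.
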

\new{Part \ref{first: 1} was shown in \cite{HT08}, part \ref{first: 2} is almost identical. Soundness can be easily proved by induction on the number of lines and completeness by rewriting $F$ and $G$ as a sum of monomials. }

\new{It should be noted that $\PF$ and $\PC$ proofs are closed under substitution. If $F_{1}=G_{1},\dots, F_{k}= G_{k}$ is a $\PC$ proof, $z$ a variable and $H$ a circuit then $F_{1}(z/H)=G_{1}(z/H),\dots, F_{k}(z/H)= G_{k}(z/H)$ is also a $\PC$ proof (similarly for $\PF$ and a formula $H$).
This means that from a general proof,  one can obtain the proof of its instance. }

For simplicity, we often suppress the explicit dependence on the field $\F $ in $\PC$ and $ \PF$, if the relevant statement holds over any field.

\para{Comments on the proof systems.}
The system $\PC$ is an algebraic analogue of the propositional proof system \emph{circuit Frege} (CF). Circuit Frege is polynomially equivalent to the more well-known \emph{extended Frege} system (EF) (see \cite{Kra95,Jer04}). Following this analogy, one can define  an \emph{extended $\PF$} proof system, $\hbox{\rm E}\PF$, as follows: \new{an $\hbox{\rm E}\PF$ proof is a $\PF$ proof in which  we are allowed to introduce new ``extension'' variables $z_{1}, z_{2},\dots$~via the axiom $z_{i}= F_{i}$, where we require that (i) the variable $ z_i $ appears in neither $ F_{i} $ nor in any previous proof-line; and (ii) the last equation in the proof contains none of the extension variables $ z_1,z_2,\ldots $ .}

The following is completely analogous to the propositional case (see \cite{Kra95,Jer04}):
\begin{proposition}\label{prop:EF}
\mbox{}
\begin{enumerate}[(i)]
\item The systems $\PC$ and $\hbox{{\rm E}}\PF$ polynomially simulate each other. More exactly, there is a polynomial $p$ such that for every pair of formulas $F,G$, if $F=G$ has a $\PC$ proof of size $s$ then it has an $\hbox{{\rm E}}\PF$ proof of size $p(s)$, and if $F=G$ has an $\hbox{{\rm E}}\PF$ proof of size $s$ then it has a $\PC$ proof of size $p(s)$.
\item If $F $ and $G$ are circuits of size $s$ and $F=G$ has a $\PC$ proof with $k$ proof lines then $F=G$ has a $\PC$ proof of
size $\poly(s,k)$.
\end{enumerate}
\end{proposition}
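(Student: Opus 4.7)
The plan is to mirror the well-known simulation between propositional circuit Frege and extended Frege systems, adapted to the arithmetic setting.

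For part (i), two directions must be handled. For the direction $\hbox{{\rm E}}\PF \Rightarrow \PC$: given an $\hbox{{\rm E}}\PF$ proof with extension axioms $z_{i} = F_{i}$, I would substitute each variable $z_{i}$ by a single shared circuit realizing $F_{i}$. Each extension axiom $z_{i} = F_{i}$ then becomes an instance of A1 (namely $F_{i} = F_{i}$), and every other axiom and rule translates directly. Because $\PC$ circuits allow gate sharing, the substitution incurs only polynomial blowup, and the final extension-variable-free line remains $F = G$. For the reverse direction $\PC \Rightarrow \hbox{{\rm E}}\PF$: for each circuit $H$ in the $\PC$ proof and each gate $v$ of $H$, I would introduce an extension variable $z_{v}$ via the defining axiom $z_{v} = z_{u} \circ z_{w}$ (where $u, w$ are the children of $v$ and $\circ \in \{+, \cdot\}$; input gates use $z_{v} = z$ for the label $z$). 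Each circuit is then named by the extension variable at its output node. Every $\PC$ rule or axiom instance translates to a polynomial-size $\hbox{{\rm E}}\PF$ derivation that uses the defining axioms to manipulate the gate-level structure. Since $F, G$ are formulas, I would unfold the output extension variables at the final step to recover $F = G$ without extension variables.

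For part (ii), I would combine the $\PC \Rightarrow \hbox{{\rm E}}\PF$ simulation with the ability of extension variables to share subcircuits across many lines. Given a $\PC$ proof of $F = G$ with $k$ lines, I would convert it to an $\hbox{{\rm E}}\PF$ proof whose lines are formulas of size $\poly(s, k)$, with extension axioms compactly naming every subcircuit occurring in the original proof. Reusing extension variables across lines keeps the total $\hbox{{\rm E}}\PF$ size polynomial in $s$ and $k$. Converting back to $\PC$ via the simulation above then produces a $\PC$ proof of size $\poly(s, k)$.

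The main obstacle will be the size analysis in part (ii): I must verify that the extension variables needed to represent all the subcircuits occurring across the $k$ $\PC$ lines can be organized so that their defining formulas, together with the compressed per-line formulas, sum to $\poly(s, k)$. This reduces to observing that rules R1--R4 add only $O(1)$ structural operations per line, while each axiom instance contributes a bounded number of ``new'' parameter subcircuits, each of which can be given a fresh extension variable whose defining formula is a simple $\circ$-combination of already-named subexpressions. Ensuring that these definitions do not cascade into superpolynomial blowup, even when the original $\PC$ proof contains arbitrarily large intermediate circuits, is the delicate bookkeeping step.
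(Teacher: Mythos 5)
The paper does not actually prove Proposition~\ref{prop:EF}; it cites the propositional analogue (\cite{Kra95,Jer04}) and declares the arithmetic case ``completely analogous.''  Your plan follows that same standard template, so the comparison below is really against the well-known CF/EF argument that the paper defers to.

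Part (i) is essentially right.  One small technical point you should be aware of: in the direction $\hbox{{\rm E}}\PF \Rightarrow \PC$, when you replace each $z_i$ by a single shared circuit $C_i$ (rather than by disjoint copies), the axiom $z_i=F_i$ becomes $C_i = F_i(z_j/C_j)$ whose two sides are \emph{similar} but not syntactically identical circuits (the right-hand side has disjoint copies of the $C_j$'s wherever $z_j$ occurred more than once in $F_i$).  So the resulting line is an instance of A1$'$ rather than of A1; you need Remark~\ref{def:similar} to close the gap, which is fine but should be said.  The direction $\PC \Rightarrow \hbox{{\rm E}}\PF$ is the usual gate-naming translation and goes through as you describe.

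Part (ii) is where the real content of the proposition lies, and your stated resolution of the ``cascading blowup'' worry has a genuine gap.  You claim that ``each axiom instance contributes a bounded number of `new' parameter subcircuits, each of which can be given a fresh extension variable whose defining formula is a simple $\circ$-combination of already-named subexpressions.''  That is not true.  An axiom such as A2 can be instantiated at any point in the proof with completely arbitrary circuits $H_1,H_2$ that bear no relation to anything previously named; the defining equation $z_{H_1}=H_1$ is not an $O(1)$-size $\circ$-combination of named things, and naming all of the internal gates of $H_1$ is not bounded by $\poly(s,k)$, since $k$ counts only proof \emph{lines} while the intermediate circuits can be arbitrarily large.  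The missing idea is that such $z_{H_1}$ need never be \emph{unfolded}: one introduces variables $z(H)$ for all subcircuits $H$ occurring in the given proof, carries out an $O(k)$-line proof of $z(F)=z(G)$ (purely at the level of the naming variables and their one-step defining equations), and then expands only the defining equations needed to recover $F$ and $G$ themselves, which contributes $O(s)$.  All remaining naming variables --- in particular those attached to large axiom parameters absent from $F,G$ --- are left as ordinary free variables and eliminated at the end by substituting a constant (legitimate because $\PC$ proofs are closed under substitution and $F=G$ does not mention them).  Without this ``keep deep parameters abstract and substitute them away'' step, the size bound $\poly(s,k)$ does not follow from what you wrote.
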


The second part of this statement is especially useful, because it is often easier to estimate the number of lines in a proof rather than its size.

\begin{remark}\label{def:similar}
\new{An alternative, polynomially equivalent, definition for $ \PC$ can be given as follows.
For a circuit $F$, define $F^{\bullet}$ as the unfolding of $F$ into a \emph{formula}. That is, $F^{\bullet}: = F$, if $F$ is a leaf, and
$(G\cplus H)^{\bullet}:= G^{\bullet}+ H^{\bullet}$, ${(G\ctimes H)}^{\bullet}:= G^{\bullet}\cdot H^{\bullet}$. We say that $ F$ and $G$ are \emph{similar circuits}, if $F^{\bullet}$ is the same formula as $G^{\bullet}$.
Then {\rm A1, C1, C2} could be replaced by the following single axiom: }
\vspace{-5pt}
\[
        \mbox{{\rm A1'} \hspace{50pt}$F=G$, \mez whenever $F$ and $G$ are similar.}
\]
The axiom {\rm A1'}  can be proved from {\rm A1, C1, C2} by a polynomial-size proof, and vice versa.
\end{remark}

\para{Notation for matrices inside proofs.} \label{matrix notation}
In this paper, matrices are understood as matrices whose entries are circuits and operations on matrices are operations on circuits. \new{We illustrate this for square matrices. Let ${F}=\{F_{{ij}}\}_{i,j\in [n]} $ be an $n\times n$ matrix whose entries are circuits $F_{ij}$; and similarly
${G}= \{G_{ij}\}_{i,j\in [n]}$. Addition and multiplication is defined in the obvious way, namely}
\[F+G=\{{F}_{ij}+ {G}_{ij}\}_{i,j\in [n]}\,,\,\, F\cdot G=\left\{ \sum\nolimits_{p=1}^{n}{F}_{ip}\cdot {G}_{pj}\right\}_{i,j\in [n]}\,,\]
where $+$ and $\cdot$ on the right-hand side is addition and multiplication on circuits.
If $a$ is a \new{single circuit},   $ a\cdot F $ is the matrix $ \{a \cd F_{ij}\}_{i,j\in[n]}$.
An equation ${F}={G}$ denotes the set of equations ${F}_{ij}={G}_{ij},\, i,j\in [n]$.

\section{Overview of results and techniques}\label{sec:results}

\subsection{Main theorem}
It is well known that the determinant can be uniquely characterized as the function that satisfies the following two identities for any pair of $n\times n$ matrices $X,Y$ and any (upper or lower) triangular matrix $ Z $ with $z_{11},\dots, z_{nn}$ on the diagonal:\mardel{This should have a reference.}
\begin{eqnarray} \label{eq:detdef1} \det(X\cdot Y)&=& \det(X)\cd\det(Y), \\
\det(Z)&=& z_{11}\cdots z_{nn}.
\label{eq:detdef2}
\end{eqnarray}
Moreover, other properties of the determinant, such as the cofactor expansion, easily follow from \eqref{eq:detdef1} and \eqref{eq:detdef2}.

The main goal of this paper is to prove the following theorem:

\begin{theorem}[Main theorem]\label{thm: main det}
For any field $\F$:
\begin{enumerate}[(i)]
\item There exists a circuit $\det$ such that \eqref{eq:detdef1} and \eqref{eq:detdef2} have polynomial-size $\PC(\F)$ proofs. Moreover, every\footnote{We assume that the product $z_{11}\cdots z_{nn}$ in (\ref{eq:detdef2}) is written as a formula of depth $O(\log n)$.} circuit in the proof has depth at most $O(\log^2(n))$.
\item There exists a formula $\det$ such that \eqref{eq:detdef1} and \eqref{eq:detdef2} have  $\PF(\F)$ proofs of size $n^{O(\log n)}$.
\end{enumerate}
\end{theorem}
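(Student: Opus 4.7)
The plan is to follow the oblivious strategy sketched in the introduction. First I would construct a ``divisionful'' circuit $\det^\star$ for the determinant via Gaussian elimination, for which the identities \eqref{eq:detdef1} and \eqref{eq:detdef2} admit direct inductive $\PC$-style proofs. Then I would transport those proofs through the structural transformations developed earlier in the paper---division elimination and circuit balancing---to obtain a division-free, $O(\log^{2} n)$-depth circuit $\det$ together with the required proofs, and finally apply the $\PC$-to-$\PF$ simulation for the formula version.

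For the construction of $\det^\star$, I would use the Schur-complement recursion: for a generic block decomposition
\[
A=\begin{pmatrix} a & b \\ c & D\end{pmatrix},
\]
set $\det^\star(A) := a\cd \det^\star(D - c\cd a^{-1}\cd b)$. Unfolded, this yields a polynomial-size circuit with division gates in the entries of $A$. Identity \eqref{eq:detdef2} is then essentially immediate: on an upper-triangular matrix $Z$ the recursion collapses and a straightforward induction gives $\det^\star(Z)=z_{11}\cdots z_{nn}$ in polynomially many $\PC$ lines. Identity \eqref{eq:detdef1} I would prove by induction on $n$: expand $X$, $Y$, and $X\cdot Y$ in block form, rewrite both sides of $\det^\star(X Y)=\det^\star(X)\cd \det^\star(Y)$ using the Schur-complement formula, and reduce to a single $(n-1)\times (n-1)$ invocation of the inductive hypothesis using standard identities relating the Schur complement of a product to products of Schur complements. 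The ambient scalar identities follow from axioms A1--A10 in polynomially many proof lines, and by Proposition~\ref{prop:EF}(ii) this produces a polynomial-size $\PC$-with-divisions proof.

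Next I would apply, in order, the structural theorems established earlier in the paper: the division-elimination result (Strassen's construction, with the small-field version obtained by simulating a larger field) to produce a division-free circuit $\det$ and a polynomial-size $\PC$ proof of $\det^\star=\det$; and then the balancing theorem to rewrite every circuit appearing in the resulting combined proof of \eqref{eq:detdef1} and \eqref{eq:detdef2} so that it has depth $O(\log^{2} n)$, with only polynomial blow-up in size. This yields part~(i). Part~(ii) then follows by applying the quasipolynomial simulation of $\PC(\F)$ by $\PF(\F)$ for identities of polynomial syntactic degree, which is the corollary of the balancing theorem advertised in the abstract.

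The main obstacle, I expect, is carrying out the inductive proof of multiplicativity \emph{inside} $\PC$ with polynomially many proof lines, rather than merely verifying it semantically. Each recursive step manipulates block matrices whose entries are polynomial-size subcircuits, and one must certify the required Schur-complement identities---in particular the identity expressing the Schur complement of $X\cdot Y$ at its top-left corner in terms of Schur complements of $X$ and $Y$---by short $\PC$ proofs using only the ring axioms and the formal rules governing the division gate. One also has to verify that division elimination commutes cleanly with the block-matrix bookkeeping, so that the translation $\det^\star = \det$ can be lifted from a scalar equivalence to the matrix-valued proofs of \eqref{eq:detdef1} and \eqref{eq:detdef2}. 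Once this algebraic core is in place, the subsequent balancing and the passage to $\PF$ are routine invocations of the previously developed machinery.
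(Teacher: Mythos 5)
Your outline is correct in its first half and aligned with the paper's actual route: the paper also defines a divisionful circuit (there called $\Det(X)$) via the Schur--complement recursion, proves \eqref{eq:detdef1} and \eqref{eq:detdef2} for it in $\PI$ by an induction on $n$ that hinges on a rank-one--update (Sherman--Morrison) determinant lemma, and then balances and converts to $\PF$. But the middle step of your plan --- ``apply division elimination to produce a division-free circuit $\det$ and a short proof that $\det^\star=\det$'' --- has a genuine gap that the paper spends an entire subsection (Section \ref{sec:det as polynomial}) getting around.

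The problem is twofold. First, Theorem \ref{thm: main divisions} eliminates divisions from a $\PI$ \emph{proof}, not from a circuit; it applies only when the two endpoints $F,G$ of the proved equation are themselves division-free and of polynomially bounded \emph{syntactic} degree. The circuit $\det^\star$ built by your Schur recursion is neither: it contains division gates, and, as the paper explicitly points out, it has \emph{exponential} syntactic degree (each level doubles the degree of the numerator/denominator). Consequently both Strassen's circuit-level division elimination and Theorem \ref{thm: main divisions} blow up exponentially if applied to $\det^\star$ directly, and one cannot even write $\det^\star=\det$ as a $\PC$ line since $\PC$ has no division gates. Second, one therefore needs to manufacture, \emph{by hand}, a separate division-free circuit computing the determinant with syntactic degree $n$, and then show in $\PI$ that this circuit agrees with $\det^\star$. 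This is the Taylor-coefficient construction the paper uses: it defines $\det(X):=\Coef_{z^{n}}\bigl(\Det(I_n+zX)\bigr)$, which is division-free and of syntactic degree $n$ by inspection, and then proves $\det(X)=\Det(X)$ in $\PI$ via a nontrivial argument (Lemma \ref{lem: CH} and Proposition \ref{lem:charpoly}) that rewrites $\Det(zI_n+X)$ as an explicit degree-$n$ polynomial in $z$ \emph{inside $\PI$}, keeping all substitutions defined at $z=0$. Only at that point are \eqref{eq:detdef1}, \eqref{eq:detdef2} available as division-free equations of polynomial syntactic degree, so that Theorem \ref{thm: main divisions} and then Theorem \ref{thm: main balancing} kick in. Without this intermediate construction, the ``oblivious'' pipeline you describe breaks exactly where you label it a routine invocation of the earlier machinery.
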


As mentioned before, a large part of the construction is not related directly to the determinant. It is rather a series of structural theorems about the systems $\PF$ and $\PC$. These are obtained by reproducing classical results in arithmetic circuit complexity in the setting of arithmetic proofs (for a recent survey on arithmetic circuit complexity see \cite{SY10}).  The most important of those results is showing that $\PC$ proofs can be balanced, in the sense that $ \PC$ proofs of size $ s $ (of polynomially bounded syntactic degree equations) can be polynomially simulated by $\PC$ proofs in which each circuit has depth $ O(\log^2 s)$.

We do not know whether it is possible to prove Theorem \ref{thm: main det} directly, perhaps by formalizing the elegant algorithm of Berkowitz  \cite{Ber84}. One advantage of the algorithm is that, being division-free,  it would dispense of Theorem \ref{thm: main divisions} and allow to generalize Theorem \ref{thm: main det} to an arbitrary commutative ring (as opposed to a field). We also admit that working with circuits and proofs with divisions turned out to be quite tedious. However, our construction is intended to emphasize general properties of arithmetic proof systems, and the structural theorems are in fact our main contribution.

\subsection{Balancing $\PC $ proofs and simulating $\PC$ by $\PF$}
In the seminal paper \cite{VSB+83}, Valiant et al.~showed that if a polynomial $f$ of degree $d$ can be computed by an arithmetic circuit of size $s$, then $f$ can be computed by a circuit of size $\poly(s,d)$ and depth $O(\log s\log d +\log^2 d)$. This is a strengthening of an earlier result by Hyafil \cite{Hya79}, showing that $f$ can be computed by a formula of size $(s(d+1))^{O(\log d)}$. We will show that those results can be efficiently simulated within the framework of arithmetic proofs.

Instead of the degree of a polynomial, we focus on the syntactic degree of a circuit. Let $F$ be an arithmetic circuit. \emph{The syntactic degree} of $F$, $\deg F$, is defined as follows:
\begin{enumerate}[(i)]
\item If $F$ is a field element or a variable, then $\deg F=0$ and  $ \deg F=1$, respectively;

\item $\deg(F\cplus G)= \max(\deg F, \deg G)$, and  $\deg(F\ctimes G)= \deg F+\deg G$.
\end{enumerate}
The \emph{syntactic degree of an equation $F=G$} is $\max(\deg F, \deg G)$, and the \emph{syntactic degree of a proof $S$} is the maximum of the syntactic degrees of equations in $S$. If $ F $ is a circuit and $ u $ is a node in $ F $ we also write $ \deg(v) $ to denote $ \deg F_v $.

In accordance with \cite{VSB+83}, we will construct a map $[\cdot]$ that maps any given circuit $F$ of size $s$ and syntactic degree $d$ to a circuit  $[F]$ computing the same polynomial, such that $\tr{F}$ has  size $\poly(s,d)$ and depth $O(\log s\log d +\log^2 d)$.
We will show the following:

\begin{theorem}\label{thm: main balancing} Let $F,G$ be circuits of syntactic degree at most $ d$ such that $F=G$ has a $\PC$ proof of size $s$. Then:
\begin{enumerate}[(i)]
\item  The equation $\tr{F}=\tr{G}$ has a $\PC$ proof of size $\poly(s,d)$ and depth $O(\log s\cd\log d +\log^2 d)$.
\item If $F,G$ have depth at most $k$ then $F=G$ has a $ \PC $ proof of size $\poly(s,d)$ and depth $O(k+\log s\cd\log d +\log^2 d) $.
\end{enumerate}
\end{theorem}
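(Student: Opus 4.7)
The plan is to lift the Valiant--Skyum--Berkowitz--Rackoff balancing construction \cite{VSB+83} to the setting of arithmetic proofs. First I would fix a concrete definition of the bracket map $[\cdot]$ following the classical algorithm: for a circuit $F$ of syntactic degree $d$, identify a node of ``middle'' degree, decompose $F$ around it by a substitution identity, and recurse on the resulting lower-degree pieces. By the standard analysis, $\tr{F}$ computes $\widehat{F}$, has size $\poly(s,d)$, and has depth $O(\log s\cd\log d+\log^{2} d)$.

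The technical core is a \emph{compatibility lemma}: for any circuits $F,G$ of syntactic degree at most $d$ that appear in a proof of total size $s$, the equations $\tr{F+G}=\tr{F}+\tr{G}$ and $\tr{F\cdot G}=\tr{F}\cdot\tr{G}$ have $\PC$ proofs of size $\poly(s,d)$ and depth $O(\log s\cd\log d+\log^{2} d)$. These identities hold as polynomial equalities but not in general as circuit equalities, so the issue is not provability (by Proposition 1) but achieving a short \emph{and} balanced proof. I would prove this by induction on the recursive definition of $[\cdot]$, matching each step of the algorithmic decomposition with a corresponding short balanced $\PC$ derivation---in particular, this requires short balanced $\PC$ proofs of the substitution identities underpinning the recursion. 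I expect this compatibility lemma to be the main obstacle, where the bulk of the technical work lies.

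Given the compatibility lemma, I would transform an input $\PC$ proof $F_{1}=G_{1},\dots,F_{k}=G_{k}$ line by line into the bracket version $\tr{F_{1}}=\tr{G_{1}},\dots,\tr{F_{k}}=\tr{G_{k}}$. Axiom A1 lifts to itself. Each ring axiom A2--A10 of shape $F=G$ built from subcircuits $F_{1},F_{2},F_{3}$ lifts to $\tr{F}=\tr{G}$ by repeatedly pushing $[\cdot]$ inside via the compatibility lemma, applying the same axiom to the bracketed subcircuits, and combining by transitivity; axioms C1, C2 are analogous. Rules R1, R2 are preserved directly. For R3 and R4, from $\tr{F_i}=\tr{G_i}$ apply the same rule to obtain $\tr{F_1}\circ\tr{F_2}=\tr{G_1}\circ\tr{G_2}$ (for $\circ\in\{+,\cd\}$), then sandwich with two applications of the compatibility lemma by transitivity. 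Each original line is thereby replaced by a sub-derivation of size $\poly(s,d)$ and depth $O(\log s\cd\log d+\log^{2} d)$, yielding part (i).

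For part (ii), with $F,G$ of depth at most $k$, I would additionally construct $\PC$ proofs of $F=\tr{F}$ and $G=\tr{G}$ of size $\poly(s,d)$ and depth $O(k+\log s\cd\log d+\log^{2} d)$, where the extra additive $k$ accommodates the unbalanced starting circuits by layering $\tr{\cdot}$ over the original structure inductively (again using the compatibility lemma on each combining step). Combining these with $\tr{F}=\tr{G}$ from (i) by two applications of R2 yields $F=G$ with the claimed size and depth.
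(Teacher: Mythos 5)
Your approach matches the paper's: the ``compatibility lemma'' is exactly the paper's Main Simulation Lemma (Lemma~\ref{lem:main simulation lemma}), the line-by-line lifting of axioms and rules (with the transitivity ``sandwich'' for R3, R4) is the paper's argument for part (i), and proving $F=\tr{F}$ inductively to get part (ii) is also the same.

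However, there is one genuine gap. The hypothesis only bounds the syntactic degree of the \emph{endpoints} $F$ and $G$ by $d$; it says nothing about the intermediate lines $F_i=G_i$ of the given $\PC$ proof. A circuit of size $s$ can have syntactic degree as large as $2^{\Theta(s)}$ (repeated squaring), so an intermediate line could have degree exponentially larger than $d$. Your compatibility lemma's $\poly(s,d')$ size and $O(\log s\cd\log d'+\log^2 d')$ depth bounds involve the degree $d'$ of the circuits actually being balanced, so applying it blindly to an arbitrary line gives $\poly(s,2^s)$, which is useless. The paper deals with this \emph{before} the line-by-line transformation: by Proposition~\ref{prop: main hom}(ii) (the homogenization/degree-bounding result), a $\PC$ proof of $F=G$ of size $s$ can first be replaced by one of syntactic degree at most $d$ and size $s\cdot\poly(d)$, and only then is the lifting applied. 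Your plan needs this preprocessing step to be correct; without it the size and depth bounds you claim for the transformed proof do not follow.
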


We also obtain the following simulation of $\PC$ by $\PF$:

\begin{theorem}\label{thm: main simulation} Assume that $F,G$ are formulas of syntactic degree $\leq d$ such that $F=G$ has a $\PC$ proof of size $s$.
Then  $F=G$ has a $\PF$ proof of size $(s(d+1))^{O(\log d)}\leq s^{O(\log s)}$.
\end{theorem}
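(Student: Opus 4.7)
The plan is to convert a low-depth $\PC$ proof into a $\PF$ proof by unfolding each circuit into its equivalent formula.

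First I would establish an \emph{unfolding principle}: if $\pi$ is a $\PC$ proof in which every circuit has depth at most $D$, then replacing each circuit $H$ in $\pi$ by its formula unfolding $H^{\bullet}$ (as in Remark~\ref{def:similar}) yields a valid $\PF$ proof. Indeed, the rules R1--R4 and the ring axioms A1--A10 translate line-by-line, while the circuit-specific axioms C1, C2 become instances of A1 under unfolding, since $(F_{1}\cplus F_{2})^{\bullet}=F_{1}^{\bullet}+F_{2}^{\bullet}=(F_{1}+F_{2})^{\bullet}$. A depth-$D$ circuit unfolds to a formula of size at most $2^{O(D)}$, so the resulting $\PF$ proof has size at most (number of lines of $\pi$)$\cdot 2^{O(D)}$.

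Now I invoke Theorem~\ref{thm: main balancing}(i) to obtain a $\PC$ proof of $[F]=[G]$ of size $\poly(s,d)$ and depth $D=O(\log s\cdot\log d+\log^{2}d)$. Applying the unfolding principle yields a $\PF$ proof of $[F]^{\bullet}=[G]^{\bullet}$ of size $\poly(s,d)\cdot 2^{O(D)}=(s(d+1))^{O(\log d)}$. To conclude, I need $\PF$ proofs of $F=[F]^{\bullet}$ and $G=[G]^{\bullet}$ of comparable size, after which $F=[F]^{\bullet}=[G]^{\bullet}=G$ gives the desired simulation.

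The main obstacle is producing these two bridge proofs. Naively unfolding a $\PC$ proof of $F=[F]$ does not suffice, because such a proof generally contains intermediate circuits in which $F$ appears as a shared sub-circuit at depths up to $\mathrm{depth}(F)$, and an arbitrary formula $F$ of size $s$ may have depth $\Theta(s)$, producing formulas of size $2^{\Omega(s)}$ after unfolding. The remedy I would pursue is to mirror the recursive construction of $[\cdot]$ directly at the formula level, in the spirit of Hyafil's formula-balancing theorem \cite{Hya79}: proceed by induction on the structure of $F$, expressing $[F]^{\bullet}$ via $[F']^{\bullet}$ for subformulas $F'$ of smaller syntactic degree, and justify each step using only single (unshared) copies of $F$ and of the balanced parts, so that every formula in the bridge proof stays within the target size $(s(d+1))^{O(\log d)}$.
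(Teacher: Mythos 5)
Your proposal is correct and matches the paper's own proof of Theorem~\ref{thm: main simulation}: the paper likewise unfolds a balanced $\PC$ proof into a $\PF$ proof (its Claim~1), applies Theorem~\ref{thm: main balancing}(i), and then bridges with $\PF$ proofs of $[H]^{\bullet}=H$ (its Claim~2, proved by induction on the size of $H$ using Lemma~\ref{lem:simulation} together with Claim~1), and your observation that naively unfolding a $\PC$ proof of $F=[F]$ fails because its depth is $\Theta(\mathrm{depth}(F))$ is exactly why the paper needs Claim~2 as a separate argument rather than routing through Theorem~\ref{thm: main balancing}(ii). One small slip in your sketch: the parameter that decreases in the bridge recursion is the \emph{size} of the subformula, not its syntactic degree — when $F=F_{1}+F_{2}$, the subformulas $F_{1},F_{2}$ may have the same syntactic degree as $F$.
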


This simulation is polynomial if $F$ and $G$ have a constant syntactic degree.
Let us emphasize that the syntactic degree of a formula of size $s$ is at most $s$, and hence the simulation is at most \emph{quasipolynomial}.

\para{Homogenization and degree bound in arithmetic proofs.}
One ingredient of Theorems \ref{thm: main balancing} and \ref{thm: main simulation} is to show that using circuits of high syntactic degree cannot significantly shorten $\PC$ proofs. That is, if we want to prove an equation of syntactic degree $d$, we can without loss of generality  use only circuits of syntactic degree at most $d$. This result is the proof-theoretic analog of a result by Strassen, who showed how to separate arithmetic circuits into their homogeneous parts (implicit in \cite{Str73}).

We say that a circuit $F$ is \emph{syntactically homogeneous}, if for every sum-gate $u_{1}+u_{2}$ in $F$ we have $\deg (u_{1})= \deg (u_{2})$.%
\ignore{or $\deg F_{u_{e}}=0$ and $\widehat F_{u_{e}}=0$  for some $e\in \{1,2\}$. In other words, $F_{u_1}, F_{u_2}$ have the same syntactic degree, or one of them is a {\emph{constant circuit} (i.e., it contains only field constants) computing the zero polynomial.}
An equation $F_{1}=F_{2}$ is \emph{homogeneous}, if $F_{1}$ and $F_{2}$ are syntactically homogeneous, and either $\deg F_{1}=\deg F_{2}$, or $F_{e}=0$ for some $e\in \{1,2\}$. A $\PC$-proof $S$ is \emph{homogeneous}, if it contains only homogeneous equations.} For a circuit $F$ and a number $k$, we introduce a circuit $F^{(k)}$ which computes the syntactically $k$-homogeneous part of $F$ (see Section \ref{sec:bounding} for the definition). The \emph{syntactic degree of a $\PC $ proof} is the maximal syntactic degree of a circuit appearing in it. \new{We show the following:}

\begin{proposition} \label{prop: main hom} 
Assume that $F=G$ has a $\PC$ proof of size $s$. Then
\begin{enumerate}[(i)]
\item $F^{(k)}=G^{(k)}$ has a $\PC$ proof of size $s\cdot \poly(k)$ and a syntactic degree at most $ k$, for any $ k$.
\item If $\deg(F), \deg(G)\leq d$ then $ F=G$ has a $\PC$ proof of syntactic degree at most $d$ and size $s\cdot \poly(d)$.
\end{enumerate}
\end{proposition}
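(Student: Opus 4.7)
The plan is to first make precise the homogenization operator $F\mapsto F^{(k)}$ from Strassen. For every circuit $F$ of syntactic degree $d$ and every $k\in\{0,\dots,d\}$, I build $F^{(k)}$ by replacing each node $u$ of $F$ with copies $u^{(0)},\dots,u^{(\deg(u))}$, setting $u^{(k)}$ to a variable/constant of the appropriate degree at leaves, $u^{(k)}=v^{(k)}\cplus w^{(k)}$ at sum gates $u=v\cplus w$, and $u^{(k)}=\sum_{i+j=k} v^{(i)}\ctimes w^{(j)}$ at product gates. This gives $|F^{(k)}|\le |F|\cdot\poly(k)$, syntactic degree $\le k$, and $\widehat{F^{(k)}}$ equal to the $k$-homogeneous part of $\widehat{F}$. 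I would then prove a \emph{commutation lemma}: the equations
\[
(F+G)^{(k)} = F^{(k)}+G^{(k)}, \qquad (F\cdot G)^{(k)} = \sum_{i+j=k} F^{(i)}\cdot G^{(j)}
\]
admit $\PC$ proofs of size $\poly(|F|,|G|,k)$ whose syntactic degree is at most $k$. In fact these are essentially the axiom A1$'$ of Remark \ref{def:similar}, once the copies have been laid out, so only short bookkeeping arguments are needed.

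For part (i), I would induct on the length of a given $\PC$ proof $F_1=G_1,\dots, F_m=G_m$ of $F=G$, showing that for every line $F_i=G_i$ and every $k$, the equation $F_i^{(k)}=G_i^{(k)}$ has a short $\PC$ proof of syntactic degree $\le k$. The inference rules R1--R4 transfer directly via the commutation lemma (R3 gives $(F+F')^{(k)}=(G+G')^{(k)}$ by summing, and R4 unfolds into a convolution of the inductive hypotheses). Each of the axioms A1--A10, C1, C2 must be handled case-by-case. Axioms like A1, A2, A3, A4, A7, A9, C1, C2 yield homogeneous instances that are themselves axioms (after reindexing copies). Axiom A8 ($F\cdot 0=0$) uses that the $k$-homogeneous part of $0$ is $0$. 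The interesting cases are the associativity A5 and distributivity A6: for A6, one shows that $(F\cdot(G+H))^{(k)}=\sum_{i+j=k}F^{(i)}(G^{(j)}+H^{(j)})$ and $(F\cdot G+F\cdot H)^{(k)}=\sum_{i+j=k}F^{(i)}G^{(j)}+\sum_{i+j=k}F^{(i)}H^{(j)}$ are equal by a termwise application of A6 inside a short sum. Axiom A10 (field identities) is immediate, since constants live in the degree-$0$ layer.

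For part (ii), I would first prove the identity $F=\sum_{k=0}^{d} F^{(k)}$ with a short $\PC$ proof of syntactic degree $\le d$, by induction on $F$: at a sum gate this follows from the commutation lemma, at a product gate it follows from the convolution identity $\sum_{k\le d}\sum_{i+j=k}F_1^{(i)}F_2^{(j)} = \bigl(\sum_i F_1^{(i)}\bigr)\bigl(\sum_j F_2^{(j)}\bigr)$, which uses A6 repeatedly but introduces no gate of syntactic degree exceeding $d$ since $\deg F\le d$. Given this decomposition for both $F$ and $G$, and using part (i) to derive each $F^{(k)}=G^{(k)}$ with $s\cdot\poly(k)$-size proof of syntactic degree $\le k$, one assembles $F=G$ as $F=\sum_k F^{(k)}=\sum_k G^{(k)}=G$, all within syntactic degree $\le d$ and total size $s\cdot\poly(d)$.

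The main obstacle I expect is the careful bookkeeping around the distributivity and associativity axioms, where the homogenization produces convolution-style sums on both sides that must be matched gate-for-gate \emph{and} whose intermediate proof lines must stay within syntactic degree $k$. The subtle point is that a naive derivation of the convolution identity might pass through partially expanded products of syntactic degree larger than $k$; keeping the bound intact requires expanding the sums in a fixed balanced order and invoking A6 only on subterms whose homogeneous components have already been extracted. Once this is set up cleanly, everything else is routine induction using the commutation lemma and Proposition~\ref{prop:EF}(ii) to pass from a proof with few lines to a proof of small total size.
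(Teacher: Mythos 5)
Your proposal matches the paper's own proof essentially step for step: the commutation lemma is exactly the paper's Lemma~\ref{lem:bounded degree}, the decomposition $F=\sum_{k=0}^d F^{(k)}$ is Lemma~\ref{lem:bounded degree2}, part~(ii) is derived from part~(i) via that decomposition, and part~(i) proceeds by induction on proof length, handling axioms (with A5/A6 as the nontrivial cases via double-sum rearrangement) and the rules R1--R4 through the commutation lemma. Your concluding worry about intermediate syntactic degree during the convolution rearrangement is a real subtlety, and the paper implicitly resolves it the same way you propose (keeping the balanced $\sum$-notation and extracting homogeneous components before applying distributivity); your appeal to Proposition~\ref{prop:EF}(ii) is an alternative to the paper's direct size bookkeeping but is not a substantive departure.
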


\subsection{Circuits and proofs with division}\label{results: divison}
We denote by $\F(X)$ the field of formal rational functions in the variables $X$ over the field $\F$.
It is convenient to extend the notion of a circuit so that it computes rational functions in $\F(X)$. This is done in the following way:  \emph{a circuit with division} $ F $ is a circuit which may contain an additional type of gate with fan-in $1$, called an \emph{inverse} or a \emph{division} gate, denoted $(\cd)^{-1}$. If a node $v$ computes the rational function $f$, then $v^{-1}$ computes the rational function $1/f$. Moreover, we require that for every division node $v^{-1}$ in $F$, $v$ does not compute the zero rational function. If no division gate computes the zero rational function we say that $ F $ \emph{is defined}, and otherwise, we say that $ F $ is \emph{undefined}. One should note, for instance, that the circuit  $(x^{2}+x)^{-1}$ over $ GF(2) $ is defined, since $x^{2}+x$ is \emph{not} the zero rational function (although it vanishes as a function over $ GF(2)$).

We define the system $\PI(\F)$, operating with equations $F=G$ for $F$ and $G$ circuits with division computing rational functions in $ \F(X)$.  First, we extend the axioms of $\PC(\F)$ to apply to circuits with division. \new{Second, we add}
 the following \iddosml{new axiom to the axioms of $\PC(\F)$}:
\[
        {\rm D}\mez\mez F\cdot F^{-1}= 1 \,,
~~\mbox{provided that $F^{-1}$ is defined.}
\]

\begin{remark}
The system $\PI(\F)$ polynomially simulates the rule \mardel{I think the rule require that \(F^{-1},G^{-1}\) are defined.} 
\[
 \frac{F=G}{F^{-1}=G^{-1}}\,.
\]
Moreover, the identities $(F^{-1})^{-1}=F $ and $  (F\cdot G)^{-1}=G^{-1}\cdot F^{-1}$ have linear size proofs in $\PI(\F)$.
\end{remark}
As before, we sometimes suppress the explicit dependence on the field in $ \PI(\F) $ \iddosml{whenever} the relevant statement is field independent.
\QuadSpace

Strassen \cite{Str73} showed that division gates can be eliminated from arithmetic circuits computing polynomials over large enough fields, with only a polynomial increase in size. We will show the proof-theoretic analog of Strassen's result over \emph{arbitrary} fields, namely that $\PC(\F)$ polynomially simulates $\PI(\F)$ for any field $ \F$, \iddosml{in the following sense}:

\begin{theorem}\label{thm: main divisions}
Let $ \F $ be any field and assume that $F $ and $ G$ are circuits without division gates such that $\deg F, \deg G\leq d$. Suppose that $F=G$ has a $\PI(\F)$ proof of size $s$. Then $F=G$ has a $\PC(\F)$ proof of size $s\cdot\poly(d)$.
\setcounter{thm1}{\thetheorem}
\end{theorem}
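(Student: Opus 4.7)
The plan is to formalize Strassen's division-elimination technique inside $\PC$, extending it to arbitrary (including small) fields via an algebraic extension. First, choose a degree-$k$ extension $\F' = \F[\alpha]/(p(\alpha))$ with $k = O(\log(sd))$ large enough that, by a Schwartz--Zippel argument, there is a point $\vec{a}\in(\F')^{n}$ at which every denominator appearing in the given $\PI(\F)$-proof is nonzero (all relevant rational subexpressions have degree at most $s$, and by hypothesis each such denominator is a nonzero rational function). One carries out the translation over $\F'$ and then simulates $\F'$-arithmetic inside $\F$ by representing each element of $\F'$ as a $k$-tuple of $\F$-elements, at overhead $\poly(k)=\poly(\log(sd))$.

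Next, for each circuit $H$ occurring in the input proof, build division-free $\F'$-circuits $H_0, H_1, \ldots, H_d$ with $H_i$ intended to compute the degree-$i$ homogeneous component, in new variables $\vec{y}=\vec{x}-\vec{a}$, of the power-series expansion of $\widehat H$ around $\vec{a}$. This is done by induction on the structure of $H$: leaves and $\cplus,\ctimes$-gates are handled by the obvious formulas, while for $H=P^{-1}$, setting $c:=\widehat P(\vec{a})\neq 0$ and $u:=1-c^{-1}P(\vec{y}+\vec{a})$ (a polynomial vanishing at the origin), one takes the truncated geometric series $c^{-1}(1+u+u^{2}+\cdots+u^{d})$ and groups by homogeneous degree. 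Balancing the construction guarantees that all $H_0,\ldots,H_d$ together have size $|H|\cdot\poly(d)$.

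Now translate the $\PI$-proof line by line: each equation $A=B$ becomes the block $A_0=B_0, A_1=B_1, \ldots, A_d=B_d$, and each rule/axiom application is simulated by a $\PC(\F')$-derivation of size $\poly(d)$, relying on Proposition \ref{prop: main hom} to handle homogeneous components while keeping syntactic degrees bounded. The nontrivial step is axiom D, $P\cdot P^{-1}=1$: one must verify in $\PC(\F')$ the identities $\sum_{j+k=i} P_j\cdot(P^{-1})_k = [1]_i$ for $i=0,\ldots,d$, which reduce to checking that the degree-$\le d$ part of $(1-u)(1+u+\cdots+u^{d})$ equals $1$, a telescoping identity admitting a $\poly(d)$-size $\PC$-proof. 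Finally, because $F$ and $G$ are division-free with syntactic degree $\le d$, Proposition \ref{prop: main hom} gives polynomial-size $\PC$-proofs of $F=\sum_{i}F_{i}(\vec{x}-\vec{a})$ and similarly for $G$, and combining with the derived equations $F_i=G_i$ assembles the desired $\PC(\F)$-proof of $F=G$.

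The main obstacle I expect is maintaining the bound $s\cdot\poly(d)$ rather than, say, $\poly(s,d)$ or $s^{O(1)}\cdot\poly(d)$: the truncated-power-series construction must share subcircuits across the $d+1$ homogeneous slots and across successive proof lines in a way that is linear in $|H|$ with only a $\poly(d)$ multiplicative overhead. A secondary technical point is the small-field simulation, where one has to verify that replacing $\F'$-operations by their $k$-tuple representations over $\F$ preserves proof structure without multiplying sizes by more than $\poly(\log d)$; this is routine but requires checking that each axiom and rule translates uniformly.
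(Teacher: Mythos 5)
Your overall strategy is the same as the paper's: the paper proves this as a composition of Proposition \ref{prop:division} (Strassen's truncated power-series elimination of divisions, valid over a field of size $>2^{\Omega(s)}$) with Theorem \ref{main: large field} (simulating $GF(p^{n})$-arithmetic inside $GF(p)$ by representing extension-field elements as $n\times n$ matrices). You independently rediscovered both ingredients: expansion of each $H$ into homogeneous slices $H_{0},\dots,H_{d}$ around a nonvanishing point, a Schwartz--Zippel choice of that point in an extension field, handling of axiom D via the telescoping identity $(1-u)(1+u+\cdots+u^{d})=1-u^{d+1}$, and a representation of extension-field arithmetic over the base field.

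There is, however, a concrete error in your degree bound that propagates into the size claim. You assert that ``all relevant rational subexpressions have degree at most $s$'' and therefore take the extension degree to be $k=O(\log(sd))$. This is false: a circuit of size $s$ can compute a polynomial (and hence a denominator) of degree up to $2^{O(s)}$, e.g.\ by iterated squaring. The paper's proof of Proposition \ref{prop:division} correctly observes that the denominators $H\in\mathcal{C}$ satisfy $\deg H\le 2^{\Omega(s)}$, which is why it requires $\lvert\F\rvert>2^{\Omega(s)}$; hence the extension must have degree $O(s)$, not $O(\log(sd))$. With the corrected extension degree, the overhead from Theorem \ref{main: large field} is $\poly(s)$ rather than $\poly(\log(sd))$, so the final bound you can actually justify is $\poly(s,d)$, not $s\cdot\poly(\log(sd))\cdot\poly(d)$. (You yourself flag the bound as the main obstacle; the diagnosis is that the denominator-degree claim is where it breaks.) Note also that before choosing the nonvanishing point you need to normalize the $\PI$-proof to simple divisions via the $\Num/\Den$ construction of Lemma \ref{lem:Den}, which you leave implicit; this is harmless but should be stated, since without it the power-series translation of a nested inverse is not well defined.
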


A corollary of Theorem \ref{thm: main divisions} is that $\PC(\F)$ polynomially simulates the rule
\[\frac{F\cd G=0}{F=0}\,, \mez \hbox{ if }\,\,\, \wh{G}\not=0\,\]
provided the syntactic degree of $G$ is polynomially bounded.

To prove Theorem \ref{thm: main divisions}, we first assume that the underlying field $\F$ has an exponential size. Under this assumption, we cannot eliminate division gates in $GF(2)$ which is, from the Boolean proof complexity viewpoint, the most interesting field. To deal with small fields and specifically $ GF(2) $ we have to show how to simulate large fields in small ones, as we explain in what follows.

\para{Simulating large fields in small fields.}
The idea behind simulating large fields in small ones is to treat the elements of \new{$GF(p^{n})$ as $n\times n$ matrices over $GF(p)$}. This enables one to simulate computations and proofs over $GF(p^{n})$ by those over $GF(p)$. We  prove the following:

\begin{theorem}\label{main: large field}
Let $p$ be a prime power and $n$ a natural number and let $F,G$ be circuits over $GF(p)$. Assume that $F=G$ has a $\PC(GF(p^{n}))$ proof of size $s$. Then $F=G$ has a $\PC(GF(p))$ proof of size $s\cdot \poly(n)$.
\end{theorem}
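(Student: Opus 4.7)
The plan is to simulate $GF(p^{n})$-arithmetic inside $GF(p)$-arithmetic via the regular representation of $GF(p^{n})$ as a commutative $GF(p)$-subalgebra of $M_{n}(GF(p))$. Fix an irreducible polynomial $q(t)\in GF(p)[t]$ of degree $n$ so that $GF(p^{n})\cong GF(p)[t]/(q(t))$ with basis $1,t,\dots,t^{n-1}$. Every $a\in GF(p^{n})$ then has a unique coefficient vector $(a_{0},\dots,a_{n-1})\in GF(p)^{n}$, and multiplication in $GF(p^{n})$ is encoded by the structure tensor $T\in GF(p)^{n\times n\times n}$ defined by $t^{i}t^{j}\equiv\sum_{k}T_{ijk}\,t^{k}\pmod{q(t)}$. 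Commutativity of $GF(p^{n})$ becomes the symmetry $T_{ijk}=T_{jik}$, associativity becomes an analogous tensor identity, and the unit is captured by $T_{0jk}=\delta_{jk}$.

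Next I define a translation $\langle\cdot\rangle$ that assigns to each circuit $F$ over $GF(p^{n})$ a tuple $\langle F\rangle=(\langle F\rangle_{0},\dots,\langle F\rangle_{n-1})$ of circuits over $GF(p)$, recursively: a variable $x$ maps to $(x,0,\dots,0)$; a constant $a\in GF(p^{n})$ maps to its coefficient vector in $GF(p)^{n}$; $\langle G\cplus H\rangle_{k}:=\langle G\rangle_{k}+\langle H\rangle_{k}$; and $\langle G\ctimes H\rangle_{k}:=\sum_{i,j}T_{ijk}\cd\langle G\rangle_{i}\cd\langle H\rangle_{j}$. With subcircuit sharing across the $n$ components the total size of $\langle F\rangle$ is $|F|\cd\poly(n)$, and by construction $\wh{F}=\sum_{k}\wh{\langle F\rangle_{k}}\cd t^{k}$ in $GF(p^{n})[X]$. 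Since the input $F,G$ are already circuits over $GF(p)$, a short induction on circuit structure yields $\PC(GF(p))$ proofs of $\langle F\rangle_{0}=F$, $\langle G\rangle_{0}=G$, and $\langle F\rangle_{k}=0=\langle G\rangle_{k}$ for $k>0$. I then translate the given $\PC(GF(p^{n}))$ proof $F_{1}=G_{1},\dots,F_{m}=G_{m}$ line by line, each line $F_{i}=G_{i}$ becoming the $n$ equations $\langle F_{i}\rangle_{k}=\langle G_{i}\rangle_{k}$, and the task is to re-derive each step with $\poly(n)$ overhead.

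Most of the work is routine: rules R1--R4 translate componentwise; axioms A1--A3, A7, C1 translate componentwise to their $\PC(GF(p))$ counterparts; A8, A9 collapse using $T_{0jk}=\delta_{jk}$; and an instance of A10 over $GF(p^{n})$, say $a=bc$, becomes the concrete $GF(p)$-identity $a_{k}=\sum_{i,j}T_{ijk}b_{i}c_{j}$, derivable by $O(n^{2})$ A10-applications in $\PC(GF(p))$. The main obstacle is to handle the non-trivial ring axioms whose translations are genuine tensor identities rather than direct axiom instances --- chiefly commutativity A4, along with associativity A5, distributivity A6, and the product axiom C2. For A4 the translated equation is
\[
\sum_{i,j}T_{ijk}\,\langle F\rangle_{i}\langle G\rangle_{j}\;=\;\sum_{i,j}T_{ijk}\,\langle G\rangle_{i}\langle F\rangle_{j}.
\]
My strategy is to prove this via the schematic identity $\sum_{i,j}T_{ijk}Y_{i}Z_{j}=\sum_{i,j}T_{ijk}Z_{i}Y_{j}$ in fresh variables $Y_{i},Z_{j}$, which itself admits a $\poly(n)$-size $\PC(GF(p))$ derivation: use A4 of $\PC(GF(p))$ to swap each factor pair $Y_{i}Z_{j}\leftrightarrow Z_{j}Y_{i}$, the constant identity $T_{ijk}=T_{jik}$ (provable by A10 in $GF(p)$) to rename tensor coefficients, and A2/A3 to reorder the $n^{2}$ summands. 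Closure of $\PC$ under substitution then instantiates this schematic identity with $Y_{i}\mapsto\langle F\rangle_{i}$, $Z_{j}\mapsto\langle G\rangle_{j}$ at a further cost $|F|+|G|$. The axioms A5, A6, and C2 are handled analogously via their own schematic tensor identities. Summing the per-axiom $\poly(n)$ overhead over all $m$ lines of the original proof yields a $\PC(GF(p))$ proof of $F=G$ of size $s\cd\poly(n)$, as required.
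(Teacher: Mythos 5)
Your proof is correct and follows the same high-level strategy as the paper: reduce a $\PC(GF(p^n))$ proof to a $\PC(GF(p))$ proof by representing elements of $GF(p^n)$ via an $n$-dimensional model over $GF(p)$ and translating the proof line by line, with soundness secured by the uniqueness of the coordinate expansion. The technical realization differs meaningfully, however. The paper encodes $a\in GF(p^n)$ as the $n\times n$ matrix $\theta(a)$ from the regular representation, so each translated circuit $\overline{F}$ carries $n^2$ components, and the crucial obstacle --- the commutativity $\overline{F}\cdot\overline{G}=\overline{G}\cdot\overline{F}$, which is \emph{not} a general fact about matrices --- is handled by a dedicated lemma (Lemma~\ref{lem: large field}) proved by induction on circuit size, relying on the matrix-algebra identities of Lemma~\ref{lem:matrices}. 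You instead keep only the $n$ coordinates of $a$ in a fixed $GF(p)$-basis of $GF(p^n)$ and express multiplication through the structure tensor $T_{ijk}$; commutativity then reduces to the symmetry $T_{ijk}=T_{jik}$, a finite family of constant identities derivable by A10, and the translated axiom A4 follows from a single schematic identity in $2n$ fresh variables plus closure under substitution, with no induction over the circuit. Associativity, distributivity, and C2 are disposed of by the same pattern via their respective tensor identities. The two encodings incur the same $\poly(n)$ per-gate cost, but your coordinate-plus-structure-tensor version localizes all the nontrivial field theory into finitely many concrete constant identities, whereas the paper's matrix version trades an extra structural induction for the ability to reuse matrix machinery developed elsewhere. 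Both are sound; yours is arguably the more self-contained route.
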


\subsection{The determinant as a rational function and as a polynomial}
To prove the main theorem (Theorem \ref{thm: main det}) one needs to construct a circuit (and a formula) computing the determinant polynomial which can be used efficiently inside arithmetic proofs. We first compute the determinant as a \emph{rational function}, using a circuit with divisions denoted $ \Det(X) $,  and show that $ \PI$ admits short proofs of the properties of $ \Det(X) $.
This is achieved by  defining $\Det(X)$ in terms of the matrix inverse $X^{-1}$ and inferring properties of $\Det$ from the identities $X\cdot X^{-1}=X^{-1}X=I$, which are shown to have polynomial-size $\PI$ proofs. The argument is basically a Gaussian elimination.

However, we cannot yet conclude Theorem \ref{thm: main det} which speaks about (division-free) $ \PC $ proofs (it is worth mentioning that we also cannot yet conclude the short \NCT-Frege proofs for the determinant identities, because $\PI $ proofs do not immediately correspond  to propositional Frege proofs). Theorem \ref{thm: main divisions} cannot be directly applied because it allows to eliminate division gates in $\PI $ proofs only if the equations proved are themselves division-free.
We therefore proceed to construct a division-free circuit $ \det(X) $, computing the determinant as a \emph{polynomial}. Assuming we can prove efficiently in $ \PI $ that $ \det(X)=\Det(X) $, we are done, since we can now eliminate division gates from $ \PI $ proofs of division-free equations, using Theorem \ref{thm: main divisions}. To this end, we define the $ \det(X) $ polynomial as the $n$th term of the Taylor expansion of $\Det (I+zX)$ at $ z=0 $. This enables us to demonstrate short proofs of $ \det(X)=\Det(X) $ and conclude the argument.
\HalfSpace

\subsection{Applications}
Equipped with feasible proofs of the determinant identities, short proofs of several related identities follow. Cofactor expansion of the determinant and a version of Cayley-Hamilton theorem will be given in Section \ref{sec:applications}. Another example is the formula completeness of the determinant.
In \cite{Val79:ComplClass}, Valiant showed that every formula of size $s$ can be written as a projection of a determinant of a matrix of a linear dimension. We can conclude that this holds feasibly already  in $\PC$:

\begin{proposition}\label{prop:Valiant} Let $F$ be a formula of size $s$. Then there exists a matrix $M$ of dimension $2s\times 2s$ whose entries are variables or elements of $\F$ such that the identity
\[F=\det(M)\]
has a polynomial-size $ O(\log^2 s)$-depth $\PC(\F)$ proof (and hence also a quasipolynomial-size $\PF(\F)$ proof), \new{where $\det $ is the circuit (resp. the formula) from  Theorem \ref{thm: main det}}.
\end{proposition}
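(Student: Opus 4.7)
The plan is to formalize Valiant's original \cite{Val79:ComplClass} construction of a matrix $M$ with $\det(M) = F$ inside $\PC(\F)$. I would first recursively build, from the formula $F$, a weighted DAG $D_F$ on at most $2s$ vertices with a distinguished source and sink such that the sum of weighted source-to-sink paths equals $F$: leaves correspond to a single weighted edge, $F = G\cdot H$ to series composition (identifying the sink of $D_G$ with the source of $D_H$), and $F = G+H$ to parallel composition (identifying sources and sinks). The matrix $M$ is then obtained from the weighted adjacency matrix $A_{D_F}$ by a standard modification (e.g.\ $M = I - A_{D_F}$ with the $(\textrm{sink},\textrm{source})$-entry set to $1$), so that $\det(M)$ computes exactly the source-to-sink path polynomial of $D_F$. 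This is arranged so that $M$ has dimension at most $2s\times 2s$ and entries in $\F\cup\{x_i\}\cup\{0,1\}$.

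Next I would prove $F = \det(M)$ in $\PC(\F)$ by induction on the formula structure. For a leaf, $\det(M)$ is the determinant of a small constant-size matrix and equals the leaf value by a direct cofactor expansion, which has constant-size $\PC$ proof. For $F = G\cdot H$, the matrix $M_{G\cdot H}$ factors, up to row/column permutations (handled by permutation matrices of determinant $\pm 1$), as a block combination of $M_G$ and $M_H$; the multiplicativity identity of Theorem \ref{thm: main det}(i) together with the induction hypothesis then gives $\det(M_{G\cdot H}) = \det(M_G)\cdot \det(M_H) = G\cdot H$. For $F = G+H$, one uses multilinearity of $\det$ in a distinguished row (which follows from cofactor expansion, available from Theorem \ref{thm: main det} as noted in Section \ref{sec:applications}): the parallel composition is set up so that one row of $M_{G+H}$ is the sum of the corresponding rows of appropriately padded copies of $M_G$ and $M_H$, yielding $\det(M_{G+H}) = \det(M_G) + \det(M_H)$.

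Each of the $O(s)$ inductive steps invokes Theorem \ref{thm: main det} on a subproblem of linear size, so the total $\PC$ proof has size $\poly(s)$; every equation in it has syntactic degree $O(s)$ (since $\deg F\leq s$ and $\deg \det(M) \leq 2s$). Applying the balancing Theorem \ref{thm: main balancing} then collapses the proof to depth $O(\log^2 s)$ at polynomial cost, giving the first conclusion. For the formula version, Theorem \ref{thm: main simulation} converts this into a $\PF$ proof of size $s^{O(\log s)}$.

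The main obstacle is the addition case: constructing $M_{G+H}$ so that (a) its dimension stays within $2s$, (b) its entries remain in the allowed set, and (c) one row (after suitable padding) is literally the sum of the corresponding rows of $M_G$ and $M_H$, so that multilinearity of $\det$ applies directly. This is essentially a matrix-surgery bookkeeping step; once it is set up correctly, the $\PC$ verification is mechanical given the determinant identities already provided by Theorem \ref{thm: main det} and its consequences.
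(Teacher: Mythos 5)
Your high-level strategy matches the paper's: build $M$ by simulating Valiant's construction, prove $F=\det(M)$ by induction on the formula, and discharge the multiplication and addition cases using the determinant identities and cofactor expansion. But the addition case as you describe it has a real gap. You propose to pad $M_G$ and $M_H$ so that the two padded matrices agree off a single distinguished row, and then invoke multilinearity of $\detc$ in that row. For that picture to yield $\det(M_{G+H})=\det(M_G)+\det(M_H)$, the common part of the two padded matrices would have to encode \emph{both} determinant computations simultaneously, with determinants preserved by the padding; simply inserting identity blocks in disjoint locations makes the padded copies differ in many rows, not one. The paper instead uses an explicitly intertwined block matrix of dimension $s_1+s_2+1$ (where $M_1,M_2$ have dimensions $s_1,s_2$), in which the first column of $M_2$ and the remaining columns appear in separated blocks together with a bordering row and column; the identity $\detc(M)=\detc(M_1)+\detc(M_2)$ is then verified by cofactor expansion (Proposition~\ref{prop:cofactor}), not by a one-shot row-multilinearity step. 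The extra $+1$ in the dimension is exactly absorbed by $|G{+}H|=|G|+|H|+1$, so the $2s\times 2s$ bound still holds after an inductive bookkeeping that you should make explicit.

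A second, smaller issue: you should not re-balance at the end with Theorem~\ref{thm: main balancing}. The target identity is $F=\det(M)$ with $\det$ the \emph{specific} balanced circuit $\detc$ from Theorem~\ref{thm: main det}; applying the balancing operator $[\cdot]$ would replace $\detc(M)$ by $[\detc(M)]$, a different circuit, and you would then owe a further proof that they are provably equal (as the paper has to do inside Proposition~\ref{prop: detc}). This detour is unnecessary: the building-block identities from Theorem~\ref{thm: main det} and Proposition~\ref{prop:cofactor} already come with polynomial-size $\PC$ proofs of depth $O(\log^2 n)$, and the induction over the $O(s)$ gates of $F$ only chains such proofs together without deepening the circuits that occur in them, so the whole proof is already depth $O(\log^2 s)$. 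The quasipolynomial $\PF$ bound then follows as you say, e.g.\ via Claim~1 in the proof of Theorem~\ref{thm:5-restated-is now flying}.
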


In this paper we are mainly interested in proofs with \emph{no} assumptions other than the axioms A1-A10.  Nevertheless, we can introduce the notion of a \emph{proof from assumptions} as follows: let $S$ be a set of equations. Then \emph{a $ \PC$ proof from the assumptions $ S $} is a proof that can use equations in $S$ as additional axioms (and similarly for $ \PF $ proofs from assumptions).
Proofs from assumptions are far less well-behaved than \new{standard arithmetic proofs}. For instance, neither Theorem \ref{thm: main simulation} nor Theorem \ref{thm: main divisions} hold for proofs from a general nonempty set $ S $ of assumptions.
We now give an important example of a proof from assumptions.

Given a pair of $ n\times n$ matrices $X,Y$, recall that the expressions $XY= I$ and  $YX=I$, are abbreviations for the list of $n^2$ equalities between the appropriate entries.
\iddosml{(We write $I_n$ to denote the $n\times n$ identity matrix.)}


\begin{proposition}\label{prop:XY}
Let $ \F $ be any field. The equations $YX=I_{n}$ have polynomial-size and $ O(\log^2 n) $-depth $\PC(\F)$ proofs from the equations $XY=I_{n}$.
In the case of $\PF(\F)$, the proof has a quasipolynomial-size.
\end{proposition}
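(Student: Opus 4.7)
The plan is to derive $YX=I_n$ from $XY=I_n$ by constructing the adjugate of $Y$ and invoking the polynomial cancellation rule that follows from Theorem \ref{thm: main divisions}. Let $A$ be the matrix whose $(i,j)$ entry is the signed cofactor $(-1)^{i+j}\det(Y_{ji})$, where $Y_{ji}$ is the $(n-1)\times (n-1)$ submatrix obtained by deleting the $j$th row and $i$th column of $Y$, and $\det$ is the circuit (resp.~formula) of Theorem \ref{thm: main det}. Since each cofactor is a determinant of an $(n-1)\times(n-1)$ matrix, $A$ has polynomial-size entries of depth $O(\log^{2} n)$ (resp.~quasipolynomial-size formulas).

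The first main step is to show that $Y\cdot A=\det(Y)\cdot I_{n}$ has a polynomial-size, $O(\log^{2} n)$-depth $\PC(\F)$ proof (resp.~an $n^{O(\log n)}$-size $\PF(\F)$ proof). This is the cofactor expansion identity: the diagonal entries are the standard row expansions of $\det(Y)$, while the off-diagonal entries compute the determinant of a matrix with two equal rows and thus vanish. Both facts are consequences of \eqref{eq:detdef1} and \eqref{eq:detdef2} together with basic multilinearity properties of $\det$, and are among the applications derived from Theorem \ref{thm: main det} (cf.~the discussion in Section \ref{sec:applications}). I would take this identity as a black-box.

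The second step is purely algebraic manipulation from the assumptions. Starting from $XY=I_{n}$, I apply R4 to multiply both sides on the right by $A$, obtaining $(XY)A=A$. Using associativity and substituting $YA=\det(Y)\cdot I_{n}$, this rewrites to $\det(Y)\cdot X=A$ (entry-wise). Multiplying on the left by $Y$ and again using $YA=\det(Y)\cdot I_{n}$ gives $\det(Y)\cdot(YX)=\det(Y)\cdot I_{n}$, i.e., for each $i,j\in[n]$, the equation $((YX)_{ij}-\delta_{ij})\cdot\det(Y)=0$. All of this consists of a polynomial number of applications of the ring axioms and R1--R4 to circuits (resp.~formulas) of depth $O(\log^{2} n)$; since each matrix multiplication adds only $O(\log n)$ to the depth, the bounds are preserved.

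The final step is to cancel the factor $\det(Y)$. The polynomial $\det(Y)$ is a nonzero element of $\F[Y]$ and has syntactic degree $n$, so the corollary of Theorem \ref{thm: main divisions}, which polynomially simulates the rule $\frac{F\cdot G=0}{F=0}$ when $\widehat{G}\neq 0$ and $\deg G$ is polynomially bounded, applies to each of the $n^{2}$ entries. This yields $(YX)_{ij}=\delta_{ij}$, i.e., $YX=I_{n}$. The main obstacle I anticipate is verifying that the cancellation corollary can be invoked inside a proof from assumptions: the simulation in Theorem \ref{thm: main divisions} is syntactic and extends a given derivation without touching its assumption instances, so it transfers, but I would make this explicit by noting that the derivation of $((YX)_{ij}-\delta_{ij})\cdot\det(Y)=0$ is an honest $\PC$ proof relative to the fixed set of equations $\{XY=I_{n}\}$, and the simulation operates on top of this derivation uniformly. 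For $\PF(\F)$, the same argument goes through with $\det$ replaced by the size-$n^{O(\log n)}$ formula of Theorem \ref{thm: main det}(ii), and with the $\PF$ analogue of the cancellation step (which follows from Theorem \ref{thm: main simulation} combined with the $\PC$ argument).
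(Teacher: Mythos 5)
Your derivation up through $\det(Y)\cdot YX = \det(Y)\cdot I_n$ is fine and essentially mirrors the paper's steps (the paper uses $\Adj(X)$ and arrives at $\det_c(X)\cdot YX=\det_c(X)\cdot I_n$, a symmetric variant). The gap is your final cancellation step. The corollary of Theorem \ref{thm: main divisions} that simulates $\frac{F\cdot G=0}{F=0}$ applies only to proofs \emph{without assumptions}: it requires $F\cdot G=0$ to be a true polynomial identity, so that it has a bona fide $\PI$/$\PC$ proof to which the division-elimination machinery can be applied. Here $((YX)_{ij}-\delta_{ij})\cdot\det(Y)$ is \emph{not} the zero polynomial of $\F[X,Y]$; it vanishes only modulo the ideal generated by the assumptions $XY=I_n$. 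The paper explicitly warns about this at the end of Section~\ref{sec:results}: neither Theorem~\ref{thm: main simulation} nor Theorem~\ref{thm: main divisions} holds for proofs from a nonempty assumption set. And the rule itself is unsound for arbitrary assumptions: from $\{xy=0\}$ one has a trivial proof of $x\cdot y=0$, but $x=0$ is not a consequence. So your claim that ``the simulation operates on top of this derivation uniformly'' does not hold; the simulation never sees a correct proof of a polynomial identity to operate on.

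The fix --- which is exactly what the paper's proof does --- is to avoid abstract cancellation and instead exhibit an explicit inverse derivable from the assumptions. From $XY=I_n$, the multiplicativity identity $\det_c(XY)=\det_c(X)\cdot\det_c(Y)$ (Proposition~\ref{prop: detc}) together with $\det_c(I_n)=1$ gives a $\PC$ derivation of $\det_c(X)\cdot\det_c(Y)=1$ from the assumptions. Then multiply both sides of your $\det_c(Y)\cdot YX = \det_c(Y)\cdot I_n$ by $\det_c(X)$ and use $\det_c(X)\det_c(Y)=1$ to erase the scalar, yielding $YX=I_n$. This keeps everything inside an honest $\PC$ (resp.~$\PF$) proof from assumptions and preserves the depth and size bounds, since it adds only a constant number of matrix/scalar multiplications by $\det_c$.
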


\para{Determinant identities in \NCT-Frege and Frege systems.}
When considering the field $\F$ to be $GF(2)$, there is a close connection between our proof systems and the standard propositional proof systems. Consider the propositional proof systems Frege (\new{$F$}),
 extended Frege ($EF$) and circuit Frege ($CF$). For the definitions of Frege and extended Frege see \cite{Kra95} and for the definition of circuit Frege see \cite{Jer04}, where it is also shown that $CF$ and $EF$ are polynomially equivalent.

For simplicity, we shall assume that $F$, $EF$  and $CF$ are all in the Boolean basis \new{ $+,\cdot, 0,1$  (addition and multiplication modulo $ 2 $, logical equivalence, and the two Boolean constants)\footnote{\new{Note that by Reckhow's result, as stated in \cite{Kra95}, the particular choice of basis is immaterial. We could also have $\equiv$  as a primitive.} }. Then every arithmetic circuit \emph{is automatically also a Boolean circuit}, and an equality like $G=H$ can be interpreted as the logical equivalence $G\equiv H$, written as $(G+H)+1$.
 Hence $\PF(GF(2))$ and $\PC(GF(2))$ can be considered as \emph{fragments} of $F$ and $CF$, respectively: the finite set of (schematic) axioms and rules of $\PF(GF(2))$ now serve as Frege axioms and rules, and similarly for $\PC(GF(2))$. Note that $x^{2}= x$ is a propositional tautology but not a polynomial identity, and hence $F$ and $CF$ are (expressively) stronger than their arithmetic counterparts.}
In fact, one can polynomially simulate the full $F$ or $CF$ systems by adding the following new axiom
\[G^2=G\]
to $\PF(GF(2))$ or  $\PC(GF(2))$, where $ G $ is any formula or  a circuit, respectively. \new{To see this, it is sufficient to show that the augmented systems are complete with respect to propositional tautologies: they prove $F=1$ whenever $F$ evaluates to $1$ on every $0,1$-input. }

This means that upper bounds in $\PF(GF(2))$ and $\PC(GF(2))$ are in fact upper bounds in $F$  and $CF$  (and hence also in $EF$), respectively.

In what follows $ XY=I_{n}$, and similarly $YX=I_{n}$, denote the conjunction of $n^2$ formulas of the form
$(x_{i,1}\cdot y_{1,j}+\dots +x_{i,n}\cdot y_{n,j})\equiv \delta_{ij}$, \new{where $+, \cdot$ are addition and multiplication modulo $2$, respectively, $\equiv$ is the logical equivalence,  and $\delta_{ij}\in \{0,1\}$ is given by $\delta_{ij}=1$ iff $i=j$.}
We have the following:

\begin{theorem}\label{thm:INV in NC2-Frege}\mbox{}
\begin{enumerate}
\item The properties of the determinant  as in Theorem \ref{thm: main det} (interpreted as Boolean tautologies \new{over $GF(2)$}) have polynomial-size circuit Frege proofs,  with every circuit of depth at most $O(\log^{2}n)$. In the case of Frege, the proofs have quasipolynomial-size.
        \label{it:det in CF}
\item The implication $(XY=I_{n})\rightarrow (YX=I_{n})$ has a polynomial-size circuit Frege \new{proof}, with every circuit of depth at most $O(\log^{2}n)$, and a quasipolynomial-size Frege proof.
        \label{it:Inv in CF}
\end{enumerate}
\end{theorem}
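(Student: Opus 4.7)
The plan is to reduce both parts to already-established results in the paper---Theorem \ref{thm: main det} for part \ref{it:det in CF} and Proposition \ref{prop:XY} for part \ref{it:Inv in CF}---and then transfer the arithmetic proofs to circuit Frege and Frege proofs via the embedding described just before the theorem statement.

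For part \ref{it:det in CF}, I would use that over $GF(2)$ every arithmetic equation $F=G$ can be read as the Boolean formula $(F+G)+1$, and that each axiom A1--A10, C1--C2 of $\PC(GF(2))$ together with each inference rule R1--R4 is a sound Frege scheme in the basis $\{+,\cdot,0,1\}$; the same holds for $\PF(GF(2))$. Consequently a $\PC(GF(2))$ proof is literally a circuit Frege proof of the same size and depth, and a $\PF(GF(2))$ proof is literally a Frege proof of the same size. Applying Theorem \ref{thm: main det} with $\F=GF(2)$ and the circuit (respectively formula) $\det$ it supplies gives immediately polynomial-size circuit Frege proofs of the determinant identities with every circuit of depth $O(\log^2 n)$, and quasipolynomial-size Frege proofs.

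For part \ref{it:Inv in CF}, Proposition \ref{prop:XY} yields a $\PC(GF(2))$ proof from the $n^2$ assumptions $(XY)_{ij}=\delta_{ij}$ of the $n^2$ conclusions $(YX)_{ij}=\delta_{ij}$, of polynomial size and depth $O(\log^2 n)$, together with a quasipolynomial-size $\PF(GF(2))$ proof from those same assumptions. The remaining task is to discharge the assumptions and obtain a proof of the implication $(XY=I_n)\rightarrow(YX=I_n)$. I would do this by the standard deduction theorem for (circuit) Frege, which introduces only polynomial overhead in size. To preserve the depth bound, I would first form a balanced conjunction $A$ of the $n^2$ assumption equations, which has depth $O(\log n)$, and then transform each proof line $L$ into $A\rightarrow L$; the resulting circuits have depth at most $O(\log^2 n)+O(\log n)+O(1)=O(\log^2 n)$. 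The Frege bound follows analogously, without any depth constraint to maintain.

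The main technical obstacle is the depth bookkeeping in the deduction-theorem transformation. The standard argument rewrites each proof line $L$ into $A\rightarrow L$ and re-proves the inference steps under this implication; for the axioms and for R1--R4 this introduces only a few extra gates, whose depth is absorbed by the $O(\log n)$ depth of $A$, so each transformed circuit retains depth $O(\log^2 n)$. Once this is checked, both parts follow immediately from the corresponding arithmetic results via the Boolean embedding, with the circuit (resp.\ formula) $\det$ from Theorem \ref{thm: main det} being the one used inside Proposition \ref{prop:XY} as well, so no additional construction is required.
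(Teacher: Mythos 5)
Your proposal is correct and takes essentially the same approach as the paper, which simply cites Theorem \ref{thm: main det} and Proposition \ref{prop:XY} together with the observation that $\PC(GF(2))$ and $\PF(GF(2))$ proofs are literally circuit Frege and Frege proofs in the $\{+,\cdot,0,1\}$ basis. The one detail you spell out that the paper leaves implicit is the deduction-theorem step needed to turn the proof-from-assumptions of Proposition \ref{prop:XY} into a proof of the implication $(XY=I_n)\rightarrow(YX=I_n)$, and your depth accounting (balanced conjunction $A$ of depth $O(\log n)$, then prefix each line with $A\rightarrow\cdot$) is the right way to see that this preserves the $O(\log^2 n)$ bound.
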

\begin{proof}
Part  \ref{it:det in CF} is a direct consequence of Theorem \ref{thm: main det} and \ref{it:Inv in CF} of Proposition \ref{prop:XY}, both using the fact that proofs in $ \PC(GF(2))$ and $\PF(GF(2))$ can be interpreted as proofs in circuit Frege and Frege, respectively.
\end{proof}

A family of polynomial-size CF proofs in which every proof-line $ G $ is of depth $ O(\log^2 |G|) $, is also called an \NCT-Frege proof. Hence, Theorem \ref{thm:INV in NC2-Frege} states that \NCT-Frege has polynomial-size proofs of the propositional tautologies $(XY=I)\rightarrow (YX=I)$.

Theorem \ref{thm:INV in NC2-Frege} thus settles an important open problem in proof complexity and feasible mathematics, namely, whether basic properties of the determinant like $ \det(A)\cd\det(B) = \det(AB) $ and the cofactor expansion (see Proposition \ref{prop:cofactor}), as well as the hard matrix identities, have polynomial-size proofs in a proof system which corresponds to the circuit class \NCT.

\begin{remark}
We believe that Theorem \ref{thm:INV in NC2-Frege} can be extended to any finite field or the field of rationals (after encoding field elements as Boolean strings). For finite fields, this is rather straightforward. In the rational case, one would have to show that the $\PC(\Q)$ proofs constructed in Theorem \ref{thm: main det} involve only constants whose Boolean representation is polynomial.
\end{remark}
\FullSpace

%
%
%

\section{Homogenization and bounding the degree in $\PC(\ring)$ proofs}\label{sec:bounding}
In this section we wish to construct the circuits $F^{(k)}$ computing the $k$-homogeneous part of $F$ and prove Proposition \ref{prop: main hom}.
First, let us say that a circuit $F$ is \emph{non-redundant}, if either $F$ is the constant $0$, or $F$ does not contain the constant $0$ at all.
Any circuit $F$ can be transformed to a non-redundant circuit $F^{\sharp}$ as follows: successively replace all nodes of the form $u+0$, $0+u$ by $u$ and $u\cdot 0$, $0\cdot u$ by $0$, until no such replacement is possible.

Let $k$ be a natural number. We define $F^{(k)}$ as follows. For every node $u$ in $F$, introduce $k+1$ new nodes $u^{(0)},\dots, u^{(k)}$.
\begin{enumerate}
\item Assume $u$ is a leaf. Then, $u^{(0)}:=u$, in case $ u $ is a field element, and $u^{(1)}:=u$ in case $ u$ is a variable, and $u^{(i)}:=0$ otherwise.
\item If $u= u_1+u_2$, let $u^{(i)}:= u_1^{(i)}+u_2^{(i)}$, for every $i=0,\dots, k$.
\item If $u= u_1\cdot u_2$, let $u^{(i)}:= \sum_{j=0}^iu_1^{(j)}\cdot u_2^{(i-j)}$.
\end{enumerate}
Finally, we define $F^{(k)}$ as the circuit $G^{\sharp}$, where $G$ is the circuit with the output node $w^{(k)}$ and $w$ is the output node of $F$.

Note the following:
\begin{enumerate}[(1)]
\item $F^{(k)}$ has size
 $O(s(k+1)^2))$, where $s$ is the size of $F$.
 \item $F^{(k)}$ is a syntactically homogeneous non-redundant circuit. Its syntactic degree is either $k$, or \new{$F$} is the constant $0$.
\end{enumerate}

\begin{notation}\label{rem:big sum}
We allow circuits and formulas to use only sum gates with fan-in two. An expression $\sum_{i=1}^k x_i$ is an abbreviation for a  formula of size $O(k)$ and depth $O(\log k)$ with binary sum gates. For example, define $\sum_{i=1}^k x_i:= \sum_{i=1}^{\lfloor k/2\rfloor} x_i+ \sum_{i=\lceil k/2\rceil}^k x_i\,.$ One can see that basic identities such as
\[\sum_{i=1}^k x_i= \sum_{i=1}^m x_i+ \sum_{i=m+1}^k x_i\,, \,\,\hbox{ or }\,\, y\cdot \sum_{i=1}^k x_i=\sum_{i=1}^k yx_i \]
have $\PF$ proofs of size $O(k^{2})$ and depth $O(\log k)$.
\end{notation}

 \begin{lemma}\label{lem:bounded degree}
 Let $F_1, F_2$ be circuits of size $\leq s$ and $k$ a natural number. The following have proofs of size $s\cdot\poly(k)$ and syntactic degree $\leq k$.
\begin{enumerate}
\ignore{\item $F^{(0)}=F$, if $F\in\ring$.  $F^{(1)}=F$, if $F$ is a variable.  $F^{(k)}=0$, if $F$ is a ring element or a variable and $k>1$.}
\item  $(F_1\cplus F_2)^{(k)}= F_1^{(k)}+F_2^{(k)}$,
\item   $(F_1\ctimes F_2)^{(k)}= \sum_{i=0}^kF_1^{(i)}\cdot F_2^{(k-i)}$.
\end{enumerate}
 \end{lemma}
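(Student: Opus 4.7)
The plan is to exploit the inductive definition of $F^{(k)}$. For $F=F_{1}\cplus F_{2}$ the circuit $F^{(k)}$ is built from $F_{1}^{(k)}$ and $F_{2}^{(k)}$ in a ``nearly identical'' way to the right-hand side of part~(1), and similarly for $F=F_{1}\ctimes F_{2}$ and part~(2). The only discrepancies between the two sides are (a) the final application of the $\sharp$-simplification, and (b) the fact that on the left we reuse subcircuits via shared nodes $u^{(i)}$, whereas the abbreviations $+$ and $\cdot$ on the right stand for disjoint copies.

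As a preliminary step I would establish an auxiliary claim: for every circuit $H$ of size $t$, the equation $H = H^{\sharp}$ admits a $\PC$ proof of size $\poly(t)$ and syntactic degree at most $\deg(H)$. This follows by a straightforward induction on the reduction steps defining $H^{\sharp}$: each step is an application of A7 ($F+0=F$) or A8 ($F\cdot 0=0$), possibly combined with A2 (commutativity), and the congruence rules R3, R4 propagate each local rewrite through the rest of the circuit. The degree bound is immediate since $\sharp$ only removes zeros, never introduces new nodes.

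To establish part~(1), write the output gate of $F_{1}\cplus F_{2}$ as $w=w_{1}+w_{2}$. By the definition of $(\cdot)^{(k)}$, $(F_{1}\cplus F_{2})^{(k)}$ is the $\sharp$-simplification of a circuit whose output is $w_{1}^{(k)}+w_{2}^{(k)}$, and the subcircuit rooted at each $w_{i}^{(k)}$ is precisely the pre-$\sharp$ form of $F_{i}^{(k)}$. Chaining three short $\PC$ derivations then yields the identity: first invoke the auxiliary claim to move from $(F_{1}\cplus F_{2})^{(k)}$ to its pre-$\sharp$ version; next use axiom A1$'$ from Remark~\ref{def:similar} to replace the internally shared pre-$\sharp$ copies of each $F_{i}^{(k)}$ by disjoint copies, matching the meaning of the abbreviation $+$ on the right; finally apply the auxiliary claim twice more to rewrite each pre-$\sharp$ version back to $F_{i}^{(k)}$. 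Part~(2) is analogous but with the explicit summation $\sum_{j=0}^{k}$ on the outside, interpreted via Notation~\ref{rem:big sum}; each of the $k+1$ products contributes a polynomial-size pre-$\sharp$ reshuffling step, giving total size $s\cdot\poly(k)$.

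The main obstacle I anticipate is the syntactic-degree bound. The statement requires every intermediate equation to have syntactic degree at most $k$, whereas a careless proof of $H=H^{\sharp}$ via repeated R3, R4 applications could produce subterms of higher degree if one rewrites from the root downward before clearing the zeros below. I would sidestep this by always working with the raw (pre-$\sharp$) circuits, whose internal nodes $u^{(i)}$ satisfy $\deg(u^{(i)})\leq i\leq k$ by construction, so that every equation derived uses only subcircuits of degree $\leq k$. A secondary, largely bookkeeping, issue is the degenerate case where some $F_{i}^{(j)}$ collapses to the constant $0$: then $\sharp$ propagates through the top gate (by A7 or A8), and this branch needs separate handling using those axioms to ``fold'' the zero into the surrounding sum or product.
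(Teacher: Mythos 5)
Your proposal is correct and follows essentially the same route as the paper: establish the auxiliary fact $H=H^{\sharp}$, observe that by the definition of $(\cdot)^{(k)}$ the circuit $(F_1\cplus F_2)^{(k)}$ is (up to $\sharp$) of the form $F_1^{(k)}\cplus F_2^{(k)}$, and then pass from $\cplus$ to $+$---the paper invokes axiom C1 directly where you invoke the equivalent A1$'$ from Remark~\ref{def:similar}. Your extra remarks on maintaining the degree bound and on the degenerate $0$ case are sound precautions, but the underlying argument is the one in the paper.
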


\begin{proof} It is easy to see that for any circuit $H$ of size $s$, $H=H^{\sharp}$ has a proof of size $O(s)$.
This, and the definition of $F^{(k)}$, gives
$(F_1\cplus F_2)^{(k)}= F_1^{(k)}\cplus F_2^{(k)}$. Hence $(F_1\cplus F_2)^{(k)}=F_1^{(k)}+ F_2^{(k)}$ by axiom C1. Since $F_1^{(k)}, F_2^{(k)}, (F_1\cplus F_2)^{(k)}$ \iddosml{all have circuit} size $O(s(k+1))^2$, we obtain (i). Part (ii) is similar.
\end{proof}

\begin{lemma}\label{lem:bounded degree2}
If $F$ is a circuit with syntactic degree $\leq d$ and size $
s$ then \[F= \sum_{k=0}^d F^{(k)}\]
has a $\PC(\F)$ proof of syntactic degree $\leq d$ and size $s\cdot\poly(d)$.
\end{lemma}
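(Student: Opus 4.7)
The plan is to proceed by induction on the DAG structure of $F$: for every node $u$ of $F$ we produce a $\PC$ proof of $F_u = \sum_{k=0}^{d} F_u^{(k)}$ of syntactic degree $\leq \deg(F_u)\le d$, whose size is $|F_u|\cdot\poly(d)$. At the output node, this gives the claimed proof. The base case is immediate: if $u$ is a leaf labeled by a field element then $F_u^{(0)}=F_u$ and all other $F_u^{(k)}$ are the constant $0$, so the desired identity follows from axioms A1 and A7 together with Notation~\ref{rem:big sum}; the case of a variable leaf is analogous with $F_u^{(1)}=F_u$.

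For the inductive step, suppose $F_u = F_1\star F_2$ for $\star\in\{\cplus,\ctimes\}$, with the inductive hypothesis giving proofs of $F_i = \sum_{k=0}^d F_i^{(k)}$ for $i=1,2$. When $\star=\cplus$, I would apply rule R3 to these two equations, rearrange the right-hand side by commutativity and associativity (A2, A3) into $\sum_{k=0}^d (F_1^{(k)} + F_2^{(k)})$, then invoke Lemma~\ref{lem:bounded degree}(i) termwise to rewrite each summand as $(F_1\cplus F_2)^{(k)}$, and finish with axiom C1 to obtain $F_1\cplus F_2 = \sum_{k=0}^d (F_1\cplus F_2)^{(k)}$. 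When $\star=\ctimes$, I would apply R4 to multiply the two inductive identities, use repeated distributivity (A6) to expand the product into the double sum $\sum_{i=0}^d \sum_{j=0}^d F_1^{(i)}\cdot F_2^{(j)}$, then regroup terms by total degree $k=i+j$ (using A2, A3) into $\sum_{k=0}^d \sum_{i+j=k} F_1^{(i)}\cdot F_2^{(j)}$, and finally apply Lemma~\ref{lem:bounded degree}(ii) termwise, followed by C2, to conclude $F_1\ctimes F_2 = \sum_{k=0}^d (F_1\ctimes F_2)^{(k)}$.

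The principal obstacle, and the reason we are restricted to $\deg(F)\le d$, is to ensure that every intermediate proof line has syntactic degree at most $d$. The only suspicious step is the product expansion: a priori, regrouping $(\sum_i F_1^{(i)})(\sum_j F_2^{(j)})$ into the full double sum might produce a term $F_1^{(i)}\cdot F_2^{(j)}$ with $i+j > d$. However, whenever $i>\deg F_1$ the circuit $F_1^{(i)}$ is literally the constant $0$ (by construction of $F^{(k)}$ through $G^{\sharp}$), so $\deg(F_1^{(i)}\cdot F_2^{(j)}) = j \leq d$ in that case; otherwise $i\leq\deg F_1$, $j\leq \deg F_2$, and $i+j\leq \deg F_1+\deg F_2\leq d$. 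Hence every term in every intermediate equation has syntactic degree $\leq d$. For the size bound, each inductive step introduces only $\poly(d)$ new proof lines, each line referring to circuits whose sizes are bounded by $|F_u|\cdot\poly(d)$ (since $|F_u^{(k)}|=O(|F_u|\cdot(k+1)^2)$); telescoping the recursion over the nodes of $F$ yields the overall size bound $s\cdot\poly(d)$, which can be cleaned up via Proposition~\ref{prop:EF}(ii) if one prefers to count lines first.
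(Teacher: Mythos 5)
Your proof is correct and takes essentially the same route as the paper: an induction over the nodes of $F$ that reduces each step to Lemma~\ref{lem:bounded degree}, together with C1/C2 to reconcile the $\cplus/\ctimes$ gates with the disjoint-copy $+$ and $\cdot$ used in that lemma. The one presentational difference is that the paper proves $F_u=\sum_{k=0}^{\deg(u)}F_u^{(k)}$, keeping the upper limit at $\deg(u)$ throughout and thereby sidestepping the degree-overflow worry you had to address; your variant sums up to $d$ at every node, which is fine once you make (as you do) the observation that $F_u^{(k)}$ is literally the constant $0$ whenever $k>\deg(F_u)$ thanks to the $\sharp$ normalization.
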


\begin{proof} For every node $u$ in $F$, construct a proof of   $F_u= \sum_{k=0}^{\deg (u)}F_u^{(k)}$. This is done by induction on depth of $u$. If $u$ is a leaf, this stems from the definition of $F_u^{(k)}$, and if $u=u_1+u_2$ or $u=u_1\cdot u_2$, it is an application of the previous lemma.
\end{proof}

\bigskip
\noindent
\emph{Proof of Proposition \ref{prop: main hom}. }
Part (ii) follows from (i) by Lemma \ref{lem:bounded degree2}, hence it is sufficient to prove part (i).
 Let us first show that if $F=G$ is an axiom of $\PC(\ring)$ of size $s$ then $F^{(k)}=G^{(k)}$ has a proof of size $s\cdot \poly(k)$ and syntactic degree $\leq k$.  This is an application of Lemma \ref{lem:bounded degree}. Let $c$ be the constant such that equations (i) and (ii) in  Lemma \ref{lem:bounded degree} have proofs of size $O(s\cdot (k+1)^c)$.

 The lemma gives a proof $(F_1\cplus F_2)^{(k)}=(F_1+ F_2)^{(k)} $ and $(F_1\ctimes F_2)^{(k)}=(F_1\cdot F_2)^{(k)} $, as required for the axioms C1 and C2.

 Axioms A1 and A10 are immediate.
  For the other axioms, consider for example the axiom $F_1\cdot (F_2\cdot F_3)= (F_1\cdot F_2)\cdot F_3$, where the circuits have size $\leq s$. We have to construct a proof of
\begin{equation}(F_1\cdot (F_2\cdot F_3))^{(k)}= ((F_1\cdot F_2)\cdot F_3)^{(k)}\,.\label{eq:0}\end{equation}
By part (ii) of Lemma \ref{lem:bounded degree} the equations
\begin{eqnarray}
(F_1\cdot (F_2\cdot F_3))^{(k)}&=&\sum_{i=0}^k F_1^{(i)}\left( \sum_{j=0}^{k-i} F_2^{j} F_3^{k-i-j}\right) \label{eq:1}\\
((F_1\cdot F_2)\cdot F_3)^{(k)}& =& \sum_{i=0}^k \left( \sum_{j=0}^{i} F_1^{j} F_2^{i-j}\right)\cdot F_3^{(k-i)} \label{eq:2}\,,
\end{eqnarray}
can be proved by proofs with size roughly $s\cdot (k+1)^c\cdot (k+1)$.
In $\PC(\ring)$, the right hand sides of both (\ref{eq:1}) and (\ref{eq:2}) can be written as
$\sum_{ i+j+l=k} F_1^{(i)}F_2^{(j)}F_3^{(l)}$, by a proof of size roughly $s(k+1)^4$ . This gives the proof of (\ref{eq:0}) of size $s\cdot \poly(k)$.

Next, assume that $F=G$ is derived from the equations $F_1=G_1, F_2=G_2$ by means of the rules R1-R4, and we need to construct the proof of   $F^{(k)}=G^{(k)}$  from the set of equations $F_1^{(i)}=G_1^{(i)}, F_2^{(i)}=G_2^{(i)}, i=0,\dots k$. The hardest case is the rule
\[\frac{F_1=G_1\mez F_2=G_2 }{F_1\cdot F_2= G_1\cdot G_2}\,.\]
We have to prove  $(F_1\cdot F_2)^{(k)}= (G_1\cdot G_2)^{(k)}$. By Lemma \ref{lem:bounded degree}, we have proofs of $(F_1\cdot F_2)^{(k)}=\sum_{i=0,\dots k} F_1^{(i)}F_2^{(k-i)}$ and
$(G_1\cdot G_2)^{(k)}=\sum_{i=0,\dots k} G_1^{(i)}G_2^{(k-i)}$.
Hence $(F_1\cdot F_2)^{(k)}= (G_1\cdot G_2)^{(k)}$ can be proved from the assumptions $F_1^{(i)}=G_1^{(i)}, F_2^{(i)}=G_2^{(i)}, i=0,\dots k$. The proof has size roughly $s\cdot(k+1)^c (k+1)$.
\qed


\newcommand{\N} {\mathbb{N}}

\section{Balancing $ \PC$  proofs}\label{sec:balancing proofs}
In this section we prove Theorem \ref{thm: main balancing} which is a proof-theoretic analog of the following result:

\begin{theorem}[Valiant et al. \cite{VSB+83}]\label{thm:Val et al}
Let $ F$ be an arithmetic circuit of size $ s $ computing a polynomial $ f $ of degree $ d $. Then there exists an arithmetic circuit $[F]$ computing $ f $  with depth $ O(\log^2 d + \log s \cd\log d) $ and size  $ \poly(d,s)$.
\end{theorem}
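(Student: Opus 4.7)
The plan is to adapt the classical construction of Valiant, Skyum, Berkowitz and Rackoff. First I would replace $F$ by its $d$-homogeneous component $F^{(d)}$ using the construction of Section~\ref{sec:bounding}. Since $\widehat F=f$ has degree at most $d$, this preserves the polynomial computed, while increasing size by a $\poly(d)$ factor and contributing an additive $O(\log^2 d)$ to the depth. So without loss of generality I may assume that $F$ is syntactically homogeneous with syntactic degree exactly $d$.

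The heart of the argument is the existence, in any such $F$, of a multiplication gate $v=v_1\times v_2$ with $\deg(v)>d/2$ while $\deg(v_1),\deg(v_2)\le d/2$. To find it, walk from the output of $F$ always descending into the child of larger syntactic degree: along a $+$-gate the degree is preserved, so the first time the degree drops below $d/2$ must be at a $\times$-gate, and its parent is the desired $v$. Now replace $F_v$ by a fresh variable $y$ to obtain a circuit $F^{(v)}(y)$ of syntactic degree $\le d$ in $X\cup\{y\}$; since $\deg(v)>d/2$, any syntactic monomial using $y$ to the $m$-th power would carry, in the original $F$, a factor of syntactic degree $m\cdot\deg(v)>m\cdot d/2$, forcing $m\le 1$. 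Hence $F^{(v)}$ is linear in $y$, and so $\widehat F=A\cdot\widehat F_v+B$ for polynomials $A,B$ of degree $\le d-\deg(v)<d/2$, where explicit (not yet balanced) circuits for $A$ and $B$ can be read off $F^{(v)}$ as $\partial_y F^{(v)}$ and $F^{(v)}\!\restriction_{y=0}$.

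I then recurse, balancing each of $\widehat F_v$, $A$, and $B$ (all of degree $\le d/2$) and combining them via the above decomposition with $O(1)$ extra depth. Writing $D(d)$ for the resulting depth, the recurrence $D(d)\le D(d/2)+O(\log s)$ solves to $D(d)=O(\log s\cdot\log d)$, which together with the $O(\log^2 d)$ homogenization cost yields the claimed $O(\log^2 d+\log s\cdot\log d)$. The main obstacle is size control: naively the three recursive calls on $A$, $B$, $\widehat F_v$ would create fresh copies of most of $F$ at every level, causing a size blow-up. The standard remedy, which I would adopt, is to set up a global table of ``relative contribution'' polynomials $[u{:}v]$, indexed by pairs of nodes $u,v$ in $F$ with $v$ an ancestor of $u$, representing the coefficient of $F_u$ in $F_v$ under the same linearization argument. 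These $O(s^2)$ quantities satisfy recursive identities of the same flavor as the decomposition above, and the balanced circuit for $f$ can be assembled from them within total size $\poly(s,d)$.
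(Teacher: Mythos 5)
Your overall strategy is the Valiant--Skyum--Berkowitz--Rackoff one that the paper also follows, and the two key ideas you name (a $\times$-gate where the syntactic degree crosses $d/2$, and a global table of ``coefficient of $\widehat F_u$ in $\widehat F_v$'' polynomials, which are exactly the paper's $\partial w F_v$) are the right ones. But the central decomposition you write down is incorrect, and this is not a cosmetic slip: it is precisely the point where the circuit case differs from the formula case.

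You claim that after linearizing in $y$ one gets $\widehat F=A\cdot \widehat F_v+B$ with \emph{both} $A$ and $B$ of degree $\le d-\deg(v)<d/2$. The bound is fine for $A=\partial_y F^{(v)}$ (by homogeneity it is homogeneous of degree $d-\deg(v)$), but it is simply false for $B=F^{(v)}\!\restriction_{y=0}$, which collects all of $\widehat F$ \emph{not} passing through the node $v$ and is still homogeneous of degree $d$. A minimal counterexample: $F=G+H$ where $G$ computes $x^d$ and $H$ computes $y^d$, each by repeated squaring. The gate $v$ at the top of $G$ has $\deg(v)=d>d/2$ and children of degree $d/2$, yet $B=\widehat H=y^d$ has degree $d$, not $<d/2$. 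So the recursion $f\mapsto(A,B,\widehat F_v)$ does not reduce degree, the recurrence $D(d)\le D(d/2)+O(\log s)$ does not apply, and the depth bound does not follow. (Your text is also internally inconsistent: you say the recombination adds ``$O(1)$ extra depth'' but then write a recurrence with an $O(\log s)$ increment.)

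The repair, which is what the paper's construction actually does, is to replace the single-gate split by a sum over the \emph{entire} degree-$m$ frontier ${\cal B}_m(F_v)=\{t=t_1\cdot t_2\in F_v: \deg(t)>m,\ \deg(t_1),\deg(t_2)\le m\}$: one proves $\widehat F_v=\sum_{t\in{\cal B}_m(F_v)}(\partial t F_v)\cdot\widehat F_{t_1}\cdot\widehat F_{t_2}$ with \emph{no} leftover $B$ term, because in a syntactically homogeneous circuit of degree $>m$ every monomial contribution passes through exactly one frontier gate. Here all factors have degree $\le m$ or $\le\deg(v)-m$, so taking $m$ roughly $\deg(v)/2$ does halve degrees, and the sum has at most $s$ terms, which is where the genuine $O(\log s)$ per stage comes from. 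The table entries $\partial w F_v$ then satisfy the analogous frontier-sum identity $\partial w F_v=\sum_{t\in{\cal B}_m(F_v)}(\partial t F_v)\cdot(\partial w F_{t_1})\cdot\widehat F_{t_2}$, not ``the same flavor'' as your $A\cdot f_v+B$ split. Once you write the frontier sums, the objects appearing are always $\widehat F_u$ and $\partial w F_v$ for nodes $u,v,w$ of $F$, so the $O(s^2)$ table closes under the recursion and $\poly(s,d)$ size follows; with the single-gate split the recursion escapes the table after one step. In short: right high-level plan and right table, but the wrong decomposition at the heart of the argument.
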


We first give an outline of the construction of $\tr{F}$, which closely follows that in \cite{VSB+83} (we also refer the reader to \cite{RY08-balancing} for an especially clear exposition). We emphasize that in our case, the relevant parameter is the \emph{syntactic} degree of $F$: $\tr{F}$ will have size $\poly(s,d)$ and depth  $O(\log^2 d + \log s \cd\log d) $, where $d$ is the syntactic degree of $F$.

We write $ u\in F $ to mean that $ u $ is a node in the circuit $F $.
The following definition is important for the construction of balanced circuits:
let $ w,v $ be two nodes in $ F $.  We define the \emph{polynomial}  $ \partial w F_v  $ as follows:
\[
\partial w F_v :=
          \left\{
                \begin{array}{ll}
                  0, & \hbox{if $ w\not\in F_v$,} \\
                  1, & \hbox{if $ w=v $ ,  and otherwise:} \\
                  \partial w F_{v_1} + \partial w F_{v_2}  , & \hbox{$ v=v_1+v_2$;} \\
                  (\partial w F_{v_1}) \cd F_{v_2} , & \hbox{if either $ v=v_1\cd v_2 $ and $ \deg(v_1)\geq\deg(v_2) $,}\\
                    & \hbox{or $ v=v_2\cd v_1 $ and $ \deg(v_1)> \deg(v_2) $.}
                \end{array}
          \right.
\]
The idea behind this definition is the following: let $ w,v $ be two nodes in $ F $ such that  $2 \deg(w) > \deg (v) $.  Then
for any product node $v_{1}\cdot v_{2}\in F_{v}$, $w$ can be a node in at most one of $F_{v_{1}}, F_{v_{2}}$, namely the one of a higher syntactic degree. If we replace the node $w$ in $F_{v}$ by a new variable $z$, $F_{v}$ then computes a polynomial $g(z,x_{1},\dots,x_{n})$ which is linear in $z$, and $\partial w F_{v}$ is the usual partial derivative $\partial z g$.

It is not hard to show the following:

\begin{claim}\label{prop:deg fv - deg w}
Let $ w,v $ be two nodes in a circuit $ F $. Then the polynomial $ \partial w F_{v}$ has degree at most $ \deg(v)-\deg(w) $.
\end{claim}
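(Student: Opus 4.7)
The plan is to prove the claim by structural induction on the subcircuit $F_v$, equivalently by induction on the depth of $v$ in $F$. I adopt the convention that the zero polynomial has degree $-\infty$, so the bound $\deg(\partial w F_v)\leq \deg(v)-\deg(w)$ holds vacuously whenever $\partial w F_v=0$; this in particular disposes of the case $w\notin F_v$.

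The base case handles a leaf $v$ or the case $w=v$: either $w=v$, giving $\partial w F_v=1$ with polynomial degree $0=\deg(v)-\deg(w)$, or $w\neq v$ with $v$ a leaf, giving $\partial w F_v=0$. For the inductive step I split on the type of $v$. If $v=v_1+v_2$, then by the definition $\partial w F_v=\partial w F_{v_1}+\partial w F_{v_2}$; the inductive hypothesis gives $\deg(\partial w F_{v_i})\leq \deg(v_i)-\deg(w)$ for $i=1,2$, and since $\deg(v)=\max(\deg(v_1),\deg(v_2))$, both summands have degree at most $\deg(v)-\deg(w)$, so the sum does too.

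If $v$ is a product gate, by the symmetry built into the definition I may write $v=v_1\cdot v_2$ with $\deg(v_1)\geq \deg(v_2)$, so that $\partial w F_v=(\partial w F_{v_1})\cdot F_{v_2}$. By the inductive hypothesis $\deg(\partial w F_{v_1})\leq \deg(v_1)-\deg(w)$, and the polynomial degree of $F_{v_2}$ is bounded by its syntactic degree $\deg(v_2)$. The polynomial degree of a product is at most the sum of the polynomial degrees of the factors, yielding
\[
  \deg(\partial w F_v) \;\leq\; \bigl(\deg(v_1)-\deg(w)\bigr) + \deg(v_2) \;=\; \deg(v)-\deg(w),
\]
as required.

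There is no substantial obstacle; the argument is a routine induction that mirrors the recursive structure of the definition of $\partial w F_v$. The only subtlety worth flagging is the asymmetric handling of the product case: if the chosen ``heavy'' factor $v_1$ does not contain $w$, then the recursion already forces $\partial w F_{v_1}=0$ and the product vanishes, so copies of $w$ inside the lower-degree factor $v_2$ are effectively ignored. This asymmetry is harmless for the present degree bound, although it is precisely the feature that will later be needed to obtain the logarithmic depth in the balancing construction.
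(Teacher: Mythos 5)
Your proof is correct. The paper does not supply a proof of this claim, merely remarking that "it is not hard to show"; your structural induction mirroring the recursive definition of $\partial w F_v$ (with the $-\infty$ degree convention for the zero polynomial and the observation that $\deg(\widehat{F_{v_2}})\le\deg(v_2)$) is exactly the routine argument the authors had in mind.
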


In order to construct $\tr{F}$, we can assume without loss of generality that $ F $ itself is a syntactic homogenous circuit of size $ s'=O(d^2\cd s) $.
This is because a circuit of size $s$ and syntactic degree $d$ can be written as a sum of $d+1$ syntactically homogeneous circuits of size at most $s'$ and syntactic degree at most $d$.
Now the construction proceeds by induction on $ i=0,\ldots,\lceil \log d \rceil $. In each step $ i=0,\ldots,\lceil \log d \rceil $ we construct:
\begin{enumerate}
\item  Circuits computing $ \wh{F}_v $, for all nodes $ v $ in $ F $ with $ 2^{i-1}<\deg(v)\le 2^i $;
\item  Circuits computing $ \partial w F_{v}  $, for all nodes $ w,v $ in $ F $ with $ 2^{i-1}<\deg(v)-\deg(w)\le 2^i $ and $ \deg(v)<2\deg(w) $.
\end{enumerate}
Each step adds depth  $ O(\log s') $, which at the end amounts to a depth $ O(\log^{2}d+\log d \cd \log s ) $ circuit. Furthermore, each node $ v $ in $ F $ adds $ O(s') $ nodes in the new circuit and each pair of nodes $ v,w $ in $ F $ adds also $ O(s') $ nodes in the new circuit. This  finally amounts to a circuit of size $ O(s'^3)=O(d^6\cd s^3) $.

\bigskip
Let us now give the formal definition of $\tr{F}$. First, for a circuit $ G $ and a natural number $m$, let
\[
        {\cal B}_m(G) :=
                        \left\{
                                        t \in G \;:\; t = t_1\cd t_2 ,  \deg(t)>m \mbox{ and } \deg(t_1),\deg(t_2)\le m
                        \right\}.
\]

\para{Definition of  $[F] $.}
Let $ F $ be an arithmetic circuit of syntactic degree $ d $.

If $F$
 is not syntactic homogenous, let
\[
        [F]:=[F^{(0)}]+\ldots+[F^{(d)}]\,.
\]

Otherwise, assume that $ F $ is a syntactically homogenous circuit of degree $ d$. For any \emph{node} $ v\in F $ we introduce the corresponding \emph{node} $ [F_v] $ in $ [F] $ (intended to compute the polynomial $ \wh{F}_v $); and for any pair of nodes $ v,w \in F $ such that $ 2\deg(w) >\deg(v) $, we introduce the node $ [\partial w F_{v} ]$ in $ [F] $ (intended to  compute the polynomial $ \partial w F_{v} $).

The construction is defined by induction on $ i =0,\ldots,\lceil \log d \rceil $, as follows:

\para{Part (I):} Let $ v\in F $: \mbox{}
\QuadSpace

\case 1
Assume that \new{$ \deg(v)\leq1 $},
 then $ F_v $ computes a linear polynomial $ a_1x_1+\ldots+a_nx_n+b $ (where, by homogeneity of $ F $, $ b\neq 0$ \new{implies that}
  all $ a_i $'s equal $ 0 $). Define
\[
        [F_v]:= a_1x_1+\ldots+a_nx_n+b .
\]
\QuadSpace

\case 2 Assume that for some $ 0 \le i \le \lceil \log(d) \rceil $:
\[
        2^i < \deg(v) \le 2^{i+1}.
\]
Put $ m=2^i $, and define
\[
        [F_v] := \sum_{t\in {\cal B}_m(F_v)} [\partial t F_{v}]\cd [F_{t_1}]\cd[F_{t_2}],
\]
\new{where $t_{1}, t_{2}$ are nodes such that $ t=t_1\cd t_2 $.}
 (Note that here $ [\partial w F_v], [F_{t_1}] $ and $[F_{t_2}]$ are \emph{nodes}.)

\para{Part (II):} Let $ w,v $ be a pair of nodes in $ F $ with $ 2\deg(w) >\deg(v) $:\mbox{}
\QuadSpace

\case 1 Assume $ w $ is not a node in $ F_v $. Define
\[
        [\partial w F_v]:=0.
\]

\case 2 Assume that $ w $ is in $ F_v $ and $ 0\le \deg(v)-\deg(w)\le 1 $. Thus, by Claim \ref{prop:deg fv - deg w}, the polynomial $ \partial w f_{v} $ is a linear polynomial $ a_1x_1+\ldots+a_nx_n+b $. Define
\[
        [\partial w F_v]:= a_1x_1+\ldots+a_nx_n+b .
\]
\QuadSpace

\case 3 Assume that $ w $ is in $ F_v $ and that for some $ 0 \le i \le \lceil \log(d) \rceil $:
\[
2^i < \deg(v)-\deg(w)\le 2^{i+1} .
\]
\QuadSpace

Put $  m = 2^i + \deg(w)$. Define:
\[
        [\partial w F_{v}] := \sum_{t\in{\cal B}_m(F_v) } [\partial t F_{v}] \cd[\partial w F_{t_1}]\cd [F_{t_2}]\,,
\]
\new{where $t_{1}, t_{2}$ are nodes such that $ t=t_1\cd t_2 $ and $\deg(t_{1})\geq\deg(t_{2})$, or $t=t_{2}\cdot t_{1}$ and $\deg(t_{2})>\deg(t_{1})$.}
Finally, define $\tr{F}$ as the circuit with the output node $\tr{F_{u}}$, where $u$ is the output node of $F$.

\bigskip
One should make sure that the definition of $\tr{F}$ is well defined, and that it has the correct depth and size:

\begin{lemma}\label{lem: correctness} Let $F$ be a circuit of size $s$ and syntactic degree $d$. Then $\tr{F}$ is a circuit computing $\wh{F}$,  $\tr{F}$ is  of size $\poly(s,d)$ and depth $O(\log^{2}d+\log s\log d)$. {Moreover, every node $[\partial{w} F_{v}]$ in $[F]$ computes the polynomial $\partial w F_{v}$. }
\end{lemma}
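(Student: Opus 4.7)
The plan is to establish all three claims—correctness, the size bound, and the depth bound—by a single induction on the stage $i=0,\ldots,\lceil\log d\rceil$ of the construction, treating first the case where $F$ is itself syntactically homogeneous, and then deducing the general case from the reduction $[F]=[F^{(0)}]+\cdots+[F^{(d)}]$, where the circuits $F^{(k)}$ coming from Section~\ref{sec:bounding} contribute only a $\poly(d)$ blow-up in both size and depth.

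For correctness, I would first verify the base cases ($\deg(v)\le 1$ and $\deg(v)-\deg(w)\le 1$) by direct inspection, using that $F_v$ computes a linear polynomial by homogeneity in the first case, and Claim~\ref{prop:deg fv - deg w} in the second. For the inductive step, the construction rests on the two algebraic identities
\[
\widehat{F}_v \;=\; \sum_{t\in\mathcal{B}_m(F_v)} (\partial t F_v)\cdot \widehat{F}_{t_1}\cdot \widehat{F}_{t_2}, \qquad
\partial w F_v \;=\; \sum_{t\in\mathcal{B}_m(F_v)} (\partial t F_v)\cdot(\partial w F_{t_1})\cdot \widehat{F}_{t_2},
\]
where $t=t_1\cdot t_2$ with $\deg(t_1)\ge\deg(t_2)$ in the relevant convention. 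The first identity expresses that $\mathcal{B}_m(F_v)$ is an antichain separating $v$ from the low-degree part of $F_v$, so that expanding $\widehat{F}_v$ as a polynomial in the $\widehat{F}_t$'s ($t\in\mathcal{B}_m(F_v)$) yields exactly this sum—this is the standard Valiant–Skyum–Berkowitz–Rackoff identity. The second identity follows by linearity in $w$: since $2\deg(w)>\deg(v)\ge\deg(t)$, the node $w$ can appear inside at most one of $F_{t_1}, F_{t_2}$, and the choice of $t_1$ as the larger-degree factor matches the case distinction in the definition of $\partial w F_{v}$. On the right-hand sides, every node $[F_{t_i}]$, $[\partial w F_{t_1}]$, $[\partial t F_v]$ has strictly smaller parameter $i$, so the inductive hypothesis applies.

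For the size bound, there are at most $s$ headers $[F_v]$ and at most $s^2$ headers $[\partial w F_v]$; each is defined as a sum over $\mathcal{B}_m(F_v)\subseteq F$, which, using the balanced-sum convention of Notation~\ref{rem:big sum}, contributes $O(s)$ gates of depth $O(\log s)$. This gives $O(s^3)$ total in the homogeneous case, and an additional factor of $\poly(d)$ once we account for $[F]=\sum_k[F^{(k)}]$, yielding size $\poly(s,d)$. For the depth bound, each of the $O(\log d)$ stages contributes at most $O(\log s)$ depth (for the balanced sum over $\mathcal{B}_m$), so the total depth is $O(\log d\cdot\log s + \log^2 d)$, where the $\log^2 d$ term also absorbs the balanced sum over $k$ in $\sum_k[F^{(k)}]$.

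The main obstacle is the correctness argument: pinning down the two displayed identities with exactly the case distinction used in the definition of $\partial w F_v$, and checking that the inductive parameter genuinely decreases (in particular, that both $\deg(t_1)$ and $\deg(v)-\deg(t_1)$ halve in the appropriate sense when passing to the recursive subcalls). Once these identities are in place, the size and depth bounds are routine bookkeeping.
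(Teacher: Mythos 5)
Your proposal is correct and follows essentially the same plan as the paper's own (explicitly ``partial'') sketch: first verify well-definedness and bounds for syntactically homogeneous $F$ by induction on the stage $i$, then reduce the general case via $[F]=\sum_k[F^{(k)}]$. You supply a bit more detail than the paper on the underlying VSBR-style identities for $\widehat{F}_v$ and $\partial w F_v$, which the paper merely references; that is a reasonable elaboration and the identities you state are the right ones. Two small imprecisions worth tightening: (1) the well-definedness check is not really that ``$\deg(t_1)$ and $\deg(v)-\deg(t_1)$ halve''; what one actually verifies (as the paper does) is that for $t\in\mathcal{B}_m(F_v)$ with $m=2^i$ one has $\deg(t_1),\deg(t_2)\le 2^i<\deg(v)$ and $\deg(v)-\deg(t)\le 2^i$ with $2\deg(t)>\deg(v)$, so every node referenced on the right belongs to an earlier stage; and (2) the $\log^2 d$ term in the depth bound comes primarily from the size blow-up $|F^{(k)}|=O(sd^2)$ feeding into the $O(\log d\cdot\log|F^{(k)}|)$ estimate, not mainly from the $O(\log d)$-depth balanced sum over $k$ (which is lower order). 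Neither affects the conclusion.
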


\begin{proof} The proof is as in \cite{VSB+83} (see also \cite{RY08-balancing}). We shall give a partial sketch of the proof here, for the benefit of the reader.

First, assume that $F$ is syntactic homogeneous of degree $d$. We need to verify that $\tr{F}$ is well-defined. That is, at stage $i=0,\dots,\lceil\log d\rceil$, we compute all $\tr{F_{v}}$ and $\tr{\partial_{w}F_{u}}$ for all nodes $v,u,w\in F$ such that $2^i < \deg(v) \le 2^{i+1}$ and $2^i < \deg(v)-\deg(u)\le 2^{i+1}$, and
we want to show that the computation uses only nodes computed in previous stages.

Take, for example, Case 2 in Part (I). For any $ t\in {\cal B}_m(F_v)$, $  m<\deg(t)\le\deg(v)\le 2m $. This implies that $ \deg(v)-\deg(t) \le m=2^i $ and $ \deg(t)<2\deg(v) $. Hence, we have already computed $ [\partial t F_v] $. We have also already constructed $ [F_{t_1}], [F_{t_2}] $, since $ \deg(t_1), \deg(t_2)<m=2^i $.

Inspecting the construction, $\tr{F}$ has size $\poly(s)$ and depth $O(\log s\cd \log d)$, given that $F$ is syntactically homogeneous of size $s$ and degree $d$. If $F$ is not syntactically homogeneous, the definition $\tr{F}= \tr{F^{(0)}}+\dots \tr{F^{(d)}}$ gives a circuit of size $\poly(s,d)$ and depth $O(\log^{2} d+\log s\cd \log d)$, since every $F^{(k)}$ has size $O(s\cdot k^{2})$.
 \end{proof}

We need to show that properties of  $\tr{F}$ can be proved inside the system $\PC$.
The key ingredient is given by the following lemma.

\begin{lemma}[Main simulation lemma]\label{lem:main simulation lemma}\label{lem:simulation}
Let $ F_{1}, F_2 $ be circuits of syntactic degree at most $ d $ and size at most $ s $. Then there exist $\PC$ proofs of:
\begin{align}
 [F_1\oplus F_2]  &= [F_1]+[F_2]\, , \ \ \ \     \label{eq:plus}\\
[F_1\otimes  F_2] & = [F_1]\cd[F_2] \,,    \label{eq:prod}
\end{align}
such that the proofs have size $\poly(s,d)$ and depth $O(\log^{2}d+\log d\cd \log s)$.
\end{lemma}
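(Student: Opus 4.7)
The plan is to establish both identities by reducing first to the syntactically homogeneous case, and then unfolding the recursive definition of $[\cdot]$ at the output node. For the reduction, I would use the definitional identity $[F_{i}]=[F_{i}^{(0)}]+\cdots+[F_{i}^{(d)}]$ of the non-homogeneous balancing, together with Lemma \ref{lem:bounded degree}, which already yields short $\PC$ proofs of $(F_1\cplus F_2)^{(k)}=F_1^{(k)}+F_2^{(k)}$ and $(F_1\ctimes F_2)^{(k)}=\sum_{j=0}^{k}F_1^{(j)}\cdot F_2^{(k-j)}$. Combined with the distributivity axioms, these would reduce both \eqref{eq:plus} and \eqref{eq:prod} to the case in which $F_1$ and $F_2$ are themselves syntactically homogeneous.

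For the homogeneous sum case (with $F_1,F_2$ of the same degree $d$), I would take $u=u_1+u_2$ to be the output of $F_1\cplus F_2$ and apply the recursive definition:
\[
[F_u]=\sum_{t\in\mathcal{B}_m(F_u)}[\partial t F_u]\cdot[F_{t_1}]\cdot[F_{t_2}].
\]
Since $u$ is a sum node, $\mathcal{B}_m(F_u)$ splits disjointly as $\mathcal{B}_m(F_{u_1})\sqcup\mathcal{B}_m(F_{u_2})$, and for $t\in F_{u_j}$ the syntactic definition of $\partial$ gives $\partial t F_u=\partial t F_{u_1}+\partial t F_{u_2}=\partial t F_{u_j}$, the other summand being zero. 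Splitting the sum along this partition and identifying $[\partial t F_u]$ with $[\partial t F_{u_j}]$, which is justified because the construction $[\cdot]$ depends only on the local subcircuit, should deliver $[F_u]=[F_{u_1}]+[F_{u_2}]$ directly.

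The homogeneous product case will be the main obstacle. With $u=u_1\cdot u_2$, if both $\deg(u_1),\deg(u_2)\le m$ (the break level for $u$), then $u$ itself is the unique element of $\mathcal{B}_m(F_u)$, $[\partial u F_u]=1$, and the desired identity $[F_u]=[F_{u_1}]\cdot[F_{u_2}]$ is immediate. Otherwise, say $\deg(u_1)>m\ge\deg(u_2)$, so every break node lies in $F_{u_1}$ and $\partial t F_u=(\partial t F_{u_1})\cdot F_{u_2}$ syntactically. To convert this into a $\PC$ identity I would carry an auxiliary inductive claim asserting that $[\partial t F_u]$ (as constructed at the appropriate later stage of the balancing of $F_1\ctimes F_2$) admits a short $\PC$ proof of equality with $[\partial t F_{u_1}]\cdot[F_{u_2}]$. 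This auxiliary claim is to be proved simultaneously with the main identity by induction on the construction stage $i$, unfolding the recursive definitions in Part (I) and Part (II) and splitting sums along the disjoint decomposition of $\mathcal{B}_m$ induced by the addition or multiplication at the output.

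Throughout, each inductive step introduces only $O(\log s')$ additional depth and polynomially many new proof lines, mirroring the structure of the construction of $[F]$ itself, so the final proof size will be $\poly(s,d)$ and its depth $O(\log^{2}d+\log s\cdot\log d)$, as required by the statement.
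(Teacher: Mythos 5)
Your overall strategy is the same as the paper's: reduce to the syntactically homogeneous case, then unfold the recursive definition of $[\cdot]$ at the output node, simultaneously proving by induction on the stage $i=0,\dots,\lceil\log d\rceil$ the equations $[F_v]=[F_{v_1}]\circ[F_{v_2}]$ (Part (I)) and $[\partial w F_v]=[\partial w F_{v_1}]\circ \dots$ (Part (II)) for nodes of the relevant degrees. That is precisely what the paper does via its ``Statement'' and the proof-sequences $\Psi_i$.

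However, two specific claims in your sketch are incorrect as stated, and both are points where the paper has to work harder. First, $\mathcal{B}_m(F_u)$ does \emph{not} split disjointly as $\mathcal{B}_m(F_{u_1})\sqcup\mathcal{B}_m(F_{u_2})$ when $u=u_1+u_2$, because $F_{u_1}$ and $F_{u_2}$ may share subcircuits, so a break node can lie in both. The paper sidesteps this: it distributes $[\partial t F_u]=[\partial t F_{u_1}]+[\partial t F_{u_2}]$ and then uses that terms with $t\notin F_{u_e}$ vanish, giving $\sum_{t\in\mathcal{B}_m(F_u)}[\partial t F_{u_e}]\cdots=\sum_{t\in\mathcal{B}_m(F_{u_e})}[\partial t F_{u_e}]\cdots$ without ever assuming disjointness. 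Second, the step ``identifying $[\partial t F_u]$ with $[\partial t F_{u_j}]$, which is justified because the construction $[\cdot]$ depends only on the local subcircuit'' is false: the construction of $[\partial t F_u]$ recurses through break sets $\mathcal{B}_m(F_u)$ that depend on the whole of $F_u$, not just on $F_{u_j}$, so $[\partial t F_u]$ and $[\partial t F_{u_j}]$ are generally \emph{different circuits} (computing the same polynomial). Proving the $\PC$ equation $[\partial t F_{u_1+u_2}]=[\partial t F_{u_1}]+[\partial t F_{u_2}]$ is exactly the Part (II) content of the inductive claim, and it is needed in the sum case just as much as the product case; your proposal only attaches the auxiliary inductive claim to the product case. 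If you make the auxiliary claim symmetric (both additive and multiplicative forms, Part (I) for $[F_v]$ and Part (II) for $[\partial w F_v]$), drop the disjointness assertion, and handle the $\mathcal{B}_m$-restriction via vanishing of $[\partial t F_{u_e}]$ for $t\notin F_{u_e}$, you arrive at the paper's argument.
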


The proof of Lemma \ref{lem:main simulation lemma} is deferred to the end of this section. We now use Lemma \ref{lem:simulation}
 to prove Theorems \ref{thm: main balancing} and \ref{thm: main simulation}.

\begin{theorem}[Theorem \ref{thm: main balancing} restated] Let $F,G$ be circuits of syntactic degrees at most  $d$ such that $F=G$ has a $\PC$ proof of size $s$. Then
\begin{enumerate}
\item \label{bal-jedna} $\tr{F}=\tr{G}$ has a $\PC$ proof of size $\poly(s,d)$ and depth $ O(\log s\cd\log d +\log^2 d)$.
\item \label{bal-dva} If $F,G$ have depth at most $ t$ then $F=G$ has a $\PC$ proof of size $\poly(s,d)$ and depth at most $ O(t+\log s\cd\log d +\log^2 d) $.
\end{enumerate}
\end{theorem}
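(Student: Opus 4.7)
The plan is to translate the given $\PC$ proof of $F=G$ line-by-line under the map $[\cdot]$, with Lemma~\ref{lem:main simulation lemma} (the Main Simulation Lemma) doing all the heavy lifting at each axiom and at each inference step. As a preprocessing step I would invoke Proposition~\ref{prop: main hom}(ii) to replace the given proof by one of size $s\cdot\poly(d)$ in which every circuit has syntactic degree at most $d$; call this proof $F_1=G_1,\ldots,F_k=G_k$. The uniform degree bound is exactly what lets me apply Lemma~\ref{lem:main simulation lemma} everywhere and obtain the promised quantitative guarantees.

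For part~\ref{bal-jedna}, I would build, line by line, a $\PC$ proof of $[F_i]=[G_i]$. The rules R1 and R2 translate trivially. For R3 applied to derive $F_j+F_\ell=G_j+G_\ell$, I would chain $[F_j+F_\ell]=[F_j]+[F_\ell]$ (by Lemma~\ref{lem:main simulation lemma}, noting that $F_j+F_\ell$ is by definition an $\oplus$ of disjoint copies), then $[F_j]+[F_\ell]=[G_j]+[G_\ell]$ by a fresh application of R3 on the translated premises, and finally $[G_j]+[G_\ell]=[G_j+G_\ell]$ by the lemma applied in reverse; R4 is analogous. Each axiom of $\PC$ has the shape $A=B$ built from generic circuits by a constant number of $+$ and $\cdot$ operations, so $[A]$ and $[B]$ can be rewritten, in a constant number of invocations of Lemma~\ref{lem:main simulation lemma}, into a form in which the corresponding axiom instance applies directly on the $[\cdot]$-images of the constituents; C1 and C2 reduce directly to two instances of the lemma. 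Each translation step therefore contributes size $\poly(s,d)$ and depth $O(\log^2 d+\log s\cdot\log d)$, and summing over the $k\le s\cdot\poly(d)$ lines yields a proof of $[F]=[G]$ within those bounds.

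For part~\ref{bal-dva}, I would further build proofs of $F=[F]$ and $G=[G]$ and compose them with the proof of $[F]=[G]$ above. These auxiliary proofs I would construct by structural induction on $F$ (resp.\ $G$): for every node $v$ of $F$, traversed bottom-up, I would derive $F_v=[F_v]$. Leaves are immediate from the definition of $[\cdot]$; for an internal node $v=u_1\oplus u_2$, I would combine the induction hypotheses via axiom C1, rule R3, and Lemma~\ref{lem:main simulation lemma} in the chain
\[
F_v \;=\; F_{u_1}\oplus F_{u_2} \;=\; F_{u_1}+F_{u_2} \;=\; [F_{u_1}]+[F_{u_2}] \;=\; [F_v],
\]
and symmetrically for $v=u_1\otimes u_2$ using C2. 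Every circuit arising in this derivation is a subcircuit of $F$, a subcircuit of $[F]$, or their immediate sum or product, and so has depth $O(t+\log^2 d+\log s\cdot\log d)$; there are $O(s)$ such inductive steps, each contributing size $\poly(s,d)$. Chaining $F=[F]$ with the part~\ref{bal-jedna} proof of $[F]=[G]$ and with the symmetric proof of $[G]=G$ (traversed via R1) yields the desired proof of $F=G$.

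I expect the main obstacle to be the depth bookkeeping. Naive transitivity would produce intermediate equations whose circuits have uncontrolled depth; this is precisely why Proposition~\ref{prop: main hom} is invoked at the outset, so that Lemma~\ref{lem:main simulation lemma} is always applied to circuits of syntactic degree at most $d$, and why the inductive construction of $F=[F]$ in part~\ref{bal-dva} must proceed level-by-level rather than globally. Once the depth budget is correctly allocated, the rest is an essentially mechanical axiom-by-axiom translation, with all of the analytic content packaged inside Lemma~\ref{lem:main simulation lemma}.
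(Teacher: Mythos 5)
Your proposal is correct and follows essentially the same route as the paper: preprocess with Proposition~\ref{prop: main hom} to bound the syntactic degree, translate the resulting proof line-by-line under $[\cdot]$ using Lemma~\ref{lem:main simulation lemma} to handle the axioms and the rules R3, R4, and then for part~(ii) prove $F=[F]$ (and $G=[G]$) by structural induction on the circuit, again via the Main Simulation Lemma. The only cosmetic difference is that the paper singles out A1 and A10 as needing no appeal to the lemma (A10 because $[F]=\widehat F$ when $\deg F=0$), whereas you fold them into the generic axiom case; this does not affect correctness.
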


\begin{proof}
Part \ref{bal-jedna}. Assume that $F=G$ has syntactic degree $d$ and a $\PC$ proof of size $s$. By Proposition \ref{prop: main hom},
$F=G$ has a $\PC$ proof of syntactic degree $d$ and size $s'=s\cdot \poly(d)$.  So let us consider such a proof $S$. By induction on the number of lines in $S$, construct a $\PC$ proof of $\tr{F_1}=\tr{F_2}$, where $F_1=F_2$ is a line in $S$.

Let $m_0$ and $k_0$ be  such that (\ref{eq:plus}) and (\ref{eq:prod}) have $\PC$ proofs of size at most $m_0$ and depth $k_0$, whenever $F_1\cplus F_2$, respectively, $F_1\ctimes F_2$ have size at most $s'$ and syntactic degree at most $d$. By Lemma \ref{lem:main simulation lemma}, we can choose $m_0=\poly(s',d)$ and $k_0=O(\log s'\cd \log d +\log^2 d)$.
\QuadSpace

First, show that if a line $F=H$ in $S$ is a $\PC$ \emph{axiom} then $\tr{F}=\tr{H}$ has a $\PC$ proof of size $c_1m_0$ and depth $c_2 k_0$, where $c_1, c_2$ are some constants independent of $s',d$. The axiom A1 is immediate and the axiom A10 follows from the fact that $[F]= \widehat F$, if $\deg(F)=0$. The rest of the axiom are an application of Lemma \ref{lem:simulation}, as follows. Axioms C1 and C2 are already  the statement of Lemma \ref{lem:simulation}. For the other axioms, take, for example,
\[F_1\cdot (G_1+G_2)= F_{1}\cdot G_1+F_{1}\cdot G_2\, .\]
We are supposed to give a proof of
\[\tr{F_1\cdot (G_1+G_2)}= \tr{F_1\cdot G_1+ F\cdot G_2}\, ,\]
with a small size and depth.
By Lemma \ref{lem:simulation} we have a $\PC$ proof
\[\tr{F_1\cdot (G_1+G_2)}= \tr{F_1}\cdot\tr{ G_1+ G_2} = \tr{F_1}\cdot \tr{G_1}+\tr{F_1}\cdot \tr{G_2}= \tr{F_1}\cdot (\tr{G_1}+\tr{G_2})\,.\]
Lemma \ref{lem:simulation} gives again
\[\tr{F_1}\cdot (\tr{G_1}+\tr{G_2})= \tr{F_1}\cdot \tr{G_1+G_2}= \tr{F_1\cdot (G_1+G_2)}\,.\]
Here we applied Lemma \ref{lem:simulation} to circuits of size at most $s'$, and the proof of $\tr{F_1\cdot (G_1+G_2)}= \tr{F_1\cdot G_1+ F\cdot G_2}$ has size at most, say, $100m_0$ and  depth at most $10k_0$.

An application of rules R1, R2 translates to an application of R1, R2. For the rules R3 and R4,  it is sufficient to show the following: if $S$ uses the rule \[\frac{F_1=F_2\qquad G_1=G_2}{F_1\circ G_1=F_2\circ G_2},\, \circ\in\{\cdot, +\}, \] then there is a proof of $\tr{F_1\circ G_1=F_2\circ G_2}$, of size $c_1m_0$ and depth $c_2k_0$, from the equations $\tr{F_1}=\tr{G_1}$ and $\tr{F_2}=\tr{G_2}$. \new{This is again}
 an application of Lemma \ref{lem:simulation}.

Altogether, we obtain a proof of $\tr{F}=\tr{G}$ of size at most $c_1s'm_0$ and depth $c_2k_0$.
\FullSpace

\ind Part \ref{bal-dva}. Using \ref{bal-jedna},  it is sufficient to prove the following:

\begin{claim*} If $F$ is a circuit with depth $t$, syntactic degree $d$ and size $s$, then $F=[F]$ has a $\PC$ proof of size $\poly(s,d)$ and depth at most $ O(t+\log s\cd\log d +\log^2 d) $.
\end{claim*}
Using Lemma \ref{lem:simulation}, this claim can be easily proved by induction on $s$.
\end{proof}

\begin{theorem}[Theorem \ref{thm: main simulation} restated]\label{thm:5-restated-is now flying} Assume that $F,G$ are formulas of syntactic degree at most $d$ such  that $F=G$ has a $\PC$ proof of size $s$.
Then  $F=G$ has a $\PF$ proof of size $(s(d+1))^{O(\log d)}$.
\end{theorem}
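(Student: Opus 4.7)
The plan is to combine the balancing of $\PC$ proofs (Theorem~\ref{thm: main balancing}) with the passage from bounded-depth circuits to formulas via unfolding. First, by Proposition~\ref{prop: main hom}(ii) I would reduce to a $\PC$ proof of $F=G$ of syntactic degree at most $d$ and size $s\cdot\poly(d)$, and then apply part (i) of the balancing theorem to obtain a $\PC$ proof $\Pi$ of $\tr{F}=\tr{G}$ of size $\poly(s,d)$ in which every circuit has depth at most $k=O(\log s\cd \log d+\log^2 d)$.

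Next I would unfold $\Pi$ line-by-line into a $\PF$ proof of $\tr{F}^{\bullet}=\tr{G}^{\bullet}$, where $C^{\bullet}$ denotes the unrolling of a circuit $C$ into a formula, as in Remark~\ref{def:similar}. Since every circuit in $\Pi$ has depth at most $k$, its unfolding is a tree of size at most $2^{k+1}=(s(d+1))^{O(\log d)}$. The point to check is that this transformation respects the $\PC$ machinery: axioms A1--A10 and rules R1--R4 are purely syntactic and translate identically, while the circuit-only axioms C1 and C2 collapse to instances of A1 because $\oplus$ and $+$ (resp.\ $\otimes$ and $\cdot$) yield the same tree once sharing is eliminated. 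Since $\Pi$ has $\poly(s,d)$ lines, this produces a $\PF$-proof of $\tr{F}^{\bullet}=\tr{G}^{\bullet}$ of total size $(s(d+1))^{O(\log d)}$.

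It remains to bridge $F$ and $\tr{F}^{\bullet}$ (and analogously $G$ and $\tr{G}^{\bullet}$) inside $\PF$. I would establish $F=\tr{F}^{\bullet}$ by induction on the formula structure of $F$: for a subformula $F_v = F_{v_1}\star F_{v_2}$ with $\star\in\{+,\cdot\}$, and inductive hypotheses $F_{v_i}=\tr{F_{v_i}}^{\bullet}$, the unfolded version of Lemma~\ref{lem:simulation} gives a $\PF$-derivation of $\tr{F_v}^{\bullet}=\tr{F_{v_1}}^{\bullet}\star \tr{F_{v_2}}^{\bullet}$ of size $(s(d+1))^{O(\log d)}$; combining with the inductive hypotheses via R3/R4 and R2 yields $F_v=\tr{F_v}^{\bullet}$. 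Summing over the at most $s$ subformulas of $F$ produces a $\PF$-proof of $F=\tr{F}^{\bullet}$ of size $s\cdot (s(d+1))^{O(\log d)}=(s(d+1))^{O(\log d)}$.

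Chaining $F=\tr{F}^{\bullet}$, $\tr{F}^{\bullet}=\tr{G}^{\bullet}$, and $\tr{G}^{\bullet}=G$ then gives the desired $\PF$-proof of $F=G$ of size $(s(d+1))^{O(\log d)}$, and the further bound $s^{O(\log s)}$ follows from the fact that a formula of size $s$ has syntactic degree at most $s$. The main technical point is the second step: one has to verify carefully that the unfolding of a bounded-depth $\PC$ proof really is a legal $\PF$ proof and that the various size bounds compose cleanly into $(s(d+1))^{O(\log d)}$. The inductive step producing $F=\tr{F}^{\bullet}$ is then routine given Lemma~\ref{lem:simulation} in its unfolded form.
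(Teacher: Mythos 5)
Your argument is correct and follows essentially the same route as the paper's: balance the $\PC$ proof, unfold it line-by-line into a $\PF$ proof of $\tr{F}^{\bullet}=\tr{G}^{\bullet}$ (this is the paper's Claim 1), and then bridge $F=\tr{F}^{\bullet}$ and $G=\tr{G}^{\bullet}$ by induction on the formula structure using the unfolded form of Lemma~\ref{lem:simulation} (this is the paper's Claim 2). The only cosmetic difference is that you invoke Proposition~\ref{prop: main hom}(ii) explicitly before balancing, a step that is already folded into the proof of Theorem~\ref{thm: main balancing}.
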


\begin{proof} Recall the definition of the formula $F^{\bullet}$ from \new{Remark} \ref{def:similar}.
It is not hard to show the following:

\begin{claim1}   If $H_1=H_2$ has a $\PC$ proof with $p$ proof lines and depth $k$, then $H_1^\bullet=H_2^\bullet$ has a $\PF$ proof of size $O(p2^k)$.
\end{claim1}

\bigskip
Let $F$ and $G$ be as in the assumption. The previous theorem and Claim 1 give a $\PF$ proof of
\[\tr{F}^\bullet = \tr{G}^\bullet\]
of size $s \cdot 2^{O(\log s \cd\log d+\log^{2}d)}=(s(d+1))^{O(\log d)}$.

To complete the proof, it is sufficient to show that:
\bigskip

\begin{claim2}  If $H$ is a formula of size $s$ and syntactic degree $d$, then $[H]^\bullet=H$ has a $\PF$ proof of size $(s(d+1))^{O(\log d)}$.
\end{claim2}
\bigskip
This is proved by induction on $s$ using Lemma \ref{lem:simulation}.
\end{proof}

\subsubsection*{Proof of Lemma \ref{lem:simulation}}\label{sec: lemma}
It is sufficient to prove the statement, under the assumption that $F_{1}\cplus F_{2}$ and $F_{1}\ctimes F_{2}$ are syntactically homogeneous. This is because of the following: assume that the lemma holds for syntactically homogeneous circuits.
First, note that for any circuit of syntactic degree $d$,
\[
        \tr{F}=\tr{F^{(0)}}+\tr{F^{(1)}}+\dots +\tr{F^{(d)}}
\]
has a proof of size $\poly(s,d)$ and depth $O(\log d\cd\log s+\log^{2}d)$: if $F$ is not syntactically homogeneous, then this stems from the definition of $[F]$; otherwise, $F$ is syntactically homogeneous, and so $[F^{(k)}]$ is the circuit $0$ whenever $k<d$ and it is sufficient to construct the proof of $[F]=[F^{(d)}]$, which can be done by induction on the size of $F$.
Second, if for example $F_{1}\cplus F_{2}$ is not syntactically homogenous, then by definition of $\tr{\cdot}$, we have
\[
        \tr{F_{1}\cplus F_{2}}= \sum_{k=0}^d \tr{(F_{1}\cplus F_{2})^{(k)}}\,,
\]
where $d=\deg(F_{1}\cplus F_{2})$. By the definition of $F^{(k)}$, $(F_{1}\cplus F_{2})^{(k)}$ is a syntactically homogeneous circuit which is either of the form $F_{1}^{(k)}\cplus F_{2}^{(k)}$,
or it is of the form $F_{e}^{(k)}$, if $F_{e'}^{(k)}=0$, $\{e,e'\}=\{1,2\}.$
In both cases we obtain a proof of $[(F_{1}+F_{2})^{(k)}]= [F_{1}^{(k)}]+[F_{2}^{(k)}]$, of small size and depth.
This gives a $\PC$ proof of
 \[\sum_{k=0}^d \tr{(F_{1}\cplus F_{2})^{(k)}}= \sum_{k=0}^d\tr{(F_{1})^{(k)}}+ \tr{(F_{2})^{(k)}}
        = \sum_{k=0}^d \tr{(F_{1})^{(k)}}+ \sum_{k=0}^d \tr{(F_{2})^{(k)}}\,.\]

We thus consider the syntactically homogeneous case. Let $m(s,d)$ and $r(s,d)$ be functions such that
for any circuit $ F $ of syntactic degree $ d $ and size $s$, $\tr{F}$ has depth at most $ r(s,d) $ and size at most $ m(s,d) $. By Lemma \ref{lem: correctness}, we can choose
\[ m(s,d)=\poly(s,d) \mbox{\ \ and \ \ } r(s,d)=O(\log^{2}d+\log d \cd\log s).\]

\begin{notation}
In the following, $[F_v]$ and $[\partial w F_v]$ will denote \emph{circuits}: $[F_v]$ and $[\partial w F_v]$ are the subcircuits of $[F]$ with output nodes $[F_v]$ and $[\partial w F_v]$, respectively; the defining relations between the nodes of $[F]$ (see the definition of $[F]$  above) translate to equalities between the corresponding circuits. For example, if $v$ and $m$ are as in Case 2, part (I) of the definition of $ \tr{F} $, then, using just the axioms C1 and C2, we can prove
\begin{equation}\label{eq:hosafa basof}
[F_v] = \sum_{t\in {\cal B}_m(F_v)} [\partial t F_{v}]\cd [F_{t_1}]\cd[F_{t_2}]\,.
\end{equation}
\new{Here, the left hand side is understood as the circuit $ \tr{F_v} $ in which $ \tr{\partial t F_v}, \tr{F_{t_1}}, \tr{F_{t_2}} $ appear as \emph{subcircuits}, and so can share common nodes, while on the right hand side the circuits have \emph{disjoint nodes}.}
Also, note that if $F$ has size $s$ and degree $d$, the proof of  (\ref{eq:hosafa basof}) has size  $O(s^2m(s,d))$ and has depth $O(r(s,d))$.
\end{notation}
We shall use these kind of identities in the current proof.

 The following statement suffices to conclude the lemma. The recurrence (\ref{eq:recurrence}) below  implies $\lambda(s,d)=\poly(s,d)$ and it is enough to take $ F $ in the statement as either $ F_1\oplus F_2 $ or $ F_1\otimes F_2 $, and $ v $ as the root of $ {F} $.

\para{Statement:}
Let $ F $ be a syntactically homogenous circuit of syntactic degree $d $ and size  $ s $, and let $ i=0,\ldots,\lceil \log d \rceil $. There exists a function $ \lambda(s,i) $ not depending on $F$ with
\begin{equation}\label{eq:recurrence}
   \lambda(s,0)=O(s^4) \quad \hbox{and} \quad \lambda(s,i)\le O(s^4\cd m(s,d))+ \lambda(s,i-1),
\end{equation}
and a $\PC$ proof-sequence $ \Psi_i $ of size at most $ \lambda(s,i) $ and depth at most $ O(r(s,d)) $, such that the following hold:
\QuadSpace

\ind\textbf{Part (I):}
For every node $ v\in{F} $ with
\begin{equation}\label{eq:od ehad}
 \deg(v)\le 2^i,
\end{equation}
$ \Psi_i $ contains the following equations:
\begin{align}
        [F_v] &= [F_{v_1}]+[F_{v_2}]\ , \mbox{\ \ \ \ \ in case $ v = v_1+v_2 $,\ \ \ \ \  and} \label{eq:Fv+}\\
        [F_v] &= [F_{v_1}]\cd [F_{v_2}]\ , \mbox{\ \ \ \ \ in case $ v = v_1\cd v_2 $}.\label{eq:Fv*}
\end{align}

\ind\textbf{Part (II):} For every pair of nodes $ w\neq v\in{F} $, where $ w\in F_v $, and with
\begin{align}
   &
\deg(v)-deg(w)\le 2^i \; \mbox{\ \ \ and}  \label{eq:nizkarti1}\\
        & 2\deg(w)>\deg(v),\label{eq:nizkarti2}
\end{align}
$ \Psi_i $ contains the following equations:
\begin{align}
        [\partial w F_{v}]  &= [\partial w F_{v_1}] + [\partial w F_{v_2}], &\mbox{\ \ \ in case $ v = v_1+v_2 $;}\label{eq:wFv+}\\
        [\partial w F_{v}]  &= [\partial w F_{v_1}] \cd [F_{v_2}] ,& \mbox{\ \ in case $ v = v_1\cd v_2 $ and $\deg(v_1)\ge\deg(v_2)$ }\nonumber\\
           && \mbox{\ \ or $ v = v_2\cd v_1 $ and $\deg(v_1)> \deg(v_2)$.  }
        \label{eq:wFv*}
\end{align}
\HalfSpace

We proceed to construct the sequence $\Psi_{i}$ by induction on $ i $.

\Base $ i=0 $.
We need to devise the proof sequence $ \Psi_0 $.

\para{Part (I).}  Let $ \deg(v)\le 2^0 $. By definition, $\tr{F_{v}}= \sum_{i=1}^n a_i x_i +b$, where $a_{i}$'s and $b$ are field elements.  If $v=v_{1}+v_{2}$, we have also   $\tr{F_{v_{e}}}= \sum_{i=1}^n a_i^{(e)} x_i +b^{(e)}$, for $e=1,2$. Hence the equation $  [F_v] = [F_{v_1}] + [F_{v_2}] $ is the (true) identity:
\[
 \sum_{i=1}^n a_i x_i +b= \sum_{i=1}^n a_i^{(1)} x_i +b^{(1)}
        +\sum_{i=1}^n a_i^{(2)} x_i +b^{(2)}\,,
\]
which has a proof of size $O(s^{2})$ and depth $O(\log s)$ (we assume without loss of generality that $n\leq s$).

In case $ v = v_1\cd v_2 $, either $\deg(v_{1})=0$ or $\deg(v_{2})=0$ and the proof of $ [F_v] = [F_{v_1}]\cd[F_{v_2}] $ is similar.

\para{Part (II).}
Since $ \deg(v)-\deg(w) \le 1 $, we have  $ [\partial w F_{v}] = \sum_{i=1}^n a_i x_i +b $, for some field elements $ a_i $'s and $b $.

In case $ v = v_1+v_2 $, we have $\deg(v_{e})-\deg(w)\leq 1 $ and so $ [\partial w F_{v_e}]=\sum_{i=1}^n a_i^{(e)} x_i + b^{(e)}$, where $ e=1,2 $. The assumption $w\not=v$ and Lemma \ref{lem: correctness}, guarantee that $[\partial w F_{v}]=[\partial w F_{v_{1}}]+ [\partial w F_{v_{2}}]$ is a correct identity, and we can thus proceed as the base case of Part (I) above.

In case   $ v=v_1\cd v_2 $, assume without loss of generality that $\deg (v_{1})\geq \deg (v_{2})$. Again, we have $ [\partial w F_{v_1}]=\sum_{i=1}^n a_i^{(1)} x_i + b^{(1)}$.   From the assumptions, we have that $w\in F_{v_{1}}$, which implies $\deg(v_{1})\geq \deg(w)$ and so $\deg(v_{2})\leq 1$.
Hence $[F_{v_{2}}]= \sum_{i=1}^n a_i^{(2)} x_i + b^{(2)} $. (One can note that
 at least one of $[\partial w F_{v_1}]$ or $[F_{v_2}]$ is constant). Thus we can prove the (correct, by virtue of the assumption $w\not=v$) identity $ [\partial w F_v] =[\partial w F_{v_1}] \cd [F_{v_2}]$ with a $ \PC(\F)$ proof of size $ O(s^2) $  and depth $ O(\log s) $.
\FullSpace

Overall, $\Psi_{0}$ will be the union of all the above proofs, so that $\Psi_{0}$ contains all equations (\ref{eq:Fv+}), (\ref{eq:Fv*}) (for all nodes $ v $ satisfying (\ref{eq:od ehad})), and all equations (\ref{eq:wFv+}) and (\ref{eq:wFv*}) (for all nodes $ v,w $ satisfying (\ref{eq:nizkarti1}) and (\ref{eq:nizkarti2})).  The proof sequence $ \Psi_0 $ has size $\lambda(s,0)= O(s^4) $ and is and depth $O(\log s)$.

\induction
We wish to construct the proof-sequence $ \Psi_{i+1} $.
\para{Part (I).}
 Let $ v $ be any node in $ F $ such that
\[
         2^i < \deg(v) \le 2^{i+1} .
\]

\case 1
Assume that $ v=v_1+v_2 $. We show how to construct the proof of $ [F_{v}]=[F_{v_1}]+[F_{v_2}]$. Let $ m=2^i $. From the definition of $ [\cd] $ we have:
\begin{align}\label{eq:101}
        [F_v]=[F_{v_1+v_2}]  = \sum_{t\in {\cal B}_m(F_v) } [F_{t_1}]\cd[F_{t_2}]\cd[\partial t (F_{v_1+v_2})] \, .
\end{align}
Since $\deg(v_1)=\deg(v_2)=\deg(v)$, we also have
\begin{align}\label{eq: new1}
        [F_{v_e}]  = \sum_{t\in {\cal B}_m(F_{v_e}) } [F_{t_1}]\cd[F_{t_2}]\cd[\partial t (F_{v_e})], \quad \hbox{for } e\in \{0,1\} \, .
\end{align}

If $t\in {\cal B}_m(F_v)  $ then $ \deg(t)> m = 2^i $. Therefore, for any $t\in {\cal B}_m(F_v)  $, since $ \deg(v)\le 2^{i+1} $, we have $ \deg(v)-\deg(t) <  2^i$ and $ 2\deg(t)>\deg(v) $ and $ t\neq v $ (since $ t $ is a product gate). Thus, by induction hypothesis, the proof-sequence $ \Psi_i $ contains, for any $t\in {\cal B}_m(F_v)  $, the equations
\[
        [\partial t (F_{v_1+v_2})]  = [\partial t F_{v_1}]+[\partial t F_{v_2}] .
\]
Therefore, having $ \Psi_i $ as a premise, we can prove that (\ref{eq:101}) equals:
\begin{equation}\label{eq:201}
\begin{split}
   &       \sum_{t\in {\cal B}_m(F_v) } [F_{t_1}]\cd[F_{t_2}]\cd([\partial t F_{v_1}]+[\partial t F_{v_2}])\\
   & =       \sum_{t\in {\cal B}_m(F_v) } [F_{t_1}]\cd[F_{t_2}]\cd[\partial t F_{v_1}]    +
                                               \sum_{t\in {\cal B}_m(F_v) } [F_{t_1}]\cd[F_{t_2}]\cd[\partial t F_{v_2}].
\end{split}
\end{equation}
If $ t\in {\cal B}_m (F_v) $ and $ t\not\in F_{v_1} $ then $ [\partial t F_{v_1}] = 0 $. Similarly, if $ t\in {\cal B}_m (F_v) $ and $ t\not\in F_{v_2} $ then $ [\partial t F_{v_2}] = 0 $. Hence we can prove
\begin{equation}\label{eq: biggerB}
\sum_{t\in {\cal B}_m(F_v) }[\partial t F_{v_e}]= \sum_{t\in {\cal B}_m(F_{v_{e}}) }[\partial t F_{v_e}],\quad\mbox{for $ e=1,2 $}.
\end{equation}
Thus, using (\ref{eq: new1}) we have that  (\ref{eq:201}) equals:
\begin{equation}\label{eq:421}
\begin{split}
           \sum_{t\in {\cal B}_m(F_{v_1}) } [F_{t_1}]\cd[F_{t_2}]\cd[\partial t F_{v_1}]   +
                                               \sum_{t\in {\cal B}_m(F_{v_2}) } [F_{t_1}]\cd[F_{t_2}]\cd[\partial t F_{v_2}] \\  = [F_{v_1}]+[F_{v_2}].
                                          \end{split}
\end{equation}
The above proof of (\ref{eq:421}) from $ \Psi_i $ has size $ O(s^2\cd m(s,d)) $ and depth $ O(r(s,d)) $.

 \FullSpace
\case 2 Assume that $ v=v_1\cd v_2 $.
We wish to prove $ [F_{v}]=[F_{v_1}]\cd[F_{v_2}]$. Let $ m=2^i $. We assume without loss of generality that $ \deg(v_1)\ge \deg(v_2)$. By the definition of $ [\cd ] $, we have:
\begin{align}
[F_v] = [F_{v_1\cd v_2}] & =  \sum_{t\in{\cal B}_m(F_v)} [F_{t_1}]\cd[F_{t_2}]\cd[\partial t F_v].   \notag
\end{align}
If $ v\in {\cal B}_m(F_v) $, then $ {\cal B}_m = \{v\} $ and we have $\tr{F_{v}}=[F_{v_1}]\cd[F_{v_2}]\cd[\partial_{v} F_v]$. Since $[\partial_{v}F_{v}]=1$, this gives $[F_{v}]= [F_{v_{1}}]\cd[F_{v_{2}}]$, and we are done.

Otherwise, assume $ v\not\in {\cal B}_m (F_v) $. Then   $m=2^i<\deg(v_{1})$ (since, if $ \deg(v_{1})\le m $, then also $ \deg(v_{2})\le m $ and so by definition $ v\in {\cal B}_m(F_v) $). Because, moreover, $\deg(v_1)\leq 2^{i+1}$, we have
\begin{equation} \label{eq: new2}
[F_{v_{1}}]=                                    \sum_{t\in {\cal B}_m(F_{v_1})} [F_{t_1}] \cd [F_{t_2}] \cd [\partial t F_{v_1}]\,.
\end{equation}

Since $ \deg(v) \le 2^{i+1} $ and   $ \deg(t)>m=2^{i} $, for any $ t \in {\cal B}_m(F_v) $, we have
\[
          \deg(v) - \deg(t) \le 2^i \quad\text{ and } \quad 2\deg(t)>\deg(v).
\]
Since $ v\neq t $, by induction hypothesis, $ \Psi_i $ contains, for any $ t\in {\cal B}_m(F_v) $, the equation:
\begin{equation}
 [\partial t (F_{v_1\cd v_2})] = [\partial t F_{v_1}]\cd[F_{v_2}].\label{eq:076}
\end{equation}
Using (\ref{eq:076}) for all $ t \in {\cal B}_m(F_v) $, we can prove the following with a $ \PC(\F) $ proof of size $ O(s^2\cd m(s,d) ) $ and depth $O(r(s,d))$:
\begin{align}
               \sum_{t\in{\cal B}_m(F_v)} [F_{t_1}] \cd [F_{t_2}] \cd [\partial t F_v]
                                & =        \sum_{t\in{\cal B}_m(F_{v})} [F_{t_1}] \cd [F_{t_2}] \cd  [\partial t (F_{v_1\cd v_2})]   \notag\\
                                & =        \sum_{t\in {\cal B}_m(F_{v})} [F_{t_1}] \cd [F_{t_2}] \cd  ([\partial t F_{v_1}]\cd [F_{v_2}])  \notag \\
                                & =  [F_{v_2}]\cd       \sum_{t\in {\cal B}_m(F_{v})}[F_{t_1}] \cd [F_{t_2}] \cd  [\partial t F_{v_1}]. \label{eq:044}
\end{align}
Since ${\cal B}_{m}(F_{v_1})\subseteq {\cal B}_{m}(F_{v})$, we can conclude as in (\ref{eq: biggerB}) that
 \[
                                        \sum_{t\in {\cal B}_m(F_{v})} [F_{t_1}] \cd [F_{t_2}] \cd [\partial t F_{v_1}]  = \sum_{t\in {\cal B}_m(F_{v_1})} [F_{t_1}] \cd [F_{t_2}] \cd [\partial t F_{v_1}]\, .\]
Using (\ref{eq: new2}),
(\ref{eq:044}) equals $[F_{v_{2}}]\cd[F_{v_{1}}]$.
The above proof-sequence (using $ \Psi_i $ as a premise) has size $ O(s^2\cd m(s,d) ) $ and depth $  O(r(s,d))$.
\FullSpace

We now append $ \Psi_{i} $ with all proof-sequences of $ [F_{v}]=[F_{v_1}]+[F_{v_2}]$ for every $ v $ from Case 1, and all proof-sequences of $ [F_{v}]=[F_{v_1}]\cd[F_{v_2}]$ for every $ v $ from Case 2. We obtain a proof-sequence  $ \Psi'_{i+1} $ of size
\[
        \lambda(s,i+1)  \le O(s^3\cd m(s,d))+  \lambda(s,i),
\]
and depth $O(r(s,d)) $.

In Part (II), we extend $ \Psi'_{i+1} $ with more proof-sequences to obtain the final $ \Psi_{i+1} $.

\para{Part (II).}
Let $ v\neq w $ be a pair of nodes in $ F $ such that $ w\in F_v $  and assume that
\[
         2^{i}<\deg(v)-\deg(w) \le 2^{i+1} \text{\ \ and\ \ \ }  2\deg(w)>\deg(v).
\]
Let
\[  m = 2^i + \deg(w) .\]

\case 1
Suppose that $ v=v_1+v_2 $.  We need to prove
\begin{equation}\label{eq:stam baribua}
        [\partial w F_{v}]  = [\partial w F_{v_1}] + [\partial w F_{v_2}]
\end{equation}
based on $ \Psi_i $ as a premise. By construction of $ [\partial w F_{v} ] $,
\begin{align}
        [\partial w F_{v}]  & =       \sum_{t\in{\cal B}_m(F_v)} [\partial t F_v] \cd [\partial w F_{t_1}] \cd [F_{t_2}]\nonumber \\
                                & =       \sum_{t\in{\cal B}_m(F_v)} [\partial t (F_{v_1+v_2})] \cd [\partial w F_{t_1}] \cd [F_{t_2}] . \label{eq:072}
\end{align}
Since $\deg(v_1)=\deg(v_2)=\deg(v)$, we also have
\begin{equation}\label{eq: new3}
[\partial w F_{v_e}]   =       \sum_{t\in{\cal B}_m(F_{v_e})} [\partial t F_{v_e}] \cd [\partial w F_{t_1}] \cd [F_{t_2}],\quad\hbox{ for } e=1,2\,.
\end{equation}
Since $ m = 2^i +\deg(w) $, we have $ \deg(t)>2^i +\deg(w) $, for any $ t\in {\cal B}_m (F_v) $. Thus, by $ \deg(v)-\deg(w) \le 2^{i+1}$, we get that for any $ t\in {\cal B}_m (F_v) $:
\begin{gather*}
         \deg(v)-\deg(t)\le 2^i \text{\ \ \ and \ \ \ }  2\deg(t)>\deg(v),\text{\ \ \ and}
\\  t\neq v \text{\ (since $ t $ is a product gate)}.
\end{gather*}
Therefore, by induction hypothesis, for any $ t\in {\cal B}_m (F_v) $, $ \Psi_i $ contains the equation
 \[ [\partial t (F_{v_1+v_2})] = [\partial t F_{v_1}]+[\partial t F_{v_2}].\]
Thus, based on $ \Psi_i $, we can prove that (\ref{eq:072}) equals:
\begin{align}\notag
                                &        \sum_{t\in{\cal B}_m(F_v)} ([\partial t F_{v_1}]+[\partial t F_{v_2}]) \cd [\partial w F_{t_1}] \cd [F_{t_2}] \\
                         = &        \sum_{t\in{\cal B}_m(F_v)} [\partial t F_{v_1}] \cd  [\partial w F_{t_1}] \cd [F_{t_2}] +
                                                      \sum_{t\in{\cal B}_m(F_v)} [\partial t F_{v_2}] \cd  [\partial w F_{t_1}] \cd [F_{t_2}].\label{eq:235}
\end{align}
As in (\ref{eq: biggerB}), using (\ref{eq: new3}) we can derive the following from (\ref{eq:235}):
\[
\begin{split}
                                       \sum_{t\in{\cal B}_m(F_{v_1})} [\partial t F_{v_1}] \cd  [\partial w F_{t_1}] \cd [F_{t_2}] +
                                                      \sum_{t\in{\cal B}_m(F_{v_2})} [\partial t F_{v_2}] \cd  [\partial w F_{t_1}] \cd [F_{t_2}]
                                                \\
                                                =[\partial w F_{v_1}] + [\partial w F_{v_2}] .
\end{split}
\]
The proof of (\ref{eq:stam baribua}) from $ \Psi_i $ shown above has size $ O(s^2\cd m(s,d)) $ and depth $ O(r(s,d)) $.
\FullSpace


\case 2
Suppose that $v = v_1\cd v_2$. We assume without loss of generality that $ \deg(v_1)\ge\deg(v_2) $ and show how to prove
\begin{equation}\label{eq:stam bashlishit}
[\partial w F_{v}]  = [\partial w F_{v_1}] \cd [F_{v_2}] .
\end{equation}
By construction of $ [\partial w F_{v} ] $:
\begin{align}\notag
        [\partial w F_{v}]  & =       \sum_{t\in{\cal B}_m(F_v)} [\partial t F_v] \cd [\partial w F_{t_1}] \cd [F_{t_2}] \\
                                & =       \sum_{t\in{\cal B}_m(F_v)} [\partial t (F_{v_1 \cd  v_2})] \cd [\partial w F_{t_1}] \cd [F_{t_2}] . \label{eq:224}
\end{align}
Similar to the previous case, for any $ t\in {\cal B}_m (F_v) $ we have
\[ \deg(v)-\deg(t)< 2^i \text{\ \ \ and \ \ \ }  2\deg(t)>\deg(v).\]

If $ v\in {\cal B}_m(F_v) $ then $ {\cal B}_m(F_v) =\{v\}$ and so (\ref{eq:224}) is simply $ \partial v F_{v}  \cd  [\partial w F_{v_1}] \cd  [F_{v_2}] =   [\partial w F_{v_1}] \cd  [F_{v_2}]$ as required. Otherwise, assume that $ v\not\in {\cal B}_{m}(F_{v}) $.
By induction hypothesis, $ \Psi_i $ contains the following equation, for any $t\in {\cal B}_m (F_v) $:
\[
        [\partial t (F_{v_1\cd v_2}) ] = [\partial t F_{v_1}] \cd [F_{v_2}].
\]
Using $ \Psi_i $ as a premise, we can then prove that (\ref{eq:224}) equals:
\begin{multline}\label{eq:056}
                                      \sum_{t\in{\cal B}_m(F_v)} \left([\partial t F_{v_1}] \cd [F_{v_2}]\right) \cd [\partial w F_{t_1}] \cd [F_{t_2}]
                                 = \left(       \sum_{t\in{\cal B}_m(F_v)} [\partial t F_{v_1}] \cd [\partial w F_{t_1}] \cd [F_{t_2}]\right) \cd [F_{v_2}].\ \ \ \
\end{multline}
As in (\ref{eq: biggerB}), we have
$\sum_{t\in{\cal B}_m(F_v)} [\partial t F_{v_1}] \cd [\partial w F_{t_1}] \cd [F_{t_2}]
= \sum_{t\in{\cal B}_m(F_{v_1})} [\partial t F_{v_1}] \cd [\partial w F_{t_1}] \cd [F_{t_2}]$. Also, since $v_{1}\cd v_{2}=v\not \in {\cal B}_{m}(F_{v})$, we have $\deg(v_{1})>m= 2^{i}+\deg(w)$, and so
\begin{equation} \label{eq: new4}
        [\partial w F_{v_1}]    =
                \sum_{t\in{\cal B}_m(F_{v_1})} [\partial t F_{v_1}] \cd [\partial w F_{t_1}] \cd [F_{t_2}]\,.
\end{equation}
Hence by (\ref{eq: new4}),
(\ref{eq:056}) equals
$[\partial_{w}F_{v_{1}}]\cd [F_{v_{2}}]$.

The above proof of (\ref{eq:stam bashlishit}) from $ \Psi_i $ has size $ O(s^2\cd m(s,d)) $ and depth $O(r(s,d)) $.
\FullSpace

We now append $ \Psi'_{i} $ from Part (I) (which also contains $ \Psi_i $) with all proof-sequences of $ [\partial w F_{v}]  = [\partial w F_{v_1}] + [\partial w F_{v_2}] $ in Case 1 and all proof sequences $ [\partial w F_{v}]  = [\partial w F_{v_1}] \cd [F_{v_2}]  $ in Case 2, above.  We obtain the proof-sequence  $ \Psi_{i+1} $ of size
\[
        \lambda(s,i+1)  \le O(s^4\cd m(s,d))+  \lambda(s,i),
\]
and depth $O(r(s,d)) $, as required.

\section{Proofs with division}
In this section, we investigate proofs with divisions (as defined in Section \ref{results: divison}), and prove Theorem \ref{thm: main divisions}.

Let us first turn the reader's attention to some peculiarities of the system $\PI$:

\begin{itemize}\item
 We must be careful not to divide by zero in $\PI$. Hence $\PI$ proofs are \emph{not  closed under substitution}. It may happen that $F(z)=G(z)$ has a $\PI$ proof $S$, $F(0)=G(0)$ is defined \iddosml{(according to  the definition in  Section \ref{results: divison})}, but substituting $z$ by $0$ throughout $S$ is not a correct \iddosml{$\PI$} proof \iddosml{(note that a $\PI$ proof is defined so that every circuit in the proof is defined)}.
\item Whereas $\PI$ is  sound with respect to  polynomial identities, it behaves erratically if one considers \emph{proofs from assumptions}. For example, $\PI$ augmented with the axiom $x^{2}-x= 0$ proves that $1=0$.
\Commentdel{Perhaps this comment is not essential: what you show is just that if you add a false assumption with respect to polynomial-identities you get a contradiction. But maybe it's still interesting...?}

\item Prima facie, it is not clear whether a $\PI$ proof of \iddosml{the} equation $F=G$ can be transformed to a proof of $F=G$ \iddosml{that} contains only the variables contained in $F$ and $G$. See Remark \ref{rem: aux}.
\end{itemize}

In the sequel, we will consider substitution instances of equations we prove in $ \PI $. For instance, we will need to substitute $ 0 $ for some variables in the matrix $ X $, when proving equations involving the circuit $ \Det(X) $, and we have to guarantee that our proofs remain correct \iddosml{$\PI$ proofs} after such a substitution.

There are two \emph{general} ways how to securely handle substitutions in $\PI$ proofs. The first one is to substitute only \emph{algebraically independent elements}: replacing variables $z_{1},\dots, z_{k}$ with circuits $H_{1},\dots, H_{k}$ can never produce an undefined proof, if the circuits compute algebraically independent rational functions. The second way is offered in Corollary \ref{cor: subst}. This corollary allows one to construct a new proof of $F(0)=G(0)$ from the proof of $F(z)=G(z)$.  Note, however, that in Corollary \ref{cor: subst} the new proof will be polynomial only if the syntactic degree of $F$ and $G$ is polynomial.

Since the determinant circuit $ \Det $ has an exponential syntactic degree (see Section \ref{sec: determinant}), the second approach to substitution is not suitable for the $\Det$ identities. The first approach, which substitutes algebraically independent elements, often cannot  be used either, because we need to substitute variables by field elements. Therefore, \iddosml{in some cases we must} simply make sure \iddosml{in an ad hoc manner} that the specific substitutions used do not make the proofs undefined. To this end, we use the following terminology: let $\overline {x}= (x_{1},\dots, x_{k})$ be a list of variables and $U= (U_{1},\dots, U_{k})$ a list of circuits with divisions. We say that a circuit $F(\overline {x})$ with divisions is \emph{defined \iddosml{for}} $\overline {x}=U$, if {no divisions by zero occur in $F(U)$}; likewise, we say that a $\PI$ proof $S$ is defined for $\overline {x}=U$ (or simply defined, if the context is clear),  if every circuit in $S$ is defined for $\overline {x}=U$.

\subsection{Eliminating division gates over large enough fields}
 We first prove Theorem \ref{thm: main divisions} under the assumption that the underlying field $ \F $ is large. To eliminate division gates from proofs, we follow the construction of Strassen \cite{Str73}, in which an inverse gate is replaced by a truncated power series. In order to eliminate division gates over small fields, additional work will be needed (see Section \ref{sec:large}).

Let $F$ be a circuit with divisions. We say that $F$ is a circuit \emph{with simple divisions}, if  for every inverse gate $v^{-1}$ in $F$ the circuit  $F_{v}$ does not contain inverse gates. A size $ s $ circuit with division $F$ can be converted to a size $O(s) $ circuit of the form $F_{1}\cd F_{2}^{-1}$, where $F_{1}, F_{2}$ do not contain inverse gates, as follows.

For every node $v$ introduce two nodes $\Den(v)$ and $\Num(v)$ which will compute the numerator and denominator of the rational function computed by $v$, respectively, as follows:

\begin{enumerate}[(i)]
\item If $v$ is an input node of $F$, let $\Num(v):= v$ and $\Den(v)= 1$.
\item If $v=u^{-1}$, let $\Num (v):= \Den(u)$ and $\Den(v):=\Num(u)$.
\item If $v=u_{1}\cdot u_{2}$, let $\Num(v):=\Num(v_{1})\cdot \Num(v_{2})$ and $\Den(v):=\Den(v_{1})\cdot \Den(v_{2})$.
\item If $v=u_{1}+ u_{2}$, let $\Num(v):=\Num(u_{1})\cdot \Den(u_{2})+ \Num(u_{2})\cdot \Den(u_{1})$ and $\Den(v):=\Den(u_{1})\cdot \Den(u_{2})$.
\end{enumerate}

Let $\Num(F)$ and $\Den(F)$ be the circuits with the output node $\Num(w)$ and $\Den(w)$, respectively, where $w$ is the output node of $F$.  The following lemma will be used in Proposition \ref{prop:division}:

\begin{lemma}\label{lem:Den}
{Let $ \F $ be any field.}
\begin{enumerate}
\item \label{Den - jedna} If $F$ is a size $ s $ circuit with division, then
\[
        F=\Num(F) \cd \Den(F)^{-1}
\]
has a $\PI(\F)$ proof of size $O(s)$.
The proof is defined whenever $F$ is defined.

\item \label{Den - dva}
Let $F,G$ be circuits with division. Assume that $F=G$ has a $\PI(\F)$ proof of size  $s$. Then
$\Num(F)\cd\Den(F)^{-1}= \Num(G)\cd\Den(G)^{-1}$ has a $\PI(\F)$ proof of size $O(s)$ such that every circuit in the proof is a circuit with \emph{simple divisions}.
\end{enumerate}
\end{lemma}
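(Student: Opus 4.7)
The plan is to handle the two parts separately, both by structural or proof-theoretic induction.

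For part \ref{Den - jedna}, I would prove by induction on the structure of $F$ the following more general statement: for every node $v$ of $F$, the equation $F_v = \Num(v) \cd \Den(v)^{-1}$ has a $\PI(\F)$ proof of size $O(|F_v|)$, and every circuit appearing in that proof is defined as long as $F$ is defined. The base case, where $v$ is an input, reduces to $v = v \cd 1^{-1}$, a constant-size consequence of A9 together with $1 = 1 \cd 1^{-1}$ from axiom D. The three inductive cases follow the clauses (ii)--(iv) of the definition of $\Num, \Den$. For $v = u^{-1}$, I would invert the inductive equation $u = \Num(u) \cd \Den(u)^{-1}$ using the two identities $(ab)^{-1} = b^{-1} a^{-1}$ and $(a^{-1})^{-1} = a$, which the remark preceding the lemma guarantees have linear-size $\PI$ proofs; this yields $u^{-1} = \Den(u) \cd \Num(u)^{-1}$, matching the definitions $\Num(v)=\Den(u)$, $\Den(v) = \Num(u)$. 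The product case is immediate from the same two identities plus A4 and A5. For the sum case, I would use the schoolbook identity $a b^{-1} + c d^{-1} = (ad + bc) (bd)^{-1}$, which has a constant-size $\PI$ proof from A1--A10 and D.

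To see that every circuit in this proof is defined whenever $F$ is defined, I would carry out a side structural induction, external to the proof, showing that $\Den(u)$ is a non-zero rational function for every $u \in F$: this is immediate for inputs, sums and products, while for $u = w^{-1}$ one has $\Den(u) = \Num(w)$, whose non-vanishing is exactly the condition that $F$ is defined at the gate $w^{-1}$. The intermediate circuits $(\Num(u) \Den(u)^{-1})^{-1}$ appearing in the $v = u^{-1}$ case are defined for the same reason.

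For part \ref{Den - dva}, the plan is a line-by-line translation of the given proof. Let $F_1 = G_1, \ldots, F_k = G_k$ be the $\PI(\F)$ proof of $F = G$. Every line $F_i = G_i$ gets replaced by $\Num(F_i) \cd \Den(F_i)^{-1} = \Num(G_i) \cd \Den(G_i)^{-1}$, and between consecutive lines I would insert a constant-size derivation simulating the axiom or rule used in the original step. Since the operators $\Num$ and $\Den$ produce division-free circuits by construction, every circuit occurring in the translated proof has the form (division-free) $\cd$ (division-free)$^{-1}$, so it is a circuit with simple divisions. The simulation of each axiom/rule is a short algebraic computation: for example C1, C2 and A1 become tautologies after unfolding $\Num$ and $\Den$; A2--A9 unfold to identities provable from the ring axioms; A10 is preserved because $\Num$ and $\Den$ of a field element yield themselves (resp.\ $1$); axiom D becomes $\Num(F) \Den(F) \cd (\Den(F) \Num(F))^{-1} = 1$, which is a single application of D to the division-free circuit $\Num(F) \Den(F)$ (whose value is non-zero exactly when $F$ is defined with $F^{-1}$ defined); and R1--R4 translate using the identities established in part \ref{Den - jedna}. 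A final application of part \ref{Den - jedna} to the last line, together with R2, yields the desired equation.

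The main obstacle I expect is not any single step but the careful bookkeeping required for both items simultaneously: ensuring that each axiom/rule translation has truly $O(1)$ size overhead so that the total blowup remains $O(s)$, and verifying that every intermediate circuit in the translation is defined by keeping track of which polynomials $\Den(u)$, $\Num(u)$ are non-zero. The sum case in part \ref{Den - jedna} and the translation of rule R3 in part \ref{Den - dva} are the places where most of the algebra lives, but neither poses a conceptual difficulty beyond applying the two identities $(ab)^{-1} = b^{-1} a^{-1}$ and $(a^{-1})^{-1} = a$ and the ring axioms.
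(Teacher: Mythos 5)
Your proof follows exactly the route the paper indicates (structural induction on $F$ for part (i), induction on the number of proof lines for part (ii)) and correctly supplies the details the paper omits, including the two key side observations: that every intermediate circuit in the translated proof has the form $(\text{division-free})\cdot(\text{division-free})^{-1}$ and hence simple divisions, and that every $\Den(u)$ is a non-zero rational function whenever $F$ is defined. One small quibble: deriving $F_v = \Num(v)\cdot\Den(v)^{-1}$ for \emph{every} node $v$ gives one proof line of size $O(|F_v|)$ per node, so the natural bound on total proof size is $O(s^2)$ rather than the stated $O(s)$; this is a discrepancy already present in the lemma's statement, and it is harmless for all downstream applications, which only require $\poly(s)$.
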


\begin{proof}
Part \ref{Den - jedna} is proved by straightforward induction on the size of $F$ and part \ref{Den - dva} by induction on the number of proof lines. We omit the details.
\end{proof}

\HalfSpace

Let $k$ be a fixed natural number and define $\inv_{k}(1-z)$ to be the circuit
\[\inv_k(1-z):=1+z+ \dots +z^k\,.\]
In other words, $\inv_k(1-z) $ is the first $k+1$ terms of the power series expansion of $1/(1-z)$ at $ z=0 $.

Let $F$ be a division-free circuit and let $a:=\wh{F^{(0)}} $. Assume that $ a\neq 0$, that is, the polynomial computed by $F$ has a nonzero constant term, and let $\Inv_{k}(F)$ denote the circuit
\begin{align*}
        \Inv_{k}(F)&:=a^{-1}\cd \inv_{k}(a^{-1}F) \\
                & = a^{-1}\cd\left(1+ (1-a^{-1}F)+ (1-a^{-1}F)^{2}+\dots+ (1-a^{-1}F)^{k}\right)
                \, .
\end{align*}
Note that $a^{-1}$ is a field element and hence $\Inv_{k}(F)$ is a circuit \emph{without division}. 
{The following lemma shows that $\Inv_k(F) $ can \emph{provably} serve as the inverse polynomial of $ F $ ``up to the $ k $ power'':}

\begin{lemma}\label{lem:inv}
Let $ \F $ be any field and let $F$ be a size $ s $ circuit without division such that $\widehat {F^{(0)}}\not=0$. Then the following have $\PC(\F)$ proofs of size $s\cdot \poly(k)$:
\begin{eqnarray}(F\cdot \Inv_{k} F )^{(0)}&=& 1\\
(F\cdot \Inv_{k} F )^{(i)}&=& 0, \,\, \hbox{for}\,\, 1\leq i\leq k\,.
\end{eqnarray}
\end{lemma}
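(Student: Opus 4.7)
The plan is to identify the explicit polynomial identity hiding behind the definition of $\Inv_{k}(F)$ and then pass to homogeneous parts using the machinery already developed in Section~\ref{sec:bounding}. Let $a:=\wh{F^{(0)}}\in\F$, set $G:=1+(-a^{-1})\cd F$, and observe that by the definition of $\Inv_{k}$ we have
\[
\Inv_{k}(F)\;=\;a^{-1}\cd\bigl(1+G+G^{2}+\dots+G^{k}\bigr),
\]
while $a^{-1}F= 1-G$. The familiar geometric/telescoping identity $(1-G)(1+G+\dots+G^{k})=1-G^{k+1}$ then suggests the central claim: we shall give a $\PC(\F)$ proof of
\begin{equation}\label{plan:eq1}
F\cd \Inv_{k}(F)\;=\;1-G^{k+1}
\end{equation}
of size $s\cd\poly(k)$. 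This is a routine application of distributivity and associativity (axioms A5, A6), plus the field identity $a\cd a^{-1}=1$ (axiom A10) to collapse the leading $a^{-1}$; the proof is linear in $k$ up to polynomial overhead.

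The second ingredient is to show that $G^{(0)}=0$ has a short $\PC$ proof. By the recursive definition of $H^{(0)}$, we have $G^{(0)}=1+(-a^{-1})\cd F^{(0)}$, and $F^{(0)}$ is a circuit consisting only of field elements (by construction of $F^{(k)}$). Applying axiom A10 repeatedly, we reduce $F^{(0)}$ to the constant $a$ by a proof of size $O(s)$; another application of A10 gives $1+(-a^{-1})\cd a=0$.

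With $G^{(0)}=0$ available, I would then prove by induction on $m=1,\dots,k+1$ that $(G^{m})^{(i)}=0$ for every $0\le i<m$, with each induction step of size $\poly(k)$. The inductive step uses part~(ii) of Lemma~\ref{lem:bounded degree}:
\[
(G^{m+1})^{(i)}\;=\;\sum_{j=0}^{i} G^{(j)}\cd (G^{m})^{(i-j)},
\]
provably of size $\poly(s,k)$. For $i\le m$ every summand vanishes: when $j=0$ we use $G^{(0)}=0$, and when $j\ge 1$ we have $i-j\le m-1<m$, so $(G^{m})^{(i-j)}=0$ by the induction hypothesis. Taking $m=k+1$, we conclude $(G^{k+1})^{(i)}=0$ for all $0\le i\le k$. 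Combined with the trivial identities $1^{(0)}=1$ and $1^{(i)}=0$ for $i\ge1$, Lemma~\ref{lem:bounded degree}(i) yields $(1-G^{k+1})^{(0)}=1$ and $(1-G^{k+1})^{(i)}=0$ for $1\le i\le k$, all with proofs of size $s\cd \poly(k)$.

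To finish, apply Proposition~\ref{prop: main hom}(i) to \eqref{plan:eq1}: from the $\PC$ proof of size $s\cd\poly(k)$ of $F\cd\Inv_{k}(F)=1-G^{k+1}$ we obtain a $\PC$ proof of $(F\cd \Inv_{k}F)^{(i)}=(1-G^{k+1})^{(i)}$ of size $s\cd\poly(k)$ for each $i$, and chaining with the equations from the previous step yields the two desired conclusions. The only real obstacle is bookkeeping: one must verify that the inductive construction of the proofs of $(G^{m})^{(i)}=0$ does not accumulate a $\poly(k)^{k}$ blow-up; this is arranged by sharing subproofs across different $m$ and $i$, exploiting the linearity of the recursion in Lemma~\ref{lem:bounded degree}(ii), so that the total proof size stays $s\cd\poly(k)$ as required.
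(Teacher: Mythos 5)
Your proposal is correct and takes essentially the same approach as the paper: introduce $G=1-a^{-1}F$, prove the telescoping identity $F\cdot\Inv_k(F)=(1-G)(1+G+\dots+G^k)=1-G^{k+1}$, and then kill the low-degree homogeneous parts of $G^{k+1}$ by induction from $G^{(0)}=0$ using Lemma~\ref{lem:bounded degree} and Proposition~\ref{prop: main hom}. You merely spell out the induction on $(G^{m})^{(i)}=0$ that the paper compresses into one sentence, and the size bookkeeping you flag is automatic since each equation is a proof line that can be reused.
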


\begin{proof}
Let $z $ abbreviate the circuit $ 1-a^{-1}F$. Then \iddosml{we can easily prove} $F= a(1-z)$ and \iddosml{by definition}  $\Inv_{k}(F)= a^{-1}(1+z+z^{2}+\dots +z^{k})$.
By elementary rearrangement, we can prove
\[ F\cdot \Inv_{k}(F) =  (1-z)(1+z+z^{2}+\dots z^{k})= 1-z^{k+1}\,.
\]
By Lemma \ref{lem:bounded degree}, $(F\cdot \Inv_{k}(F))^{(0)}= 1 - (z^{k+1})^{(0)}$ and $(F\cdot \Inv_{k}(F))^{(i)}=  (z^{k+1})^{(i)}$, for $i>0$. It is therefore sufficient to prove for every $i\leq k$, $(z^{k+1})^{(i)}=0$. This follows by induction using Lemma \ref{lem:bounded degree} and the fact that $z^{(0)}= 0$.
 \end{proof}

The dependency on the field comes from the following fact, which follows from the Schwartz-Zippel lemma \cite{Sch80,Zip79}:

\begin{fact*} Let $f_{1},\dots , f_{s}\in \F[X] $ be non-zero polynomials of degree $\leq d$, where $X= \{ x_{1},\dots x_{n}\}$. Assume that $|\F|> sd $.
Then there exists $\bar a\in \F^{n}$ such that  $f_{i}(\bar a)\not=0$ for every $i\in \{1,\dots ,s\}$.
\end{fact*}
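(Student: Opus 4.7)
The plan is to reduce this to a single application of the Schwartz--Zippel lemma by considering the product polynomial. First I would form the polynomial $f := f_{1}\cdot f_{2}\cdots f_{s} \in \F[X]$. Since $\F[X]$ is an integral domain and each $f_{i}$ is nonzero, $f$ is itself a nonzero polynomial. Moreover, the degree is subadditive under products, so $\deg f \leq \sum_{i=1}^{s}\deg f_{i} \leq sd$.

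Next I would invoke the Schwartz--Zippel lemma in the following form: if $g\in \F[x_{1},\dots,x_{n}]$ is a nonzero polynomial of degree $D$ and $S\subseteq \F$ satisfies $|S|>D$, then $g$ has a non-root in $S^{n}$ (indeed, the number of roots in $S^{n}$ is at most $D\cdot |S|^{n-1}<|S|^{n}$). Applying this with $g=f$, $D= sd$ and $S= \F$, the hypothesis $|\F|>sd$ yields some $\bar a\in \F^{n}$ with $f(\bar a)\neq 0$.

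Finally, from $f(\bar a)= f_{1}(\bar a)\cdots f_{s}(\bar a)\neq 0$ and the fact that $\F$ has no zero-divisors, I conclude $f_{i}(\bar a)\neq 0$ for every $i\in\{1,\dots,s\}$, as required. There is no real obstacle here; the only step requiring care is the degree accounting, which gives the bound $sd$ and thus the precise field-size hypothesis stated in the Fact.
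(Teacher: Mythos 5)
Your proof is correct and is the standard argument the paper is implicitly invoking when it says the Fact ``follows from the Schwartz--Zippel lemma'': multiply the $f_i$ together, bound the degree by $sd$, apply Schwartz--Zippel, and use that $\F$ has no zero divisors. (One pedantic point: if $\F$ is infinite, take $S$ to be any finite subset of size $>sd$ rather than $S=\F$ itself, so the counting bound $D|S|^{n-1}$ is meaningful; this does not affect the conclusion.)
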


\begin{proposition} \label{prop:division}
There exists a polynomial $p$ such that the following holds.
Let $F,G$ be circuits without division of syntactic degree at most $d$. Assume that $F=G$ has a $\PI(\F)$ proof with divisions of size at most $s$ and suppose that $\vert \F\vert > 2^{\Omega(s)}$.  Then $F=G$ has a $\PC(\F)$ proof of size $s\cdot p(d)$.
\end{proposition}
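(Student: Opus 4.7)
The plan is to adapt Strassen's classical division-elimination technique \cite{Str73} to the proof-theoretic setting. Given the $\PI(\F)$ proof $S$ of $F=G$ of size $s$, I would carry out three stages.

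\textbf{Stage 1 (simple divisions).} Apply Lemma \ref{lem:Den}(ii) to transform $S$ into an $O(s)$-size $\PI(\F)$ proof $S_{0}$ in which every circuit has only simple divisions, and each proof line is of the form $\Num(L)\cd\Den(L)^{-1}=\Num(R)\cd\Den(R)^{-1}$ with $\Num,\Den$ division-free.

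\textbf{Stage 2 (shift the denominators).} Let $D_{1},\dots,D_{m}$ be the division-free denominator circuits that appear inverted in $S_{0}$; each $\wh{D_{j}}$ is a nonzero polynomial of degree at most $2^{O(s)}$, and $m\le s$. Using the hypothesis $|\F|>2^{\Omega(s)}$, apply the stated Schwartz--Zippel fact to obtain $\bar a\in \F^{n}$ such that $\wh{D_{j}}(\bar a)\neq 0$ for every $j$. Substituting $\bar x\mapsto \bar x+\bar a$ throughout $S_{0}$ then yields a valid $\PI(\F)$ proof $S_{1}$ of $F(\bar x+\bar a)=G(\bar x+\bar a)$ in which every inverted circuit has nonzero constant term.

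\textbf{Stage 3 (replace inverses by $\Inv_{d}$ and homogenize line by line).} In every circuit appearing in $S_{1}$, syntactically replace each inverse gate $u^{-1}$ by the division-free circuit $\Inv_{d}(u)$ (well-defined by Stage 2), producing division-free ``shadow'' circuits $\widetilde L,\widetilde R$ for each proof line $L=R$. I would then build, by induction on the proof lines of $S_{1}$, a $\PC(\F)$ proof of $\widetilde L^{(i)}=\widetilde R^{(i)}$ for every $0\le i\le d$, each of size $\poly(d)$. Axiom D becomes $u\cd \Inv_{d}(u)=1$, and its homogeneous parts are provably equal by Lemma \ref{lem:inv}. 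Axioms A1--A10 lift straightforwardly because they are polynomial identities that commute with $(\cdot)^{(i)}$, and the rules R1--R4 are handled via Lemma \ref{lem:bounded degree}, combining the already-obtained proofs $\widetilde L_{e}^{(j)}=\widetilde R_{e}^{(j)}$ for the hypotheses into a proof of the conclusion. At the last line, $\widetilde{F(\bar x+\bar a)}=F(\bar x+\bar a)$ and $\widetilde{G(\bar x+\bar a)}=G(\bar x+\bar a)$ since $F,G$ are division-free; summing $i=0,\dots,d$ via Lemma \ref{lem:bounded degree2} gives a $\PC(\F)$ proof of $F(\bar x+\bar a)=G(\bar x+\bar a)$. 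A final substitution $\bar x\mapsto \bar x-\bar a$ (safe in $\PC$) yields the desired $\PC(\F)$ proof of $F=G$ of size $s\cd p(d)$.

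The main obstacle is the bookkeeping of Stage 3, in particular handling axiom D: after inverse-replacement it is no longer an identity, so one cannot simply copy the line; only the homogeneous slices up to degree $d$ coincide, and Lemma \ref{lem:inv} is designed precisely to certify this with a short $\PC$ proof. The remainder of the translation is a mechanical induction over proof lines, with per-line blow-up $\poly(d)$ arising from the homogenization machinery; the exponential lower bound on $|\F|$ is forced entirely by the Schwartz--Zippel step, since intermediate denominator polynomials in $S_{0}$ can have exponential degree in $s$.
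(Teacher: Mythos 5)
Your proposal is correct and follows essentially the same route as the paper's proof: reduce to simple divisions via Lemma \ref{lem:Den}, use Schwartz--Zippel to find a point where all inverted denominators are nonzero, replace inverse gates by $\Inv_{d}$, and push homogeneous slices through the proof line by line, with axiom D handled via Lemma \ref{lem:inv} and the remaining lines via the homogenization machinery. The only cosmetic difference is that you make the affine shift $\bar x\mapsto\bar x+\bar a$ and the final un-shift explicit, whereas the paper absorbs this into a ``without loss of generality $b=\bar 0$'' step (noting that such a substitution is algebraically independent and hence keeps the $\PI$ proof defined); the two are equivalent.
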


\begin{proof}
Let $S$ be a $\PI(\F)$ proof of $F=G$ of size $s$. By Lemma \ref{lem:Den}, we can assume that the proof contains only simple divisions. 
\iddosml{Consider the set \(\mathcal{U}\) of all nodes \(u^{-1}\) occurring in some circuit in \(S\), and let  $\cal C$ be the set of all circuits computed by some node \(u\), for
$u^{-1}\in\mathcal U$}. Then $\vert{\cal C}\vert\leq s$ and $\deg\iddosml{(H)} \leq  2^{\Omega(s)}$ for every $H\in \cal C$, since $H$ has size at most $s$. By the Fact above, there exists a point $b\in \F^{n}$ such that $\wh{H}(b)\not=0$ for every $H\in {\cal C}$, where $n$ is the number of variables in $S$.

Without loss of generality, we can assume that $b= \langle 0,\dots ,0\rangle$ \iddodefer{(otherwise, we can substitute in $S$ the variables with a suitable linear transformation; since the substitution for each variable is algebraically independent, the proof is still defined under this substitution)}.\mardel{I hope this is correct.} Let $S^{\prime}$ be the sequence of equations obtained by replacing every circuit $(H)^{-1}$ in $S$ by $\Inv_{k}(H)$.
\ignore{(That is, remove the inverse gate, and replace it by the output gate of $\Inv_{k}(H)$.)}
The sequence $S^{\prime}$ does not contain divisions, but is not yet a correct proof, since the translation $F\cdot \Inv_{k}( F)= 1$ of the axiom D is not a \iddosml{legal axiom anymore}. However, we claim that for every equation $F_{1}=F_{2}$ in $S^{\prime}$ and every $k\leq d$, $F_{1}^{(k)}=G_{1}^{(k)}$ has a $\PC$ proof of size $s\cdot p(d)$ for a suitable polynomial $p$. The proof is constructed by induction on the length of $S^{\prime}$, as in Proposition \ref{prop: main hom}. The case of the axiom D follows from Lemma \ref{lem:inv}:
$(F\cdot \Inv_{k}( F))^{(0)}=1 = 1^{(0)}$ and $(F\cdot \Inv_{k}( F))^{(j)}=0 = 1^{(j)}$, if $j>0$.
Consequently, we obtain proofs of $F^{(k)}=G^{{(k)}}$, for every $k\leq d$. By Lemma \ref{lem:bounded degree2}, we have $\PC(\F)$ proofs of $F= \sum_{k\leq d} F^{(k)}$, $G= \sum_{k\leq d} G^{(k)}$. This gives $\PC(\F)$ proofs of $F=G$ with the correct size.
\end{proof}

Another application of Schwartz-Zippel lemma \iddosml{we shall need} is the following:

\begin{proposition}\label{rem: aux} Let $\F$ be an arbitrary field and assume that $F=G$ has a $\PI(\F)$ proof of size $s$. Then there exists a $\PI(\F)$ proof of $F=G$ of size $O(s^{2})$ which contains only the variables appearing in $F$ or $G$.
\end{proposition}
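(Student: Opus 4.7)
The plan is to eliminate the auxiliary variables $y_1,\ldots,y_k$ of the proof $S$ (those variables occurring in $S$ but not in $F$ or $G$) by substituting $y_j\mapsto c_j(x_1)$ for carefully chosen polynomials $c_j\in\F[x_1]$ of degree $O(s)$, where $x_1$ is any variable that does appear in $F$ or $G$. I first dispose of the trivial case in which $F$ and $G$ contain no variables at all: then $\wh F$ and $\wh G$ are equal field constants, and a routine induction on the circuit structure produces a $\PI(\F)$ proof of $F=\wh F=\wh G=G$ of size $O(s)$ (using axioms A10 and D together with Lemma \ref{lem:Den}(i) to handle divisions). Henceforth I assume that some variable, which I call $x_1$, appears in $F$ or $G$.

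By Lemma~\ref{lem:Den}(i) and (ii), at a constant-factor cost in size, I may assume that every inverse base $H$ occurring in the (transformed) proof that involves some $y_j$ is a division-free circuit; hence $\wh H\in\F[x_1,\ldots,x_n,y_1,\ldots,y_k]$ is a nonzero polynomial, and moreover $\deg\wh H\le 2^{O(s)}$ because $H$ has size at most $s$. Let $\Pi$ be the product of all such $\wh H$; this is a nonzero polynomial of total degree at most $d=2^{O(s)}$.

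View $\Pi$ as a polynomial in $y_1,\ldots,y_k$ over the infinite field $K=\F(x_1,\ldots,x_n)$, and set $\mathcal{S}_D:=\{c\in\F[x_1]:\deg c\le D\}\subseteq K$. Since $|\mathcal{S}_D|=|\F|^{D+1}\ge 2^{D+1}$, choosing $D=O(s)$ with a sufficiently large hidden constant ensures $|\mathcal{S}_D|>d$. The Schwartz--Zippel lemma (in the same spirit as the fact used before Proposition~\ref{prop:division}, but applied to a finite subset of the infinite field $K$) then yields $c_1,\ldots,c_k\in\mathcal{S}_D$ such that $\Pi(x,c_1(x_1),\ldots,c_k(x_1))\neq 0$ in $K$, and hence $\wh H(x,c_1(x_1),\ldots,c_k(x_1))\neq 0$ in $\F[x_1,\ldots,x_n]$ for every inverse base $H$. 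Performing the substitution $y_j\mapsto c_j(x_1)$ throughout $S$ therefore produces a valid $\PI(\F)$ proof of $F=G$ (the conclusion is unchanged since $F$ and $G$ contain no $y_j$).

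It remains to bound the size of the substituted proof. Each $c_j(x_1)$ is a univariate polynomial of degree at most $D=O(s)$, and thus admits a circuit (by Horner's scheme) of size $O(s)$. Within a single circuit of a given proof line, all leaves labelled $y_j$ may point to the same copy of $c_j(x_1)$, so that line grows by at most $O(s)\cdot|J_l|$, where $J_l$ is the set of distinct auxiliary variables appearing in line $l$. Since every leaf labelled $y_j$ contributes at least $1$ to the original size $s$ of $S$, one has $\sum_j L_j=\sum_l|J_l|\le s$, where $L_j:=|\{l:y_j\in J_l\}|$. Summing over proof lines, the substituted proof has size at most $s+\sum_j L_j\cdot O(s)=O(s^2)$, as required. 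The main technical point is the Schwartz--Zippel step over the infinite field $\F(x_1,\ldots,x_n)$, which supplies low-degree polynomials $c_j\in\F[x_1]$ that keep every inverse base nonzero; the rest is bookkeeping.
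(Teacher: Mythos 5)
Your proof is correct and takes essentially the same approach as the paper: substitute the auxiliary variables by $O(s)$-size circuits in a single original variable, chosen via Schwartz--Zippel over the field of rational functions so that all inverse bases remain nonzero, then account for the $O(s^2)$ size by sharing one copy of each substituted circuit per proof line. The only cosmetic difference is that you draw substitutions from $\{c \in \F[x_1] : \deg c \le O(s)\}$ (circuits of size $O(s)$ via Horner), whereas the paper draws them from $\{x, x^2, \ldots, x^{2^{cs}}\}$ (circuits of size $O(s)$ via repeated squaring); both pools have more than $2^{\Omega(s)}$ elements, which is what the degree bound requires.
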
 

\mardel{I think it should be a Claim or Propositional, if it has a proof.}

\begin{proof} Let $S$ be a proof of $F=G$ of size $s$ which contains variables $z_{1},\dots, z_m$ not appearing in $F$ or $G$.
Assume that $F$ or $G$ actually contain at least one variable $x$, otherwise the statement is clear.
It is sufficient to find a substitution $z_{1}=H_{1},\dots, z_{m}=H_{m}$ for which the proof $S$ is defined and $H_{1},\dots, H_{m}$ are circuits of size $O(s)$ in
the variable $x$ only.
We will choose the substitution from the set $M=\{x^{1},x^{2},x^{3}\dots, x^{2^{cs}}\}$, where $c$ is a sufficiently large constant. Note that $x^{p}$ can be computed by a circuit of size $\log_{2}p+2$, and so every circuit in $M$ has size $O(s)$.
That such a substitution exists can be shown as in Proposition \ref{prop:division}, when we consider $M$ as a subset of the field of rational functions.
\end{proof}

\subsection{Taylor series} \label{sec:Taylor}
For a later application, we need to introduce the basic notion of a power series. Let $F=F(\overline{x},z)$ be a circuit with division. We will define $\Coef_{z^{k}}(F)$ as a circuit
in the variables $\overline {x}$, computing the coefficient of $z^{k}$ in $F$, when $F$ is written as a power series at $ z=0$. This is done as follows:
\HalfSpace

\case 1 Assume first that no division gates in $F$ contain the variable $z$. Then we define $\Delta_{z^{k}}(F)$ by the following rules (the definition is similar to that of $F^{(k)}$ in Section \ref{sec:bounding}, and so we will be less formal here)\mardel{Where is it less formal?}:
\begin{enumerate}[(i)]
\item $\Delta_{z}(z):=1$ and $\Delta_{z^{k}}(z):=0$, if $k>1$.
\item If $F$ does not contain $z$, then $\Delta_{z^{0}}(F):=F$ and $\Delta_{z^{k}}(F):=0$, for $k>0$.
\item $\Delta_{z^{k}}(F+G)= \Delta_{z^{k}}(F)+\Delta_{z^{k}}(G)$.
\item $\Delta_{z^{k}}(F\cdot G)= \sum_{i=0}^k\Delta_{z^{i}}(F)\cd\Delta_{z^{k-i}}(G)$.
\end{enumerate}

\case 2 Assume  that some division gate in $F$ contains $z$. We let:
\[F_{0}:= \left ((\Den(F))(z/0)\right)^{\sharp}\,,\]
where, given a circuit $ G $,  $G^{\sharp}$ is the non-redundant version of $G$ (see definition in Section \ref{sec:bounding}) and $ G(z/0) $ is obtained by substituting  in $ G $ all occurrences of $ z $ by the constant $ 0 $.
\mardel{I don't understand where and why precisely you need to assume non-redundancy of circuits...?}
In case $\wh{F_{0}}\not=0$, we define:
\[
        \Delta_{z^{k}}(F):= F_{0}^{-1} \cdot \Delta_{z^{k}} \left(\Num(F)\cd\inv_{k}\left(F_{0}^{-1}\cd\Den(F)  \right)\right)\,.
\]
Note that $ z $ does not occur in any division gate inside $ \Num(F)\cd\inv_{k}\left(F_{0}^{-1}\cd\Den(F)  \right) $, and so $ \Delta_{z^{k}}(F) $ is well-defined.

\Commentdel{I feel the definition of $ \Delta_{z^{k}}(F)$ (for Case 2) is quite confusing. As far as I checked \[ \Delta_{z^{k}}(F):= F_{0}^{-1} \cdot \Delta_{z^{k}} \left(\Num(F)\cd F_0\cd\Inv_{k}\left(\Den(F)  \right)\right),\] so maybe this should be stated as such. And also, is it possible to simply define: 
 \[ \Delta_{z^{k}}(F):= \Delta_{z^{k}} \left(\Num(F)\cd \Inv_{k}\left(\Den(F)  \right)\right ) ~~? \]
 And if not, why? \\ \\\ Also, if F contains division gates, but no division gate in F contains z then $\Delta_{z^k}(F)$ may contain division gates. But if some division gate in F contains z then $\Delta_{z^k}(F)$ does not contain any division gate.     
}

We summarize the main properties of $\Coef_{z^{k}}$ as follows:

\begin{proposition}\label{prop:Taylor}
\
\begin{enumerate}
\item \new{If $F$ is a circuit without division of syntactic degree at most $d$ and size $
s$ then $F= \sum_{i=0}^d \Coef_{z^{i}}(F)\cdot  z^{i}$
has a $\PC$ proof of size $s\cdot\poly(d)$.}
\item \new{
If $F_{0},\dots, F_{k}$ are circuits with divisions not containing the variable $z$, then
$\Coef_{z^{j}}\left(\sum_{i=0}^{k}F_{i} z^{i}\right)= F_{j}$
has a polynomial size $\PI$ proof, for every $j\leq k$.}
\item
Assume that $F,G$ are circuits with divisions such that $F=G$ has a $\PI$ proof of size $s$ \iddosml{that} is defined for $z=0$. Then
\[ \Delta_{z^k}(F)= \Delta_{z^{k}}(G)\]
has a $\PI$ proof of size $s\cdot \poly(k)$.
\end{enumerate}
\end{proposition}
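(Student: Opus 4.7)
The plan is to handle Parts (i)--(iii) separately: Parts (i) and (ii) are structural identities proved by induction on the circuit, essentially mirroring Lemma~\ref{lem:bounded degree2}, while Part (iii) is the substantive claim and proceeds by induction on the underlying $\PI$ proof. For Part (i), I would show by induction on the depth of a node $u\in F$ a $\PC$ proof of $F_u=\sum_{i=0}^{\deg F_u}\Delta_{z^i}(F_u)\cdot z^i$, the base cases coming from clauses (i)--(ii) of Case~1 of the definition of $\Delta_{z^k}$, sum-gates from clause (iii), and product-gates from clause (iv) combined with the identities for $\sum$-abbreviations of Notation~\ref{rem:big sum}. For Part (ii), I would first establish the Kronecker identity $\Delta_{z^l}(z^i)=\delta_{il}$ by induction on $i$, writing $z^i$ as $z\cdot z^{i-1}$ and using clause (iv) with the base clauses $\Delta_z(z)=1$, $\Delta_{z^k}(z)=0$ for $k>1$, and $\Delta_{z^0}(1)=1$, $\Delta_{z^k}(1)=0$ for $k>0$; then, since each $F_i$ is $z$-free, Case~1 applies to $F_i\cdot z^i$ and clauses (ii)--(iv) give $\Delta_{z^j}(F_i\cdot z^i)=F_i\cdot\delta_{ij}$; summing over $i$ via clause (iii) yields the claim.

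For Part (iii), I would proceed by induction on the number of lines of the given $\PI$ proof of $F=G$, showing that every line $H_1=H_2$ admits a $\PI$ proof of $\Delta_{z^k}(H_1)=\Delta_{z^k}(H_2)$ of size $\poly(k)$ times the size of the original line. The technical core is two auxiliary commutation lemmas: for arbitrary circuits with division $H,H'$ on which $\Delta_{z^k}$ is defined, the equations
\begin{align*}
\Delta_{z^k}(H+H')&=\Delta_{z^k}(H)+\Delta_{z^k}(H'),\\
\Delta_{z^k}(H\cdot H')&=\sum_{i=0}^k\Delta_{z^i}(H)\cdot\Delta_{z^{k-i}}(H'),
\end{align*}
admit $\PI$ proofs of $\poly(k)$ size. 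If no division gate in the relevant expression contains $z$, these are immediate from Case~1 clauses (iii)--(iv); otherwise one unfolds Case~2 using Lemma~\ref{lem:Den} and reduces to the $z$-free case. Given these commutation lemmas, the axioms A1--A10, C1, C2 and rules R1--R4 are translated line-by-line: each axiom is a small circuit identity whose $\Delta_{z^k}$-image is derivable by repeated application of the commutation lemmas together with Part~(ii), and R1--R4 propagate the induction hypothesis directly.

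The main obstacle is the division axiom D, where a short $\PI$ proof of $\Delta_{z^k}(H\cdot H^{-1})=\Delta_{z^k}(1)$ is required; the right-hand side equals $\delta_{k0}$ by the base clauses. The expression $H\cdot H^{-1}$ falls under Case~2, and using the identities $\Num(H\cdot H^{-1})=\Num(H)\cdot\Den(H)=\Den(H\cdot H^{-1})$ one reduces to verifying a specific coefficient extraction from the truncated geometric series $\inv_k(F_0^{-1}\cdot\Num(H)\cdot\Den(H))$. The key tool is the telescoping identity $(1-u)\cdot(1+u+\cdots+u^k)=1-u^{k+1}$ from Lemma~\ref{lem:inv}, applied with $u=1-F_0^{-1}\cdot\Num(H)\cdot\Den(H)$: since $u$ vanishes at $z=0$, every $z^j$-coefficient of $1-u^{k+1}$ for $j\leq k$ equals the corresponding coefficient of $1$, giving $\delta_{j0}$. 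The hypothesis that the given proof is defined at $z=0$ is precisely what ensures that each $F_0$ invoked along the induction is nonzero, so that all of the $\inv_k$-expansions arising are themselves legitimate defined $\PI$ circuits and the induction can be carried through.
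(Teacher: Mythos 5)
Your proposal is correct and follows the same strategy the paper intends: the paper itself omits the proof and simply notes that it parallels Propositions~\ref{prop: main hom} and~\ref{prop:division}, which is precisely what you carry out. Your Part~(i) mirrors Lemma~\ref{lem:bounded degree2}; your Part~(ii) is a routine computation from the defining clauses of $\Delta_{z^k}$; and your Part~(iii) is the induction on proof lines with commutation lemmas for $+$ and $\cdot$ (the $\Delta_{z^k}$-analogue of Lemma~\ref{lem:bounded degree}) and a telescoping argument for axiom~D, in close parallel with the use of Lemma~\ref{lem:inv} inside Proposition~\ref{prop:division}. Two small cautions worth noting, neither a gap in the approach: when unfolding Case~2 in the commutation lemmas, the constant $F_0$ for $H\cplus H'$ and $H\ctimes H'$ is not the same object as the $F_0$'s for $H$ and $H'$ individually, so matching the two $\inv_k$-truncations requires the kind of degree bookkeeping that Lemma~\ref{lem:inv} does for $(\cdot)^{(k)}$; and the telescoping conclusion you need is the $\Delta_{z^k}$-version of Lemma~\ref{lem:inv}, not the lemma as literally stated, though the underlying identity $(1-u)(1+u+\dots+u^k)=1-u^{k+1}$ together with $\Delta_{z^0}(u)=0$ supplies it by the same argument.
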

  The proofs are almost identical to those of Proposition \ref{prop: main hom}  and Proposition \ref{prop:division}. We omit the details.

\section{Simulating large fields in small ones}\label{sec:large}
\new{Recall the notation on matrices given in Section \ref{matrix notation}. Mainly,
 matrices are understood as matrices whose entries are circuits and operations on matrices are operations on circuits. }

\begin{lemma}\label{lem:matrices}\label{lem: matrices} Let $X,Y,Z$ be $n\times n$ matrices of distinct variables and $I_{n}$ the identity matrix. Then the following identities have polynomial-size $\PC(\F)$ proofs:
\begin{tabbing}
\qquad\qquad  \= $ X+Y=Y+X  $ ~~~~~~~~~~~~~~~~~~~~~~~ \= $ X+(Y+Z)= (X+Y)+Z$ \\
  \> $ X\cdot(Y+Z)= X\cdot Y+ X\cdot Z $ \> $ (Y+Z)\cdot X= Y\cdot X+ Z\cdot X $\\   
 \qquad \qquad\>$ X\cdot (Y\cdot Z)= (X\cdot Y)\cdot Z $ \> \new{$X\cdot I_{n}=I_{n}\cdot X = X.$}
\end{tabbing}
\new{Similarly for non-square matrices of appropriate dimension.}
\end{lemma}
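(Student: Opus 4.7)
The plan is to reduce each matrix identity to a collection of $n^{2}$ scalar equations between circuits, one for each entry $(i,j)$, and then derive each scalar equation from the ring axioms {\rm A1--A10} together with the sum manipulations described in Notation~\ref{rem:big sum}. Since each entry-wise equation will have a proof of size $\poly(n)$, and there are $n^{2}$ such equations, the total size will be polynomial in $n$. Throughout, we will freely use that $\sum_{i=1}^{k} x_{i}$ is shorthand for a balanced binary formula and that basic manipulations of this notation (reindexing, splitting, pulling scalars in and out) have $\PF$ proofs of size $\poly(k)$ and logarithmic depth.

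For the addition identities $X+Y=Y+X$ and $X+(Y+Z)=(X+Y)+Z$ the $(i,j)$ entries are, by definition of matrix addition, literally the scalar instances of axioms {\rm A2} and {\rm A3}, so the proofs are immediate. For the two distributivity identities $X\cdot(Y+Z)=X\cdot Y+X\cdot Z$ and $(Y+Z)\cdot X=Y\cdot X+Z\cdot X$ the $(i,j)$ entry of the left side is $\sum_{p=1}^{n}X_{ip}\cdot(Y_{pj}+Z_{pj})$; using axiom {\rm A6} pointwise one rewrites each summand as $X_{ip}\cdot Y_{pj}+X_{ip}\cdot Z_{pj}$, and then splits the single sum of pairs into a pair of sums using the elementary identity $\sum_{p=1}^{n}(a_{p}+b_{p})=\sum_{p=1}^{n}a_{p}+\sum_{p=1}^{n}b_{p}$, which has a $\poly(n)$-size $\PF$ proof from {\rm A2,A3} by induction on the length of the sum. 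The identity $X\cdot I_{n}=I_{n}\cdot X=X$ reduces, at the $(i,j)$-entry, to proving $\sum_{p=1}^{n}X_{ip}\cdot\delta_{pj}=X_{ij}$: using axioms {\rm A8} and {\rm A9} (via {\rm A10} to simplify the scalar Kronecker coefficients), all summands with $p\neq j$ collapse to $0$ and the $p=j$ summand collapses to $X_{ij}$; the sum then collapses to $X_{ij}$ by repeated applications of {\rm A7}.

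The main obstacle, and the only identity requiring genuine work, is the associativity of multiplication $X\cdot(Y\cdot Z)=(X\cdot Y)\cdot Z$. At the $(i,j)$ entry we must prove
\[
\sum_{p=1}^{n}X_{ip}\cdot\Bigl(\sum_{q=1}^{n}Y_{pq}\cdot Z_{qj}\Bigr)\;=\;\sum_{q=1}^{n}\Bigl(\sum_{p=1}^{n}X_{ip}\cdot Y_{pq}\Bigr)\cdot Z_{qj}.
\]
The plan is first to distribute the outer $X_{ip}$ through the inner sum, using the identity $y\cdot\sum_{q=1}^{n}x_{q}=\sum_{q=1}^{n}y\cdot x_{q}$ from Notation~\ref{rem:big sum} (which has a $\poly(n)$ proof), turning the left side into $\sum_{p=1}^{n}\sum_{q=1}^{n}X_{ip}\cdot(Y_{pq}\cdot Z_{qj})$. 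Using axiom {\rm A5} entrywise, each term becomes $(X_{ip}\cdot Y_{pq})\cdot Z_{qj}$. Symmetrically, the right side is rewritten as $\sum_{q=1}^{n}\sum_{p=1}^{n}(X_{ip}\cdot Y_{pq})\cdot Z_{qj}$ by pulling $Z_{qj}$ into the inner sum. What remains is to swap the order of a double sum of $n^{2}$ terms, which is an instance of commutativity and associativity of addition applied to a polynomial-size expression; this has a $\poly(n)$-size $\PC$ proof (for instance by induction on $n$, or by flattening both nested sums to a common canonical linear order). Gathering all $n^{2}$ entry-wise proofs yields the desired polynomial-size $\PC(\F)$ proof of the matrix identity. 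The non-square case is handled in exactly the same way, the only change being the ranges of the indices in the sums.
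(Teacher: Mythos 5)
Your proposal is correct and takes the same overall approach as the paper: reduce each matrix identity to its $n^{2}$ entry-wise scalar equations and note that each of these is a true, low-degree, polynomial-size identity. The only difference is one of granularity. The paper's proof is a one-liner: it observes that each entry-wise equation is a correct identity of degree at most $3$ and size $O(n)$, and then simply appeals to the generic fact that such identities have $\PC$-proofs of size $O(n^{3})$ (essentially completeness, Proposition 1, with the standard normal-form bound for low-degree polynomials). You instead spell out the explicit derivations from the axioms {\rm A2--A10}, including the sum manipulations from Notation~\ref{rem:big sum} and the double-sum exchange for associativity. Your version is more self-contained and makes visible where the $\poly(n)$ size actually comes from, whereas the paper's is terser and relies on a generic upper bound; both are valid, and neither introduces a new idea over the other.
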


\begin{proof} Each of the equalities is a set of $n^{2}$ correct equations with degree $\leq 3$ and size $O(n)$. Every such equation has a $\PC$-proof of size $O(n^{3})$.
\end{proof}

Let $\F_{1}= GF(p)$ and $\F_{2}= GF(p^{n})$, where $p$ is a prime power. \mardel{$\F_2$ sometimes denotes $GF(2)$, so the notation is a bit confusing...}We will show how to simulate proofs in $\PC(\F_{2})$ by proofs in $\PC(\F_{1})$.
Recall  that $\F_{2}$ can be represented by $n\times n$ matrices with elements from $\F_{1}$, that is, there is an isomorphism $\theta$ between $\F_{2}$ and a subset of $GL_{n}(\F_{1})$. \new{We can also assume that $\theta(a)= aI_{n}$ if $a\in \F_{1}\subseteq \F_{2}$.}
 This allows one to treat a polynomial $f$ over $\F_{2}$ as a matrix of $n^{2}$ polynomials over $\F_{1}$. Similarly, we can define a translation of circuits: let $F$ be a circuit with coefficients from  $\F_{2}$. Let $\overline{F}$ be an $n\times n$ matrix of circuits $\{\overline{F}_{ij}\}, \, i,j\in [n]$ with coefficients from $\F_{1}$, defined as follows: for every gate $u$ in $F$, introduce $n^{2}$ gates $\bar{u}= \{\bar{u}_{ij}\}_{i,j\in [n]}$, and let:
\begin{enumerate}
\item If $u\in \F_{2}$ is a constant, let $\bar{u}:= \theta(u)$.
\item If $u$ is a variable, let $\bar{u}:= u\cdot I_{n}$.
\item If $u= v+w$, \new{let} $\bar{u}:=\bar{v}+\bar{w}$, and if  $u= v\cdot w$, let $\bar{u}: =\bar{v}\cdot \bar{w}$
\end{enumerate}
Then $\overline{F}$ is the matrix computed by $\bar{w}$ where $w$ is the output of $F$.

Here, $\bar{v}+ \bar{w}$,  $(\bar{v}\cdot \bar{w})$ and $u\cdot I_{n}$ are understood as the corresponding matrix operations on circuit nodes.

\begin{lemma}\label{lem: large field}
Let $F,G$ be circuits of size $\leq s$ with coefficients from $\F_{2}$. Then
\begin{eqnarray} \overline{F\cplus G}&=& \overline{F}+\overline{G}\,, \ \ \ \,\,\overline{F\ctimes G}= \new{\overline{F}\cdot\overline{G}}\,, \label{large:1}\\
\overline{F}\cdot \overline{G}&=&\overline{G}\cdot \overline{F} \label{large:2}
\end{eqnarray}
have $\PC(\F_{1})$ proofs of size $s\cdot\poly(n)$
\end{lemma}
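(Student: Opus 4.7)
The plan is to handle the three identities separately, with the first two being essentially definitional and the third (commutativity) being the substantive content.

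For the identities $\overline{F\cplus G} = \overline{F}+\overline{G}$ and $\overline{F\ctimes G} = \overline{F}\cdot\overline{G}$, I would observe that they follow directly from the definition of $\overline{\cdot}$. If $u$ and $v$ are the output nodes of $F$ and $G$ respectively, then the output of $F\cplus G$ is $u+v$, and by definition $\overline{u+v} := \bar{u}+\bar{v}$; under the matrix notation of Section \ref{matrix notation}, this is precisely $\overline{F}+\overline{G}$ entry-wise. Similarly for multiplication, $\overline{u\cdot v} := \bar{u}\cdot\bar{v} = \{\sum_p \bar{u}_{ip}\cdot\bar{v}_{pj}\}$, which matches $\overline{F}\cdot\overline{G}$. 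The two sides are similar circuits entry-by-entry (in the sense of Remark \ref{def:similar}), so each of the $n^2$ entry-equations has a $\PC(\F_1)$ proof of size $\poly(n)\cdot s$ via A1', giving $s\cdot\poly(n)$ total.

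The non-trivial identity is $\overline{F}\cdot\overline{G} = \overline{G}\cdot\overline{F}$. The underlying reason this holds is that $\overline{F}$ is built from base matrices that pairwise commute: scalar matrices $xI_n$ (for variables) and matrices $\theta(a)$ in the image of the isomorphism $\theta: GF(p^n) \hookrightarrow GL_n(\F_1)$, which commute since $GF(p^n)$ is a field. I would prove this by a two-stage structural induction. First, show by induction on the structure of $F$ that $\overline{F}$ commutes with every $\theta(a)$ and every $yI_n$: for the base cases, $\theta(a)\theta(b) = \theta(ab) = \theta(ba) = \theta(b)\theta(a)$ has an $O(n^3)$ proof (each entry is a polynomial identity over $\F_1$), and scalar matrices trivially commute entry-wise with any matrix. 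The inductive step uses the distributive and associative laws from Lemma \ref{lem: matrices}: if $u = v+w$ and $\overline{F_v}, \overline{F_w}$ each commute with $M$, then $(\overline{F_v}+\overline{F_w})M = \overline{F_v}M + \overline{F_w}M = M\overline{F_v}+ M\overline{F_w} = M(\overline{F_v}+\overline{F_w})$, and similarly for products using associativity. Second, apply the same induction on $G$ using that $\overline{F}$ already commutes with every matrix occurring at a leaf of $\overline{G}$.

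Each inductive step invokes the matrix ring identities of Lemma \ref{lem: matrices} on matrices of size $\poly(n)$, contributing $\poly(n)$ to the proof size; and we have $O(s)$ steps, one per gate of $F$ (resp.\ $G$), giving the required $s\cdot\poly(n)$ bound. The main obstacle here is purely bookkeeping: one must be careful that the commutativity subproof at each step uses the right form of the matrix ring axioms (associativity, distributivity on both sides, and the commuting base pair) and that the polynomial overhead in $n$ does not accumulate badly through the induction. Organising the argument as ``commutes-with-every-leaf-matrix-of-$G$'' first, and only then proving full commutativity with $\overline{G}$, keeps the combinatorics tame.
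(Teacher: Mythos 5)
Your proof is correct and follows essentially the same route as the paper: the first two identities are handled by the definitional axioms (your A1' is poly-equivalent to the paper's use of C1, C2), and commutativity is proved by first establishing leaf-level commutativity ($\theta(a)\theta(b)=\theta(b)\theta(a)$ and $xI_n$ with anything) and then inducting over circuit structure using the matrix ring identities of Lemma \ref{lem: matrices}. Your explicit split into ``commute with leaf matrices of $G$ first, then induct on $G$'' is just a cleaner phrasing of the paper's double induction on the sizes of $F$ and $G$.
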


\begin{proof} Identities (\ref{large:1}) follow from the definition of $\overline{F}$ by means of axioms C1, C2.

Identity (\ref{large:2}) follows by induction on the circuit sizes of $F$ and $G$. We first need to construct the proof of
\[
\overline{z_{1}}\cdot \overline{z_{2}}= \overline{z_{2}}\cdot\overline{z_{1}}\,,
\]
where each $z_{1},z_{2}$ is either a  variable or an element of $\F_{2}$. So assume that $z_{1}$ is a variable. Then $\overline{z_{1}}= z_{1}\cdot I_{n}$.
This gives $\overline{z_{1}}\cdot \overline{z_{2}}= z_{1}\cdot \overline{z_{2}}$. But  $\overline{z_{2}}$ is a matrix  for which each entry commutes with $z_{1}$, which gives a proof of $z_{1}\cdot \overline{z_{2}}= \overline{z_{2}}\cdot z_{1}= \overline{z_{2}}\cdot\overline{z_{1}}$. The case of $z_{2}$ being a variable is similar. If both $z_{1}, z_{2}\in \F_{2}$, we are supposed to prove
$\theta(z_{1})\cdot \theta(z_{2})= \theta(z_{2})\cdot \theta(z_{1}) $. But this is a set of $n^{2}$ true equations of size $O(n)$ which contain only elements of $\F_{1}$, and hence it has a proof of size $O(n^{3})$.
 In the inductive step, use \new{(\ref{large:1}) and} Lemma \ref{lem: matrices}  to construct proofs of $(\overline{F_{1}}+ \overline{F_{2}})\cdot \overline{G}=\overline{G}(\overline{F_{1}}+ \overline{F_{2}}) $ and  of $(\overline{F_{1}}\cdot \overline{F_{2}})\cdot \overline{G}=\overline{G}(\overline{F_{1}}\cdot \overline{F_{2}}) $ from the proofs of $\overline{F_{1}}\cdot \overline{G}=\overline{G}\cdot \overline{F_{1}}$ and $\overline{F_{2}}\cdot \overline{G}=\overline{G}\cdot \overline{F_{2}}$.
\end{proof}
\medskip

\mardel{I think it would be good to restate theorems 10,9 here, for the sake of readability.}
We are now ready to prove Theorem \ref{main: large field}, restated below for the sake of convenience:
\medskip

\ind\textbf{Theorem \ref{main: large field}.}\textit{ Let $p$ be a prime power and $n$ a natural number and let $F,G$ be circuits over $GF(p)$. Assume that $F=G$ has a $\PC(GF(p^{n}))$ proof of size $s$. Then $F=G$ has a $\PC(GF(p))$ proof of size $s\cdot \poly(n)$.}
\medskip

\begin{proof}[Proof of Theorem \ref{main: large field}] Let $F,G$ be circuits with coefficients from $\F_{2}$ such that
 $F=G$ has a $\PC(\F_{2})$ proof of size  $s$. We wish to show that $\overline{F}=\overline{G}$ have proofs of size $s\cdot\poly(n)$ in $\PC(\F_{1})$.
This implies Theorem \ref{main: large field}, for if $F,G$ contain only coefficients from $\F_{1}$ then $\overline{F}_{11}=F$ and $\overline{G}_{11}=G$.

The proof is constructed by induction on the number of lines.
Axioms C1, C2 follow from equations (\ref{large:1}) in Lemma \ref{lem: large field},  and  A4 from equation (\ref{large:2}). A9 is a set of $n^{2}$ true constant equations. The rest of the axioms are application of Lemma \ref{lem: matrices}.  The rules R1, R2 are immediate, and R3, R4 are given by Lemma \ref{lem: large field}.
\end{proof}

Now we can also prove Theorem \ref{thm: main divisions}:
\medskip

\ind\textbf{Theorem \ref{thm: main divisions}.}
\textit{Let $ \F $ be any field and assume that $F $ and $ G$ are circuits without division gates such that $\deg F, \deg G\leq d$. Suppose that $F=G$ has a $\PI(\F)$ proof of size $s$. Then $F=G$ has a $\PC(\F)$ proof of size $s\cdot\poly(d)$.}
\medskip

\begin{proof}[Proof of Theorem \ref{thm: main divisions}] Follows from Theorem \ref{main: large field} and Proposition \ref{prop:division}.
\end{proof}

For a circuit with division $F$, define its syntactic degree by
\[\deg F:= \deg (\Num( F))+\deg (\Den (F)).\]

\begin{corollary}\label{cor: subst}
Let $ \F $ be any field and let $F$, $G$, $H$ be circuits with divisions. Assume that  $\deg  (F) $ and $ \deg (G) $ \iddosml{are} at most $ d$ and that $H$ has size $ s_{1}$.
Suppose that   $F=G$  has a $\PI(\F) $ proof of size $s_{2}$ and that $F(z/H), \,G( z/H) $ are defined. Then $ F(z/ H)=G( z/H) $ has a $\PI(\F)$ proof of size $\,s_{1}s_{2}\cd\poly(d)$.
\end{corollary}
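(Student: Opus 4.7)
The plan is to reduce the claim to the division-free case, where substitution is unproblematic, by first clearing denominators on both sides of the equation to be proved, invoking Theorem \ref{thm: main divisions}, substituting, and finally restoring the quotient structure.

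First I would use Lemma \ref{lem:Den}(1) to replace $F$ and $G$ in the given $\PI(\F)$ proof by $\Num(F)\cd\Den(F)^{-1}$ and $\Num(G)\cd\Den(G)^{-1}$, and then multiply both sides by $\Den(F)\cd\Den(G)$ using the axiom D (which is applicable since $\wh{\Den(F)},\wh{\Den(G)}\neq 0$, as these are built up from the numerators of division gates in $F,G$, all of which are defined). This yields, with only an $O(s_2)$ blow-up, a $\PI(\F)$ proof of the division-free equation
\[\Num(F)\cd \Den(G)= \Num(G)\cd \Den(F)\,,\]
whose two sides are division-free circuits of syntactic degree at most $\deg(F)+\deg(G)\leq 2d$.

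Next I would invoke Theorem \ref{thm: main divisions} to convert this to a $\PC(\F)$ proof of the same equation of size $s_2\cd\poly(d)$. Since this $\PC$ proof contains no division gates at all, substituting $z\mapsto H$ throughout it is safe: every circuit $C$ in the proof becomes $C(z/H)$, whose only division gates are those appearing inside copies of $H$, and each of these remains defined because $H$ itself is a valid circuit with divisions. The substitution increases the size of each circuit of size $k$ to at most $O(k\cd s_1)$, so the resulting $\PI(\F)$ proof of
\[\Num(F)(z/H)\cd \Den(G)(z/H)= \Num(G)(z/H)\cd \Den(F)(z/H)\]
has size $s_1 s_2\cd\poly(d)$.

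Finally I would derive $F(z/H)=G(z/H)$ by multiplying both sides by $\Den(F)(z/H)^{-1}\cd\Den(G)(z/H)^{-1}$ and applying Lemma \ref{lem:Den}(1) in reverse. The main obstacle is ensuring that these inverse gates are themselves defined, i.e.\ that $\wh{\Den(F)(z/H)}$ and $\wh{\Den(G)(z/H)}$ are nonzero as rational functions. This is the essential use of the hypothesis that $F(z/H)$ and $G(z/H)$ are defined: by the recursive construction of $\Den$, every factor of $\Den(F)(z/H)$ has the form $\Num(u)(z/H)$ for some inverse node $u^{-1}$ occurring in $F$, and the condition that $u(z/H)^{-1}$ is defined in $F(z/H)$ is precisely $\wh{\Num(u)(z/H)}\neq 0$ (using that $\wh{\Den(u)(z/H)}\neq 0$ by induction on the structure of $F$). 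Once this is checked, the final manipulation adds only a $\poly(s_1)$ overhead, and the total size remains $s_1 s_2\cd\poly(d)$.
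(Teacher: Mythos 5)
Your proposal is correct and follows essentially the same strategy as the paper: use Lemma \ref{lem:Den} to pass from $F=G$ to the division-free equation $\Num(F)\cdot\Den(G)=\Num(G)\cdot\Den(F)$, invoke Theorem \ref{thm: main divisions} to get a $\PC$ proof, and restore the quotient structure while checking that the denominators remain nonzero under $z\mapsto H$.

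The one presentational difference is the order of operations. The paper first builds a new $\PI$ proof of $F=G$ (including the step back from the cleared-denominator equation to $F=G$), checks that \emph{this whole proof} is defined for $z=H$, and substitutes only at the very end. You substitute immediately after obtaining the $\PC$ proof of the division-free equation, and only then reintroduce $\Den(F)(z/H)^{-1}$, $\Den(G)(z/H)^{-1}$. The two routes are interchangeable, and your direct verification that $\widehat{\Den(F)(z/H)}\neq 0$ (by induction over the nodes of $F$, using at inverse nodes $u^{-1}$ that $\widehat{u(z/H)}\neq 0$ because $F(z/H)$ is defined, and that $\widehat{\Den(u)(z/H)}\neq 0$ by the inductive hypothesis) is exactly the content that the paper outsources to ``the proofs of \eqref{eq: Num} are defined for $z=H$.'' The only point worth tightening in your write-up is the phrase ``applying Lemma \ref{lem:Den}(1) in reverse'': what you actually need is the substitution instance $F(z/H)=\Num(F)(z/H)\cdot\Den(F)(z/H)^{-1}$ of Lemma \ref{lem:Den}(1), and its validity requires that \emph{every} inverse gate in the proof of Lemma \ref{lem:Den}(1) stays defined after $z\mapsto H$ --- which is the same induction you already ran, just stated for all subnodes $\Den(v)(z/H)$, not merely the root. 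With that clarification the argument is complete and matches the paper.
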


\begin{proof} We aim to construct a proof of $F=G$ of size $s_{2}\cdot \poly(d)$ such that the proof is defined for $z=H$. We can then substitute $H$ for $z$ throughout the proof to obtain a proof of $F(z/H)=G(z/H)$ {of the required size}.
By Lemma \ref{lem:Den}, we have proofs of
\begin{equation}F=\Num(F)\cdot \Den(F)^{-1}\,~~~\mez G=\Num(G)\cdot \Den(G)^{-1}\,.\label{eq: Num}
\end{equation} This and $F=G$ gives a $\PI(\F)$ proof
 of  \[\Num(F    )\cdot \Den(G    )= \Num(G    )\cdot Den(F    )\,,\]
of size {$O(s_{2})$}.
The last equation does not contain division gates, and so it has a $\PC(\F)$ proof of size ${s_{2} \cd\poly(d)}$ by Theorem \ref{thm: main divisions}. This proof is defined for $z=H$ because it does not contain division gates. By Lemma \ref{lem:Den}, the proofs of (\ref{eq: Num}) are defined for $z=H$ {(because $F(z/H) $ and $  G(z/H) $ are defined by assumption)}. In particular, both $\Den(F    )(z/H)$ and $\Den(G    )(z/H)$ are nonzero, and we have a proof of
\[\Num(F    )\cdot \Den(F    )^{-1}= \Num(G    )\cdot \Den(G    )^{-1}\]
 which is defined for $z=H$. Using (\ref{eq: Num}) we obtain a proof of $F=G$ of size ${s_{2}\cd\poly(d)}$ which is defined for $z=H$.
\end{proof}

\section{Computing the determinant}\label{sec: determinant}

We are now done proving the structural properties of $\PC$ and $\PF$ and we proceed to construct proofs of the properties of the determinant.
We first compute the determinant as a rational function.

\subsection{The determinant as a rational function}

\subsubsection*{The definition of $X^{-1}$ and $\Det(X)$}

Let $X= \{x_{ij}\}_{i,j\in [n]}$ be a matrix consisting of $n^{2}$ distinct variables. Recursively, we define \iddosml{an} $n\times n$ matrix $X^{-1}$ whose entries are circuits with divisions.
\begin{enumerate}
\item If $n=1$, let $X^{-1}:=(x_{11}^{-1})$.
\item If $n>1$, partition $X$ as \iddosml{follows:}
\begin{equation}
X= \left(
                \begin{array}{l r}
                        X_{1} & v_{1}^{t}\\
                        v_{2} & x_{nn}
                \end{array}
        \right)\,,
\label{eq: X}
\end{equation}
where $X_{1}=\{x_{ij}\}_{i,j\in [n-1]}$, $v_{1}=(x_{1n},\dots, x_{(n-1)\iddosml{n}})$ and $v_{2}=(x_{n1},\dots, x_{n(n-1)}) $.
Assuming we have constructed $X_{1}^{-1}$, let
\begin{equation}\label{eq: dd}
\dd(X):= x_{nn}- v_{2}X_{1}^{-1}v_{1}^{t}\,.
\end{equation}
$\dd(X)$ computes a single non-zero rational function and so $\dd(X)^{-1}$ is defined.
Finally, let
\begin{equation}
X^{-1}:= \left( \begin{array}{l r}
X_{1}^{-1}(I_{n-1}+\dd(X)^{-1}v_{1}^{t} v_{2 }X_{1}^{-1}) & -\dd(X)^{-1}X_{1}^{-1}v_{1}^{t} \\
-\dd(X)^{-1} v_{2} X_{1}^{-1} & \dd(X)^{-1}
\end{array}\right)\,.
\label{eq:inverse}
\end{equation}
\end{enumerate}

\iddosml{The circuit} $\Det(X)$ is defined as \iddosml{follows}:
\begin{enumerate}\item If $n=1$, let $\Det(X):= x_{11}$.
\item If $n>1$, partition  $X$ as in (\ref{eq: X}) and let $\dd(X)$ be as in (\ref{eq: dd}).
Let \[\Det(X):= \Det(X_{1})\cdot \dd(X)= \Det(X_1)\iddo{\cd}(x_{nn}- v_{2}X_{1}^{-1}v_{1}^{t})\,.\]
\end{enumerate}

The definition in (\ref{eq:inverse}) should be understood as a circuit with $n^{2}$ outputs which takes $X_{1}^{-1}, v_{1}, v_{2}, x_{nn} $ as inputs and moreover, such that the inputs from $X_{1}^{-1}$ occur exactly once (so we slightly deviate from earlier notation).
Altogether,  we obtain polynomial size circuits for $X^{-1}$ and $\Det(X)$.
The fact that $ \Det(X) $ indeed computes the determinant (as a rational function) is a consequence \iddosml{of} Proposition \ref{lem:det} below, where we show that $ \PI $ can prove the two identities which characterize the determinant. That $X^{-1}$ computes the matrix inverse is proved in Proposition \ref{prop: inverse}. \mardel{"circuits with division" or "circuits with divisions"? Should be uniform throughout  paper.}

It should be emphasized that both  $X^{-1}$ and $\Det(X)$ are circuits with \iddosml{division} and hence not always defined when substituting for $X$.
Let $A:=\{a_{ij}\}_{i,j\in [n]}$ be an $n\times n$ matrix whose entries are circuits with division. We will say that $A$ is \emph{invertible} if the circuit $A^{-1}$ is defined---that is, when we substitute the entries of $A$ into $X^{-1}$,  the circuit does not use divisions by zero.
Note that $A^{-1}$ may be undefined even if $A$ has inverse ``in the real world''.
For example, if \[A=\left(\begin{array}{c c} 0 & 1\\ 1 & 0\end{array}\right)\]
then both $A^{-1}$ and $\Det(A)$ are undefined, and so $A$ is  not invertible in our sense.
Moreover, note that $\Det(X)$ has an \emph{exponential} syntactic degree which, in view of Corollary \ref{cor: subst}, further obscures \iddosml{the possibility to apply substitutions} in $\PI$ proofs.

On the other hand, let us state the basic cases when the determinant and matrix inverse are defined. \iddosml{Setting  $$A[k]:=\{a_{ij}\}_{i,j\in [k]},$$
we have the following: }\mardel{I put this notation here because it's used often in what follows, and so it might help the reader to emphasize this notation.}
\begin{enumerate}
\item If $A$ is invertible (meaning the circuit $ A^{-1} $ is defined) then $\Det(A)$ is defined.
        
\item If the entries of $A$ compute algebraically independent rational functions then \iddosml{$A$ is invertible}.
        
\item If $A$ is a triangular matrix with $a_{11},\dots, a_{nn}$ on the diagonal such that $a_{11}^{-1},\dots, a_{nn}^{-1} $ are defined then $A$ is invertible.
        
\item The matrix $A$ is invertible if and only if $A[1],\dots, A[n-1]$ are invertible and $\dd(A)^{-1}$ is defined.
\end{enumerate}

\subsection*{Properties of matrix inverse}

\begin{proposition}\label{prop: inverse}
Let $X= \{x_{ij}\}_{i,j\in [n]}$ be a matrix with $n^{2}$ distinct variables. Then both
\[X\cdot X^{-1} = I_{n} \ \ \ \hbox{and}\ \ \  X^{-1}\cdot X= I_{n}\]
have a polynomial-size $\PI$ proof. The proof is defined for $X= A$, whenever $A$ is invertible.
\end{proposition}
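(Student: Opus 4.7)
The plan is to proceed by induction on $n$. The base case $n=1$ is immediate: $X=(x_{11})$ and $X^{-1}=(x_{11}^{-1})$, so $X\cd X^{-1}=x_{11}\cd x_{11}^{-1}=1$ and $X^{-1}\cd X=x_{11}^{-1}\cd x_{11}=1$ are both single applications of axiom D.

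For the inductive step, assume the proposition for $(n-1)\times(n-1)$ matrices. Partition $X$ as in (\ref{eq: X}) and expand $X$ and $X^{-1}$ as $2\times 2$ block matrices using (\ref{eq:inverse}). Using Lemma \ref{lem: matrices} to perform the block multiplication in $\PC$, the identity $X\cd X^{-1}=I_n$ decomposes into four block identities. I would verify each one by a polynomial-size rearrangement using three ingredients: (i) the inductive hypothesis $X_1\cd X_1^{-1}=I_{n-1}$, to collapse every occurrence of $X_1X_1^{-1}$; (ii) axiom D applied to the circuit $\dd(X)$, i.e.\ $\dd(X)\cd\dd(X)^{-1}=1$; and (iii) the definition $\dd(X)=x_{nn}-v_2X_1^{-1}v_1^t$, to recognise the scalar $x_{nn}-v_2X_1^{-1}v_1^t$ that appears when computing the bottom row of $X\cd X^{-1}$ as $\dd(X)$. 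The top-left block simplifies to $I_{n-1}+\dd^{-1}v_1^tv_2X_1^{-1}-\dd^{-1}v_1^tv_2X_1^{-1}=I_{n-1}$; the top-right and bottom-left collapse to $0$; and the bottom-right collapses to $\dd^{-1}\cd\dd=1$ by axiom D. The companion identity $X^{-1}\cd X=I_n$ is proved in parallel, using $X_1^{-1}\cd X_1=I_{n-1}$ from the induction hypothesis in place of $X_1\cd X_1^{-1}=I_{n-1}$. Since every intermediate circuit is assembled from $X_1$, $X_1^{-1}$, $v_1$, $v_2$, $x_{nn}$ and $\dd(X)^{-1}$ by block matrix operations, each has size $\poly(n)$; each step contributes $\poly(n)$ proof lines, and the recursion has depth $n$, for a total of $\poly(n)$.

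The main obstacle I expect is the ``moreover'' clause, namely that the constructed proof is \emph{defined} when $X=A$ for any invertible $A$ in the sense of Section~\ref{results: divison}. The only division gates occurring in any line of the assembled proof are those already present in $X_1^{-1}$ together with the single new gate $\dd(X)^{-1}$. By the recursive characterization of invertibility stated immediately before the proposition, $A$ being invertible means precisely that $A[n-1]$ is invertible and $\dd(A)^{-1}$ is defined; the former, via the induction hypothesis applied to $X_1=A[n-1]$, guarantees that the subproof for the $(n-1)$-case remains defined, while the latter takes care of the new inverse. A short inspection of the steps of the block computation confirms that no further division gates are introduced, so the full proof is defined at $X=A$.
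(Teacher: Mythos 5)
Your proposal is correct and follows essentially the same route as the paper: induction on $n$, base case by axiom D, and for the inductive step a block-matrix computation of $X\cd X^{-1}$ (resp. $X^{-1}\cd X$) that collapses each of the four blocks using $X_1 X_1^{-1}=I_{n-1}$ (resp. $X_1^{-1}X_1=I_{n-1}$), the definition $\dd(X)=x_{nn}-v_2X_1^{-1}v_1^{t}$, and the axiom $\dd(X)\cd\dd(X)^{-1}=1$, with Lemma~\ref{lem:matrices} handling the bookkeeping for block arithmetic. Your treatment of the definedness clause is, if anything, slightly more explicit than the paper's, which simply observes that the constructed proof uses only inverse gates already present in the circuit $X^{-1}$; both arguments amount to the same observation.
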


\begin{proof} Let us construct the proofs of $X\cdot X^{-1}=I_{n}$ and $ X^{-1}\cdot X= I_{n}$ by induction on $n$. If $n=1$, we have $x_{11}\cdot
x^{-1}_{11}=x^{-1}_{11}\cdot x_{11}= 1$ which is a $\PI$ axiom.  Otherwise let $n>1$ and $X$ be as in (\ref{eq: X}).
We want to construct a polynomial size proof of $X\cdot X^{-1}=I_{n}$ from the assumption $X_{1}X_{1}^{-1}=I_{n-1}$. This implies that $X\cdot X^{-1}=I_{n}$ has a polynomial size proof.

For brevity, let $a:= \dd(X)$.
 Using some rearrangements, and the definition of $a$, we have:
\begin{align*}
X\cdot X^{{-1}}&=
        \left(
                \begin{array}{l r} X_{1} & v_{1}^{t}\\
                                                                v_{2} & x_{nn}
                \end{array}
        \right)\cdot
        \left(
                \begin{array}{l r}                      
X_{1}^{-1}(I_{n-1}+a^{-1}v_{1}^{t} v_{2 }X_{1}^{-1}) & ~~~~-a^{-1}X_{1}^{-1}v_{1}^{t} \\
-a^{-1} v_{2} X_{1}^{-1} & a^{-1}
                \end{array}
        \right)\\
        &=\left(
                        \begin{array}{l r}
                                I_{n-1}+a^{-1}v_{1}^{t} v_{2} X_{1}^{-1} - a^{-1}v_{1}^{t} v_2     X_{1}^{-1}& - a^{-1}v_1^t + a^{-1}v_1^t     \\
                                v_2    X_{1}^{-1}+ a^{-1}(v_2    X_{1}^{-1}v_1^t       -x_{nn} )v_2     X_{1}^{-1} &
                                                ~~~~~a^{-1}(-v_2    X_{1}^{-1}v_1^t     + x_{nn} )
                        \end{array}
                \right) \\
        &=\left(
                        \begin{array}{l r}
                                I_{n-1}& 0\\
                                v_2    X_{1}^{-1}-  a^{-1}a v_2     X_{1}^{-1} & ~~~a^{-1}a
                        \end{array}
                \right) \\
        &=  \left(
                                \begin{array}{l r}
                                        I_{n-1}& 0\\
                                        0 & 1
                                \end{array}
                        \right).
\end{align*}
Here we use the fact that basic properties of matrix addition and multiplication have feasible proofs (see Lemma \ref{lem:matrices}).

The proof of $X^{-1}\cdot X= I_{n}$ is constructed in a similar fashion (where we use the assumption $X_{1}^{-1}X_{1}=I_{n-1}$ instead).
Moreover, if $A$ is an $n\times n$ matrix such that $A^{-1}$ is defined, the proofs of $A\cdot A^{-1}=A^{-1}\cdot A=I_{n}$ are defined. (This is because they employ only the inverse gates appearing already in the definition of $X^{-1}$.)
\end{proof}

\begin{corollary}\label{cor:invers}
The identity $(XY)^{-1}= Y^{-1}X^{-1}$ has a polynomial-size proof in $ \PI $.
The proof is defined for $X=A, Y=B$ whenever $A, B$ and $ AB$ are invertible.
\end{corollary}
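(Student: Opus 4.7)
The plan is to derive $(XY)^{-1} = Y^{-1}X^{-1}$ in two stages: first establish the intermediate identity $(XY)(Y^{-1}X^{-1}) = I_{n}$, and then cancel $XY$ on the left by using Proposition \ref{prop: inverse} applied to the matrix $XY$.

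For the first stage, I would chain the following equalities via the transitivity rule R2, where each link is either an instance of matrix associativity/identity from Lemma \ref{lem: matrices}, or an application of Proposition \ref{prop: inverse} combined with rule R4:
\[
(XY)(Y^{-1}X^{-1}) \;=\; X(YY^{-1})X^{-1} \;=\; X \cdot I_{n} \cdot X^{-1} \;=\; X\cdot X^{-1} \;=\; I_{n}.
\]
Each step is a polynomial-size $\PI$ proof operating on matrices of circuits of size $\operatorname{poly}(n)$, so the first stage produces a polynomial-size proof of $(XY)(Y^{-1}X^{-1}) = I_{n}$.

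For the second stage, I start from the trivial identity $(XY)^{-1} = (XY)^{-1}\cdot I_{n}$ (Lemma \ref{lem: matrices}). Using R3/R4 together with the equation just derived, I would rewrite the $I_{n}$ on the right to obtain $(XY)^{-1} = (XY)^{-1}\cdot [(XY)(Y^{-1}X^{-1})]$. Applying associativity once more yields $(XY)^{-1} = [(XY)^{-1}(XY)](Y^{-1}X^{-1})$. Now Proposition \ref{prop: inverse} applied to the matrix $XY$ (viewed as a matrix of circuits substituted into the generic proof of $X^{-1}X = I_{n}$) gives $(XY)^{-1}(XY) = I_{n}$, reducing the right-hand side to $I_{n}\cdot(Y^{-1}X^{-1}) = Y^{-1}X^{-1}$. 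Chaining by R2 yields $(XY)^{-1} = Y^{-1}X^{-1}$, with total size still polynomial.

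The only subtlety—this is what I would call the ``main obstacle''—is keeping the proof defined under the substitution $X=A$, $Y=B$. Every invocation of Proposition \ref{prop: inverse} carries a definedness hypothesis: its proof of $M\cdot M^{-1} = M^{-1}\cdot M = I_{n}$ is defined exactly when $M$ is invertible. In the construction above I invoke it with $M = X$, $M = Y$, and $M = XY$, so the substitution remains legal precisely under the hypothesis of the corollary that $A$, $B$, and $AB$ are all invertible. Every other step (matrix associativity, the identity axiom $M\cdot I_{n} = M$, and the rules R1--R4) is division-free and therefore defined under any substitution, so the definedness requirement propagates through without difficulty.
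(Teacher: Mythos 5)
Your proof is correct and takes essentially the same route as the paper: both compute $(XY)^{-1}(XY)(Y^{-1}X^{-1})$ in two ways using Proposition~\ref{prop: inverse} (for $X$, $Y$, and $XY$) together with the matrix identities from Lemma~\ref{lem: matrices}, and both track definedness through the three invertibility hypotheses. The paper merely presents this more compactly by setting $Z:=(XY)^{-1}$ and evaluating $(Z(XY))Y^{-1}X^{-1}$ two ways, whereas you split it into two named stages.
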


Beware that  invertibility of $A$ and $B$ does not guarantee invertibility of $AB$.

\begin{proof}Let $Z:= (XY)^{-1}$. Then  $(Z(XY))Y^{-1}X^{-1}=Y^{-1}X^{-1}$. On the other hand,
$(Z(XY))Y^{-1}X^{-1}= Z(X(YY^{-1}) X^{-1}= Z$ and so $Z=Y^{-1}X^{-1}$.
\end{proof}

An application of  \iddosml{Corollary \ref{cor:invers}} is the following technical observation.
Let $X$ be as in (\ref{eq: X}) and similarly
$Y=
\left( \begin{array}{l r} Y_{1} & u_{1}^{t}\\
u_{2} & y_{nn} \end{array}\right)\,.
$
Comparing the entries in the bottom right corners of $(XY)^{-1}$ and $Y^{-1}X^{-1}$, we obtain that
\begin{equation}\label{eq: ddXY}
\dd(Y)\dd(X)= \dd(XY)(1+ u_{2}Y_{1}^{-1} X_{1}^{-1} v_{1}^{t})\,,
\end{equation}
 has a polynomial size $\PI$ proof
(the proof is defined for \iddosml{$X=A$ and $Y=B$} whenever $A$, $B$ and $AB$ are invertible).

It is often easier to argue about triangular matrices. We summarize their useful properties in \iddosml{what follows}:
\begin{proposition}\label{lem: LU}
\begin{enumerate} \item Let $A, L,U$ be $n\times n$ matrices with $L$ lower triangular and $U$ upper triangular.
If $A,L,U$ are invertible then so are $LA$ and $AU$. \label{LU1}
\item
 Let $X$ be an $n\times n$ matrix of  distinct variables. Then there exists a lower triangular matrix $L(X)$ and an upper triangular matrix $U(X)$ such that
$X=L(X)\cdot U(X)$
has a polynomial size $\PI$ proof. 
\iddosml{If} $A$ is invertible, \iddosml{then} the proof is defined for $X=A $, \iddosml{and  also} $L(A), U(A)$ are invertible. \mardel{You mean to say ``(which imply also that $L(A), U(A)$ 
are invertible)."?\\ Check if the change is correct.}
\label{LU2}
\end{enumerate}
\end{proposition}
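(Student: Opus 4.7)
The plan is to prove both parts by induction on $n$, using the block decomposition (\ref{eq: X}) and the identity (\ref{eq: ddXY}) for $\dd$ of a product.

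For Part \ref{LU1}, I will write $L=\begin{pmatrix} L_1 & m^t \\ \ell_2 & \ell_{nn}\end{pmatrix}$ and exploit that lower triangularity forces $m^t=0$. Two consequences follow and together imply $LA$ is invertible. First, because $L_{ij}=0$ for $i<j$, a direct block computation gives $(LA)[k]=L[k]\cdot A[k]$ for every $k\le n-1$, and this is invertible by induction on $n$ since $L[k]$ is again a lower triangular invertible matrix (its invertibility is immediate from the recursive invertibility criterion applied to $L$) and $A[k]$ is invertible by the criterion applied to $A$. Second, substituting $X\mapsto L$ and $Y\mapsto A$ into (\ref{eq: ddXY}) makes the correction factor $1+u_2 A_1^{-1}L_1^{-1}m^t$ collapse to $1$, so $\dd(LA)=\dd(A)\cdot \dd(L)=\dd(A)\cdot \ell_{nn}$ as rational functions; both factors on the right are nonzero (the first by invertibility of $A$; the second by the recursive criterion applied to $L$), so $\dd(LA)^{-1}$ is defined. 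The $AU$ case is symmetric, using $X\mapsto A, Y\mapsto U$ and that upper triangularity of $U$ makes $u_2=0$.

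For Part \ref{LU2}, I will define $L(X), U(X)$ recursively by block LU factorization. Base case $n=1$: $L(X)=(1)$, $U(X)=(x_{11})$. For $n>1$, writing $L_1:=L(X_1)$ and $U_1:=U(X_1)$, set
\[
L(X) := \begin{pmatrix} L_1 & 0 \\ v_2 U_1^{-1} & 1 \end{pmatrix}, \qquad U(X) := \begin{pmatrix} U_1 & L_1^{-1} v_1^t \\ 0 & \dd(X) \end{pmatrix}.
\]
To prove $X=L(X)U(X)$ in $\PI$, I will expand the block product via Lemma \ref{lem: matrices}, then apply the inductive hypothesis $X_1=L_1U_1$ to the top-left block and Corollary \ref{cor:invers} (to identify $U_1^{-1}L_1^{-1}$ with $X_1^{-1}$) to simplify the off-diagonal blocks; the bottom-right block reduces to $v_2 X_1^{-1} v_1^t + \dd(X)=x_{nn}$ by the very definition (\ref{eq: dd}). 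Each inductive step adds polynomially many lines operating on polynomial-size circuits, yielding a polynomial-size proof overall.

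For definedness at $X=A$ with $A$ invertible, the recursive criterion gives that $A_1=A[n-1]$ is invertible, so by induction $L_1(A_1), U_1(A_1)$ are invertible; hence every inverse gate occurring in $L(A)$ and $U(A)$ is defined. The matrix $L(A)$ is lower triangular with all $1$'s on its diagonal (by induction), so it is invertible by the triangular-invertibility fact listed in the excerpt. The matrix $U(A)$ is upper triangular with diagonal $a_{11},\dd(A[2]),\ldots,\dd(A[n])$, and the invertibility of each $A[k]$ guarantees that each of these diagonal entries has a defined inverse, so $U(A)$ is also invertible. The main obstacle throughout is precisely this bookkeeping for definedness: since $\PI$ proofs are not closed under substitution, I must carefully align the nested inverse gates arising in $X^{-1}$ with the recursive construction of $L(X), U(X)$ and appeal to the recursive invertibility criterion to guarantee that every inverse, when evaluated at $A$, remains a nonzero rational function.
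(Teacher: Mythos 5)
Your proof is correct and follows essentially the same strategy as the paper: Part \ref{LU1} via the two facts $(LA)[k]=L[k]A[k]$ and $\dd((LA)[k])=\dd(L[k])\dd(A[k])$ (where the latter is exactly the specialization of \eqref{eq: ddXY} when the correction factor collapses because one triangular factor has a zero off-diagonal block), and Part \ref{LU2} via the same block LU recursion. The only difference from the paper's construction is cosmetic: you place the scalar $\dd(X)$ on the diagonal of $U(X)$ (making $L(X)$ unitriangular), whereas the paper places it on the diagonal of $L(X)$ (making $U(X)$ unitriangular); the block product and the inductive verification are otherwise identical, and the definedness bookkeeping you spell out matches what the paper calls ``straightforward.''
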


\begin{proof}
Part \ref{LU1} follows from the fact that $(LA)[k]=L[k]A[k]$ and $\dd((LA)[k])=\dd( L[k])\dd(A[k])$ for every $k\in \{1,\dots,n\}$ \Commentdel{I couldn't verify this}.
\iddosml{And similarly for $AU$.}

In part \ref{LU2}, the matrices $L(X), U(X)$, as well as the $\PI$ proof,  are constructed by induction on $n$. If $n=1$, let $L(x_{11})= x_{11}$ and $U(x_{11})= 1$. If $n>1$, write $X$ as in (\ref{eq: X}). Assuming we have $X_{1}=L(X_{1})U(X_{1})$, we have
\[\left(
                \begin{array}{l r}
                        X_{1} & v_{1}^{t}\\
                        v_{2} & x_{nn}
                \end{array}
        \right)
        = \left(
                \begin{array}{l r}
                        L(X_{1}) & 0\\
                        v_{2}U(X_{1})^{-1} & ~~~~~x_{nn} - v_{2}X_{1}^{-1}v_{1}^{t}
                \end{array}
        \right)
\cdot
        \left(
                \begin{array}{l r}
                        U(X_{1}) & ~~~~L(X_{1})^{-1} v_{1}^{t}\\
                        0 & 1
                \end{array}
        \right)\,.
        \]
Verifying that the proof is defined for an invertible $A$, and that $L(A), U(A)$ are invertible, is straightforward.
\end{proof}

\subsubsection*{Properties of $\Det$}
We now want to prove Proposition \ref{lem:det} which is a $\PI$ \iddosml{analogue} of Theorem \ref{thm: main det}. We first prove the following lemma:

\begin{lemma}\label{lem: vv} Let  $A$ be an invertible $n\times n$ matrix and let $v_{1}, v_{2}$ be $n\times 1$ vectors such that $A+v_{1}^{t}v_{2}$ is invertible. Then
\begin{equation}\label{eq: vv1}\Det(A+v_{1}^{t}v_{2})= \Det(A)(1+v_{2}A^{-1}v_{1}^{t})\end{equation}
has a polynomial size $\PI$ proof.
\end{lemma}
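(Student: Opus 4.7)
I would prove the identity by induction on $n$, the size of $A$. The base case $n=1$ is the scalar equality $a+bc = a(1+ca^{-1}b)$, which has a constant-size $\PI$ proof from the ring axioms and the inverse axiom D.

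For the inductive step, write $A = \begin{pmatrix} A_1 & a^t \\ b & \alpha \end{pmatrix}$ and split $v_1 = (u,s)$, $v_2 = (w,t)$ accordingly, so that
$$A + v_1^t v_2 = \begin{pmatrix} A_1 + u^t w & a^t + t\,u^t \\ b + s\,w & \alpha + s\,t \end{pmatrix}.$$
By the recursive definition of $\Det$ we get $\Det(A+v_1^t v_2) = \Det(A_1 + u^t w)\,\dd(A+v_1^t v_2)$ and $\Det(A) = \Det(A_1)\,\dd(A)$, each with a polynomial-size $\PI$ proof. The invertibility assumptions, together with the recursive characterisation of invertibility, ensure that $A_1$ and $A_1 + u^t w$ are also invertible, so the inductive hypothesis applies to $(A_1,u,w)$ and gives $\Det(A_1 + u^t w) = \Det(A_1)(1 + w A_1^{-1} u^t)$. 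Substituting this and cancelling the common factor $\Det(A_1)$, the goal reduces to the $\Det$-free rational-function identity
$$(1 + w A_1^{-1} u^t)\,\dd(A+v_1^t v_2) \;=\; \dd(A)\,(1 + v_2 A^{-1} v_1^t).$$

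To prove this remaining identity, I would expand $v_2 A^{-1} v_1^t$ via the explicit block formula \eqref{eq:inverse} for $A^{-1}$ (whose denominator is $\dd(A)$), and expand $\dd(A+v_1^t v_2)$ via its defining formula \eqref{eq: dd}. The bridge between the two is the $\dd$-identity \eqref{eq: ddXY}, applied to the $(n+1)\times(n+1)$ factorisation
$$\begin{pmatrix} A+v_1^t v_2 & v_1^t \\ v_2 & 1 \end{pmatrix} \;=\; \begin{pmatrix} A & v_1^t \\ 0 & 1 \end{pmatrix}\begin{pmatrix} I_n & 0 \\ v_2 & 1 \end{pmatrix}.$$
Both factors have $\dd$ equal to $1$, so \eqref{eq: ddXY} yields the Sherman--Morrison-type identity $(1 + v_2 A^{-1} v_1^t)(1 - v_2(A+v_1^t v_2)^{-1} v_1^t) = 1$, which expresses $(1+v_2 A^{-1} v_1^t)^{-1}$ in terms of $(A+v_1^t v_2)^{-1}$. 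The reduced identity then follows by purely algebraic rearrangement, using Proposition~\ref{prop: inverse} for $X\,X^{-1}=I$ manipulations and Lemma~\ref{lem:matrices} for the matrix-algebra steps. All building blocks have polynomial-size $\PI$ proofs, hence so does the final one.

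The main obstacle is \emph{definedness} of the proof in the $\PI$ sense, since $\PI$ proofs are not closed under substitution. Every inverse gate introduced during the reduction---most notably $\dd(A)^{-1}$, $A_1^{-1}$, $(A_1 + u^t w)^{-1}$, $(A+v_1^t v_2)^{-1}$, and the scalar $(1+v_2A^{-1}v_1^t)^{-1}$ that appears implicitly through Sherman--Morrison---must remain nonzero under the intended specialisation $X:=A$. The hypotheses that $A$ and $A+v_1^t v_2$ are invertible were designed precisely to supply these: invertibility passes to the upper-left sub-blocks by the recursive definition, and it makes $\dd(A)$ and $\dd(A+v_1^t v_2)$ both nonzero. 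The technical burden is to verify that no intermediate circuit appearing in the algebraic manipulations above introduces a new inverse gate whose nonvanishing is not already supplied by these assumptions.
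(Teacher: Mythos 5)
Your first step matches the paper exactly: partition $A$, apply the inductive hypothesis to $\Det(A_1+u^tw)$, cancel $\Det(A_1)$, and reduce the statement to the $\Det$-free identity
\[
(1+wA_1^{-1}u^t)\,\dd(A+v_1^tv_2)=\dd(A)\,(1+v_2A^{-1}v_1^t),
\]
which is exactly the paper's \eqref{eq: vv2}. Your $(n+1)\times(n+1)$ factorisation and the scalar Sherman--Morrison identity $(1+v_2A^{-1}v_1^t)(1-v_2(A+v_1^tv_2)^{-1}v_1^t)=1$ that you extract from \eqref{eq: ddXY} are both correct and a nice observation.

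From there the two proofs diverge, and this is where your sketch has a genuine gap. The paper does \emph{not} prove \eqref{eq: vv2} by expanding in the general setting. Instead it first establishes the special case $A=I_n$ (equation \eqref{eq: vv3}), where the only inverse to be manipulated is $(I_{n-1}+\bar u_1^t\bar u_2)^{-1}=I_{n-1}-(1+\alpha)^{-1}\bar u_1^t\bar u_2$, a closed form that is easy to verify; it then reduces the general case to this one via the LU-decomposition $A=L(A)U(A)$ of Proposition \ref{lem: LU}, setting $\bar v_1^t:=L^{-1}v_1^t$, $\bar v_2:=v_2 U^{-1}$ and using \eqref{eq: ddXY} to pull $\dd(L)\dd(U)=\dd(A)$ out. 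Your plan, by contrast, is to expand $\dd(A+v_1^tv_2)$ directly via its defining formula \eqref{eq: dd}. But $\dd(A+v_1^tv_2)=(\alpha+st)-(b+sw)(A_1+u^tw)^{-1}(a^t+tu^t)$ contains the matrix inverse $(A_1+u^tw)^{-1}$, and to relate that to $A_1^{-1}$ you need the full \emph{matrix} Sherman--Morrison formula, not just the scalar identity you derive from \eqref{eq: ddXY}. The scalar identity $(1+v_2A^{-1}v_1^t)^{-1}=1-v_2(A+v_1^tv_2)^{-1}v_1^t$ is moreover essentially a restatement of the very matrix-determinant lemma you are proving (it follows by applying the lemma at $(A+v_1^tv_2,-v_1,v_2)$), so leaning on it as the ``bridge'' risks circularity unless you give it an independent $\PI$-proof and then do a nontrivial amount of additional algebra. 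The phrase ``the reduced identity then follows by purely algebraic rearrangement'' therefore hides the hardest part of the proof: deriving and applying $(A_1+u^tw)^{-1}=A_1^{-1}-(1+wA_1^{-1}u^t)^{-1}A_1^{-1}u^twA_1^{-1}$ inside $\PI$, and then carrying out a rather delicate scalar computation. The LU-reduction the paper uses exists precisely to sidestep this: after conjugating by $L$ and $U$, the only inverse that needs to be expanded is that of a rank-one perturbation of the identity, which is tractable.
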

\Commentdel{***I couldn't verify all steps in this proof. I wrote which places precisely.}

\begin{proof}
The proof is constructed by induction on $n$. If $n=1$, the identity is immediate. If $n>1$, partition
$A$ and $A+v_{1}^{t}v_{2}$ as in (\ref{eq: X}), i.e.,
 \[A= \left(\begin{array}{l r}
                        A_{1} & w_{1}^{t}\\
                        w_{2} & a_{nn}
                \end{array}\right)\,\hbox{ and }\,\,
A+v_{1}^{t}v_{2}= \left(\begin{array}{l r}
                        A_{1}+u_{1}^{t}u_{2} & w_{1}^{t}+c_{2}u_{1}^{t}\\
                        w_{2}+c_{1}u_{2} & a_{nn}+c_{1}c_{2}
                \end{array}\right)\,,           
                \]
where $v_{1}=(u_{1}, c_{1})$ and $v_{2}=(u_{2},c_{2})$. We want to construct a polynomial size proof of (\ref{eq: vv1})
from the assumption $\Det(A_{1}+u_{1}^{t}u_{2})= \Det(A_{1})(1+u_{2}A_{1}^{-1}u_{1}^{t})$. This implies that (\ref{eq: vv1}) has a polynomial size proof.

By the definition of $\Det$, we have
\[\Det(A)=\Det(A_{1})\dd(A)\, \,\, \hbox{ and ~}\,\, \Det(A+v_{1}^{t}v_{2})= \Det(A_{1}+u_{1}^{t}u_{2})\dd(A+v_{1}^{t}v_{2})\,.\]
By the assumption, $\Det(A_{1}+u_{1}^{t}u_{2})= \Det(A_{1})(1+u_{2}A_{1}^{-1}u_{1}^{t})$ and so (\ref{eq: vv1}) is equivalent to
\[\Det(A_{1})(1+u_{2}A_{1}^{-1}u_{1}^{t})\dd(A+v_{1}^{t}v_{2})= \Det(A_{1})\dd(A)(1+v_{2}A^{-1}v_{1}^{t})\,. \]
Hence in order to prove (\ref{eq: vv1}), it is sufficient to prove
\begin{equation}\label{eq: vv2}
(1+u_{2}A_{1}^{-1}u_{1}^{t})\dd(A+v_{1}^{t}v_{2})= \dd(A)(1+v_{2}A^{-1}v_{1}^{t})\,.
\end{equation}

In order to prove (\ref{eq: vv2}), we first prove its special case
\Commentdel{Perhaps it will be easier to read if you put this special case into a separate claim}\begin{equation}\label{eq: vv3}
(1+\bar u_{2}\bar u_{1}^{t})\dd(I_{n}+\bar v_{1}^{t}\bar v_{2})= (1+\bar v_{2}\bar v_{1}^{t})\,\iddo{,}
\end{equation}
where $\bar v_{1}=(\bar u_{1}, \bar c_{1})$ and $\bar v_{2}=(\bar u_{2}, \bar c_{2})$ are vectors such that
$I_{n}+\bar v_{1}^{t}\bar v_{2}$ is invertible.
Let $\alpha:= \bar u_{2}\bar u_{1}^{t}$.
By the definition of $\dd$
\[
\dd(I_{n}+\bar v_{1}^{t}\bar v_{2})= 
1+\bar c_{1} \bar c_{2} - \bar c_{1} \bar c_{2}\bar u_{2}(I_{n-1}+\bar u_{1}^{t}\bar u_{2})^{-1}\bar u_{1}^{t} 
\]
and it is \iddosml{also easy to derive:}  \Commentdel{I couldn't verify this? If it stems from the definition of $X^{-1}$ and proved by induction on $n$, then I guess it should be proved. Or maybe there's a direct proof of this?}
\[
(I_{n-1}+\bar u_{1}^{t}\bar u_{2})^{-1}= I_{n-1}- (1+\alpha)^{-1}\bar u_{1}^{t}\bar u_{2}\,.
\]
Hence we obtain
\begin{align*}(1+\bar u_{2}\bar u_{1}^{t})\dd(I_{n}+\bar v_{1}^{t}\bar v_{2})=& (1+\alpha)(1+\bar c_{1}\bar c_{2}-\bar  c_{1}\bar c_{2} \bar u_{2}(I_{n-1}-(1+\alpha)^{-1}\bar u_{1}^{t}\bar u_{2})\bar u_{1}^{t}\iddo{)}\,\\
=& (1+\alpha)\left(1+\bar c_{1}\bar c_{2} - \bar c_{1}\bar c_{2}\bar u_{2}\bar u_{1}^{t} - \bar c_{1}\bar c_{2}(1+\alpha)^{-1}(\bar u_{2}\bar u_{1}^{t})^{2}\right)\\
=& (1+\alpha)(1+\bar c_{1}\bar c_{2} - \bar c_{1}\bar c_{2}\alpha - \bar c_{1}c_{2}(1+\alpha)^{-1}\alpha^{2})\\ =& 1+\bar c_{1}\bar c_{2}+\alpha\\
= &1+ \bar v_{2}\bar v_{1}^{t}\,,
\end{align*}
which proves (\ref{eq: vv3}).

In order to conclude (\ref{eq: vv2}), let
$L:=L(A)$ and $U:=U(A)$ be the
matrices from Proposition \ref{lem: LU}. That is, $L$ and $U$ are \iddosml{invertible lower and upper triangular matrices, respectively,  so that $A=LU\,$  has a polynomial-size proof, and hence also $A^{-1}=U^{-1}L^{-1}$ has a polynomial-size proof by Corollary \ref{cor:invers}.} \mardel{change phrase only}

Let
\[\bar v_{1}^{t}:= L^{-1}v_{1}^{t}\,\hbox{ ~~~and~~~ }\,\, \bar v_{2}:= v_{2}U^{-1}\,.\]
The definition guarantees that \Commentdel{Why precisely?}
\begin{align}
&\bar u_{2} \bar u_{1}^{t}= u_{2}A^{-1}_{1}u_{1}^{t}\,\mbox{~~~~\iddosml{and}~~~~}\,\,\bar v_{2} \bar v_{1}^{t}= v_{2}A^{-1}v_{1}^{t}\label{eq: vv4}
\end{align}
have polynomial size proofs, \iddosml{where $\bar v_1=(\bar u_1, c_1)$ and $\bar v_2=(\bar u_2,\bar c_2)$}.\mardel{I defined $\bar v_1$..., again, cause they're, formally,  different from above...}  Moreover, $A+v_1^tv_2= L(I_{n}+\bar v_{1}^t\bar v_{2})U$, which also shows that $I_{n}+\bar v_{1}^t\bar v_{2}$ is invertible \Commentdel{Why is this invertible?}. \iddosml{Equation} (\ref{eq: ddXY}) implies that $\dd(LB)=\dd(L)\dd(B)$ and
$\dd(BU)=\dd(B)\dd(U)$  have \iddosml{polynomial-size proof} (for any invertible $B$) \Commentdel{I also didn't get why this holds..?}.
Hence
\[
\dd(A+v_1^{t}v_{2})=\dd(L)\dd(U)\dd(I_{n}+\bar v_{1}^{t}\bar v_{2})= \dd(A)\dd(I_{n}+\bar v_{1}^{t}\bar v_{2})\,.
\]
This, \iddosml{together with} (\ref{eq: vv4}), gives (\ref{eq: vv2}) from (\ref{eq: vv3}). \Commentdel{Why?}
\end{proof}

\begin{proposition}\label{lem:det}\mbox{}
\begin{enumerate}
\item Let $U$ be an (upper or lower) triangular matrix with $u_{1},\dots ,u_{n}$ on the diagonal. If $u_{1}^{-1},\dots, u_{n}^{-1}$ are defined then \iddosml{the following  has a polynomial-size $\PI$ proof:}
\[\Det(U)= u_{1}\cdots u_{n}\,.\]

\label{part1}
\item
 Let $X$ and $Y$ be $n\times n$ matrices, each consisting of pairwise distinct variables. Then
\begin{equation}\label{eq:det}\Det(X\cdot Y) = \Det(X)\cdot \Det(Y)\end{equation}
has a polynomial-size $\PI$ proof. The proof is defined for $X=A, Y=B$ provided
$A[k], B[k]$ and $A[k]B[k]$ are invertible for every $k\in \{1,\dots, n\}$.
\end{enumerate}
\end{proposition}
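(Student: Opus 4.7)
My plan is to prove both parts by induction on $n$, exploiting the block decomposition (\ref{eq: X}) and the recursive definition of $\Det$.

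For part~\ref{part1}, consider $U$ upper triangular with diagonal entries $u_1,\dots,u_n$; partitioning as in (\ref{eq: X}) gives $v_2=0$ (the zero row), so $\dd(U)=u_{nn}-v_2 U_1^{-1}v_1^t$ simplifies to $u_n$ by axiom A8 in a constant-size proof, without needing to evaluate $U_1^{-1}$ itself. The recursive definition then yields $\Det(U)=\Det(U_1)\cdot u_n$, and the induction hypothesis applied to $U_1$ (again upper triangular with diagonal $u_1,\dots,u_{n-1}$, each having a defined inverse) completes the step. The lower triangular case is symmetric, with $v_1=0$. Defined-ness of the proof follows from property~3 of invertibility applied recursively to the principal minors.

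For the multiplicativity identity (\ref{eq:det}), the inductive step proceeds as follows. Partition $X$ and $Y$ as in (\ref{eq: X}), with $Y$-blocks $Y_1,u_1^t,u_2,y_{nn}$. A direct matrix calculation (using Lemma~\ref{lem: matrices}) gives $(XY)[n-1]=X_1Y_1+v_1^t u_2$, so by the definition of $\Det$ and by Lemma~\ref{lem: vv} applied with $A=X_1Y_1$,
\[\Det(XY)=\Det(X_1Y_1+v_1^t u_2)\cdot\dd(XY)=\Det(X_1Y_1)\cdot(1+u_2(X_1Y_1)^{-1}v_1^t)\cdot\dd(XY).\]
The induction hypothesis rewrites $\Det(X_1Y_1)$ as $\Det(X_1)\Det(Y_1)$, and Corollary~\ref{cor:invers} rewrites $(X_1Y_1)^{-1}$ as $Y_1^{-1}X_1^{-1}$. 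Equation~(\ref{eq: ddXY}) then supplies the identity $(1+u_2 Y_1^{-1}X_1^{-1}v_1^t)\cdot\dd(XY)=\dd(Y)\dd(X)$; substituting this into the previous display yields
\[\Det(XY)=\Det(X_1)\Det(Y_1)\,\dd(X)\dd(Y)=\Det(X)\Det(Y)\]
by the definition of $\Det$. Each step adds only polynomial overhead, so the total proof size is polynomial in $n$. Note that no division is actually performed in this manipulation: the common factor $(1+u_2Y_1^{-1}X_1^{-1}v_1^t)$ is absorbed via substitution of a proved equation.

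The main obstacle is the bookkeeping of defined-ness rather than the algebra. Under the hypothesis that $A[k],B[k],A[k]B[k]$ are all invertible for $k\in\{1,\dots,n\}$, property~4 of invertibility applied at $k=n$ to $AB$ ensures $(AB)[k]^{-1}$ is defined for every $k$, which handles every invocation of an inverse inside $\Det(AB)$. The $k=n-1$ instance of the hypothesis covers the application of Lemma~\ref{lem: vv} (which requires both $A_1B_1$ and $A_1B_1+v_1^t u_2=(AB)[n-1]$ to be invertible) and of equation~(\ref{eq: ddXY}) (which requires $A_1,B_1,A_1B_1$ invertible). Finally, since $A_1[k]=A[k]$, $B_1[k]=B[k]$ and $A_1[k]B_1[k]=A[k]B[k]$, the hypothesis restricts cleanly to $A_1,B_1$, so the induction hypothesis is legitimately invoked.
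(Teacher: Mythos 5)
Your proposal is correct and follows the paper's proof essentially step by step: part~\ref{part1} fills in the "omitted details" via the same recursion $\dd(U)=u_{nn}$, and part~\ref{part2} is exactly the paper's induction reducing \eqref{eq:det} to Lemma~\ref{lem: vv}, Corollary~\ref{cor:invers}, and identity~\eqref{eq: ddXY}. One small slip in the definedness bookkeeping: the invertibility of $X_1Y_1+v_1^t u_2=(XY)[n-1]$ needed by Lemma~\ref{lem: vv} is supplied by the invertibility of $XY$ (via property~4), not by the $k=n-1$ instance of the hypothesis, which only gives $X_1Y_1=X[n-1]Y[n-1]$ invertible — but you already note the former earlier, so the argument goes through.
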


\begin{proof}
Part \ref{part1} \iddosml{follows} from the definition of $\Det$. \iddosml{We omit the details.}\Commentdel{I checked Part (i), and the proof is not that direct (one has to go by induction). So I guess either we have to write the proof, or say that we ``omit the details" ?}

\ind Part \ref{part2} is proved by induction on $n$. If $n=1$, it is immediate. Assume that $n>1$.
Let
\[X=
\left( \begin{array}{l r} X_{1} & v_{1}^{t}\\
v_{2  } & x_{nn} \end{array}\right)\,, \,\,
~~~~Y=
\left( \begin{array}{l r} Y_{1} & u_{1}^{t}\\
u_{2} & y_{nn} \end{array}\right)\,.
\]

We want to construct a polynomial size proof of $\Det(XY)=\Det(X)\Det(Y)$ from the assumption $\Det(X_{1}Y_{1})=\Det(X_{1})\Det(Y_{1})$.
This implies that $\Det(XY)=\Det(X)\Det(Y)$ has a polynomial size proof.

By the definition of $ \Det $, we have
\begin{gather*}
\Det(X)=\Det(X_{1})\dd(X)\,,~~~~~ \Det(Y)= \Det(Y_{1})\dd(Y)\,~\hbox{\iddosml{ ~and} }\,\, \\ \Det(XY)= \Det(X_{1}Y_{1}+ v_1^tu_{2})\dd(XY)\,,
\end{gather*}

and we are supposed to prove:
\begin{equation}
\Det(X_{1}Y_{1}+ v_1^tu_{2})\dd(XY)= \Det(X_{1})\dd(X) \cdot \Det(Y_{1})\dd(Y)\,. \label{eq: sup}
\end{equation}
By the previous lemma, we have $\Det(X_{1}Y_{1}+ v_1^tu_{2})= \Det(X_{1}Y_{1})(1+u_{2}(X_{1}Y_{1})^{-1}v_{1}^{t})$. By the assumption $\Det(X_{1}Y_{1})=\Det(X_{1})\Det(Y_{1})$,  this yields
\[
\Det(X_{1}Y_{1}+ v_1^tu_{2})= \Det(X_{1})\Det(Y_{1})(1+u_{2}Y_{1}^{-1}X_{1}^{-1}v_{1}^{t})\,.
\]
Hence in order to prove (\ref{eq: sup}), it is sufficient to prove
\begin{equation*}
(1+u_{2}Y_{1}^{-1}X_{1}^{-1}v_{1}^{t})\dd(XY)= \dd(X)\dd(Y)\,.\end{equation*}
But this follows from (\ref{eq: ddXY}).

On the inductive step, we have assumed
 invertibility of $X,Y$, $XY$, $X_{1}$, $Y_{1}$ and $X_{1}Y_{1}$, as well as invertibility of  $X_{1}Y_{1}+v_1^tu_{2}$.
\iddoacc{The latter follows from the invertibility of $XY$, 
because $(X_{1}Y_{1}+v_1^tu_{2})^{-1} = ((XY)^{-1})[n-1]$ is used in the definition of $(XY)^{-1}$.}  
Since $X_{1}=X[n-1]$, $Y_{1}=Y[n-1]$, the proof altogether assumes invertibility of $X[k], Y[k]$ and $X[k]Y[k]$ for every $k\in \{1,\dots, n\}$.
\end{proof}
\QuadSpace 

Let us  explicitly state the important cases when the proof of $\Det(AB)=\Det(A)\Det(B)$ is defined. This is so, if $A$ and $B$ are invertible \iddoacc{and also at least one of the following conditions hold:} \mardel{Is this what you meant? The previous phrase was ambiguous. But here, it is still a bit confusing: if (i) holds then A and B are invertible in any case (so saying that BOTH (i) holds and A,B are invertible is redundant). So it would be best to write in brackets in (i) that "(in this case A,B are invertible automatically)".}
\begin{enumerate}
\item The entries of $A,B$ compute algebraically independent rational functions; 
\item $A$ is lower triangular or $B$ is upper triangular;
\item The entries of $A$ are field elements and the entries of $B$ are algebraically independent, or vice versa.
\end{enumerate}

The following lemma shows that elementary Gaussian operations are well-behaved with respect to $\Det$.

\begin{lemma} \label{lem:elop} Let $X=\{x_{ij}\}_{i,j\in[n]}$ be an $n\times n$  matrix of distinct variables. Then the following have polynomial-size $\PI$ proofs:
\begin{enumerate}
\item $\Det (X)=-\Det(X')$, where $X'$ is a matrix obtained from $X$ by interchanging two rows or columns.\label{exception}
\item $\Det(X'')= u\Det(X)$, where $X''$ is obtained by multiplying a row  in $X$ by $u$, such that $u^{-1}$ is defined (and similarly for a column).\label{part2}
\item $\Det(X)= \Det(X''')$, where $X'''$ is obtained by adding a row to a different row in $X$ (and similarly for columns). \label{part3}
\item $\Det(X)=x_{nn}\Det(X_{1}-x_{nn}^{-1}v_{1}^{t}v_{2}) $, where $X_{1},v_{1}$ and $v_{2}$ are from the decomposition (\ref{eq: X}). \label{partN}
\end{enumerate}
\end{lemma}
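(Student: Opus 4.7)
The plan is to reduce parts (i), (ii), (iii) to the multiplicative law of $\Det$ (Proposition \ref{lem:det}) combined with the value of $\Det$ on triangular matrices (part \ref{part1} of the same proposition), while part (iv) will follow directly from the matrix determinant identity already established in Lemma \ref{lem: vv}.

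For part (ii), write $X'' = D_u \cdot X$, where $D_u$ equals $I_n$ except for entry $u$ on the diagonal at the row being scaled. Since $D_u$ is triangular and $u^{-1}$ is defined, part \ref{part1} of Proposition \ref{lem:det} gives $\Det(D_u) = u$, and the multiplicative law then yields $\Det(X'') = \Det(D_u)\Det(X) = u\Det(X)$. For part (iii), set $E_{ij} := I_n + e_{ij}$, where $e_{ij}$ is the unit matrix at position $(i,j)$ with $i\neq j$; this matrix is either upper or lower triangular with $1$'s on the diagonal, so $\Det(E_{ij}) = 1$, and a direct entry-wise computation shows that $E_{ij}\cdot X$ equals $X$ with row $j$ added to row $i$. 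The column versions of (ii) and (iii) are handled analogously by right-multiplication.

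Part (i) then follows by composition: the row swap $R_i \leftrightarrow R_j$ can be realized by the six-step sequence ``add, negate, add, negate, add, negate'' applied alternately to rows $i$ and $j$, in which an odd number of the scalings by $-1$ contribute factors of $-1$ (by part (ii)) while the row additions preserve $\Det$ (by part (iii)), yielding $\Det(X') = -\Det(X)$; column swaps are symmetric. Part (iv) is a direct computation: apply Lemma \ref{lem: vv} with $A = X_1$ and rank-one update $-x_{nn}^{-1} v_1^t v_2$ (i.e., $w_1 = -x_{nn}^{-1} v_1$, $w_2 = v_2$) to obtain
\[
\Det(X_1 - x_{nn}^{-1} v_1^t v_2) = \Det(X_1)\bigl(1 - x_{nn}^{-1} v_2 X_1^{-1} v_1^t\bigr),
\]
and multiply both sides by $x_{nn}$ to conclude $x_{nn}\Det(X_1 - x_{nn}^{-1} v_1^t v_2) = \Det(X_1)(x_{nn} - v_2 X_1^{-1} v_1^t) = \Det(X_1)\dd(X) = \Det(X)$ by the definition of $\Det$.

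The main obstacle will be the bookkeeping of the invertibility conditions required to apply Proposition \ref{lem:det} and Lemma \ref{lem: vv} at each step, so as to keep every $\PI$ proof in sight defined. This is handled uniformly: each elementary matrix $E$ used above is invertible as a matrix of field entries (or of the single variable $u$ in the scaling case), so $EX$ and $XE$ arise from $X$ by an invertible linear change of coordinates on its algebraically independent entries; consequently every intermediate matrix in the row-swap sequence retains algebraically independent entries, every one of its leading principal submatrices is invertible in the $\PI$ sense, and the hypotheses of Proposition \ref{lem:det} and Lemma \ref{lem: vv} are met throughout.
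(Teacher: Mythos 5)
Your proof is correct, and for parts (ii), (iii), and (iv) it follows the paper's approach (parts (ii)/(iii) by left-multiplication with an invertible triangular elementary matrix, part (iv) by applying Lemma~\ref{lem: vv} to $X_1$ with the rank-one perturbation $-x_{nn}^{-1}v_1^tv_2$). The one genuine deviation is in part (i): you realize a row swap as the composition of six operations drawn from parts (ii) and (iii) (three row additions interleaved with three scalings by $-1$, contributing the factor $(-1)^3$), whereas the paper factors the transposition matrix $T$ directly as $T=A_1A_2$ with $A_1,A_2$ both invertible in the $\PI$ sense and $\Det(A_1)\cd\Det(A_2)=-1$, exhibiting a concrete non-triangular $2\times 2$ factorization. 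The paper's route is more compact --- two applications of the multiplicativity of $\Det$ instead of six --- but both are polynomial-size, and both rest on the same underlying observation that the transposition matrix itself is not invertible in the $\PI$ sense and so must be split into pieces that are. Your final paragraph correctly identifies why every intermediate substitution stays defined: each intermediate matrix is $EX$ for some $E$ invertible over $\F$, so its entries, and hence those of every leading principal submatrix, remain algebraically independent.
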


\begin{proof}
Parts \ref{part2} and \ref{part3} follow from Proposition \ref{lem:det} and the fact that $X''= A X$ and $X'''=A'X$, where $A,A'$ are suitable triangular matrices. 

\iddosml{For part \ref{exception}, we cannot directly infer it} from Proposition \ref{lem:det}, since $X'=TX$ implies only that $T$ is a transposition matrix and hence not invertible in our sense.  However, we can write $T= A_{1}A_{2}$, where $A_{1},A_{2}$ are invertible with $\Det(A_{1})\Det(A_{2})= -1$: note that
$\left(\begin{array}{l r} 0&1\\ 1 & 0\end{array}\right)=
\left( \begin{array}{l r} 1&1\\ 1 & 0\end{array}\right)\left(\begin{array}{r r} 1&0\\ -1 & 1\end{array}\right)$.
Since $X$ is a matrix of distinct variables, the following is defined:
\begin{align*}
\Det(A_{1}A_{2}X)=\Det (A_{1})\Det(A_{2}X)= \Det(A_{1})\Det(A_{2})\Det(X)\,.
\end{align*}
Part \ref{partN} follows from Lemma \ref{lem: vv}.
\end{proof}

\subsection{The determinant as a polynomial}\label{sec:det as polynomial}
Note that we cannot yet apply Theorem \ref{thm: main divisions} to obtain Theorem \ref{thm: main det}, because $\Det$ itself contains division gates. For our purpose it will suffice to compute the determinant by a circuit without division, denoted $\det(X)$, and construct a proof of  $\det(X)=\Det(X)$ in $\PI$. In order to do that, we will define $\det(X)$ as the $n$th term of the Taylor expansion of $\Det (I+zX)$ {at $ z=0 $}, as follows: using notation from Section \ref{sec:Taylor}, let
\begin{equation}\label{eq:new-det}
\det(X):= \Coef_{z^{n}}\left(\Det(I+zX)\right).
\end{equation}
Let us note that
\begin{enumerate}[(i)]
\item \iddosml{the circuit} $\det(X)$ indeed computes the determinant of $X$\iddosml{; and}\vspace{-6pt} 
\item \iddosml{the circuit} $\det(X)$ is a circuit without division\iddo{,} of syntactic degree $n$.
\end{enumerate}
This is because every variable from $X$ in the circuit $\Det(I+zX)$ occurs in a product with $z$\iddosml{, and thus} $\Coef_{z^{n}}(\Det(I+zX))$ is the  $n$th homogeneous part of the determinant of $I+X$, which is simply the determinant of $X$.
By the definition of $\Coef_{z^{n}}$, $\Coef_{z^{n}}(\Det(I+zX))$ contains exactly one inverse gate, namely the inverse of $\Den(\Det(I+zX))$ at the point $z=0$. But $a:= (\Den(\Det(I+zX)))(z/0)^{\sharp}$ is a constant circuit computing a non-zero field element, and we can identify $a^{-1}$ with the field constant it computes.

\begin{lemma}\label{lem: CH} Let $X$ be an {$ n\times n $} matrix of distinct variables.
There exist circuits with divisions $P_{0},\dots, P_{n-1}$ not containing the variable $z$, such that
        \[\Det(zI_{n}+X)= z^{n}+ P_{n-1}z^{n-1}+\dots+P_{0}
        \]
has a polynomial-size $\PI(\F)$ proof. Moreover, this proof is defined for $z=0$.
\end{lemma}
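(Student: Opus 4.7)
The plan is to prove the lemma by induction on $n$, strengthening the inductive hypothesis to include an ``adjugate form'' for $\Det(zI_m+X)\cdot(zI_m+X)^{-1}$. Concretely, for each $m\leq n$ I would simultaneously establish, with polynomial-size $\PI$ proofs defined at $z=0$:
\begin{itemize}
\item[(I)] $\Det(zI_m+X)=z^m + P_{m-1}z^{m-1}+\cdots+P_0$ (the lemma statement),
\item[(II)] $\Det(zI_m+X)\cdot (zI_m+X)^{-1} = \sum_{k=0}^{m-1} R_{m,k}\,z^k$ as a matrix of polynomials in $z$,
\end{itemize}
where $P_i$ and the entries of $R_{m,k}$ are circuits with divisions not containing $z$. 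The base case $m=1$ is immediate from the definitions.

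For the inductive step, write $M:=zI_n+X$, $Y:=M[n-1]=zI_{n-1}+X_1$, $D:=\Det(Y)$, and $\dd:=(z+x_{nn})-v_2 Y^{-1}v_1^t$, so that $\Det(M)=D\cdot\dd$ by the recursive definition. For (I) at $n$, the crucial observation is
\[
\Det(M)=D(z+x_{nn})-v_2\cdot(DY^{-1})\cdot v_1^t;
\]
by (II) at dimension $n-1$ we have $DY^{-1}=R_{n-1}$, a polynomial matrix in $z$, so the right-hand side is manifestly polynomial of degree $n$ with leading coefficient $1$, and $P_0,\dots,P_{n-1}$ are read off by collecting coefficients using (I) at $n-1$. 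This step invokes (I) and (II) at $n-1$ only once each, so the recurrence $s_n\leq s_{n-1}+\poly(n)$ closes at $\poly(n)$.

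For (II) at $n$, I would multiply the recursive formula for $M^{-1}$ by $D\dd$. The $(n,n)$, $(i,n)$ and $(n,j)$ entries simplify directly to polynomials via (I) and (II) at $n-1$. The main obstacle is the $(i,j)$-entry with $i,j<n$, which after algebraic rearrangement equals
\[
(z+x_{nn})(R_{n-1})_{ij}+\tfrac{1}{D}\bigl[(R_{n-1}v_1^t)_i(v_2 R_{n-1})_j - (v_2 R_{n-1}v_1^t)(R_{n-1})_{ij}\bigr],
\]
so polynomiality amounts to proving a Sylvester/Jacobi-type identity: the bracket must equal $D\cdot\sum_{k,\ell} v_{1,k}v_{2,\ell}\det(Y[\{i,\ell\}^c,\{k,j\}^c])$. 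The hard part will be producing a polynomial-size $\PI$ derivation of this identity; my plan is to obtain it by directly manipulating the matrix equation $R_{n-1}\cdot Y=D\cdot I_{n-1}$ (which is (II) at $n-1$) together with the recursive formula for $Y^{-1}$, though an alternative route would be to first establish cofactor expansion of $\Det$ and deduce Jacobi's identity as a corollary. Throughout, definedness at $z=0$ is preserved because $X_1$ consists of distinct variables, so $X_1^{-1}$ and $\Det(X_1)$ are defined and all intermediate divisions reduce to nonzero rational functions in $X$.
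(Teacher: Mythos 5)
Your approach is genuinely different from the paper's, but it has a real gap that you yourself flag and do not close.

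The paper does \emph{not} carry the adjugate $\Det(zI_m+X)\cdot(zI_m+X)^{-1}$ as part of its induction. Instead it runs a Gaussian-elimination-style sweep: starting from $A_1=zI_n+X$, it constructs $(n-k+1)\times(n-k+1)$ matrices $A_k$ whose top-left entry has the form $z^k+f$ (with $f$ of $z$-degree $<k$), whose remaining entries in the first row/column have small $z$-degree, and whose lower-right block is again of the form $zI+Q$ with $Q$ free of $z$. Each step $\Det(A_k)=\Det(A_{k+1})$ is obtained purely from the two already-established elementary operations of Lemma~\ref{lem:elop}: a column swap and the Schur-complement identity $\Det(X)=x_{nn}\Det(X_1-x_{nn}^{-1}v_1^t v_2)$. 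After $n-1$ steps one lands on a $1\times1$ matrix $z^n+f$, and no auxiliary minor identities are ever needed.

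Your route instead needs, inside the inductive step for~(II), the Sylvester/Jacobi-type fact that $(R_{n-1}v_1^t)_i(v_2 R_{n-1})_j-(v_2 R_{n-1}v_1^t)(R_{n-1})_{ij}$ is divisible by $D$. You label this ``the hard part'' and offer two possible ways to obtain it, but you do not actually produce a polynomial-size $\PI$ derivation of it, and neither sketched option is immediate. Manipulating $R_{n-1}Y=D\cdot I_{n-1}$ alone does not obviously yield a cancellation of the $1/D$ factor in polynomial size; it gives $n-1$ relations among $n^2$ unknowns and some algebraic work is needed to extract the $2\times2$-minor combination. The alternative --- first establishing cofactor expansion for $\Det$ --- is also nontrivial: the paper's cofactor expansion (Proposition~\ref{prop:cofactor}) is stated for the division-free $\detc$ and is derived \emph{after} the present lemma via the division-elimination and balancing theorems, so you would have to re-develop a $\Det$-level version of it within $\PI$ from scratch and then deduce Jacobi's identity from it, which is a substantial additional argument. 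In short, the paper's matrix-shrinking device exists precisely to avoid having to certify polynomiality of the adjugate, and by switching to the adjugate-carrying induction you reintroduce that burden without discharging it. One further small issue: your size recurrence $s_n\le s_{n-1}+\poly(n)$ only accounts for part~(I); the proof of part~(II) at level $n$ must also be bounded, and its dominant cost is exactly the unproven Jacobi step, so the claimed closure of the recurrence is premature.
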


\begin{proof}
{Let $ F $ be a circuit in which $z$ does not occur in the scope of any inverse gate. Then, we define the \emph{$z$-degree of $F$}
as the syntactic-degree of $ F $ considered as a circuit computing a univariate polynomial in $ z $ (so that all other variables are treated as constants).}

By induction, we will construct matrices $A_{1},\dots, A_{n}$  with the following properties:
\begin{enumerate}[1.]
\item $A_{1}= X+zI_n$,
\item Every $A_{k}$ is an $(n-k+1)\times (n-k+1)$ matrix of the form
\[\left(\begin{array}{l r} z^{k}+f & w\\
v^{t} & zI_{n-k}+Q
\end{array}\right)\,
\]
where all the entries are circuits with division\iddosml{s} in which $z$ does not occur in the scope of any division gate, $v,w$ are $1\times (n-k)$ vectors and moreover:
 $f$ as well as every entry of $w$ have $z$-degree less than $k$ and $v,Q$ do not contain the variable $z$.
 \label{item:c3}

\item The identity $\Det(A_{k})=\Det(A_{k+1})$ has a polynomial-size proof. 
\item The entries of $A_{k}$ are algebraically independent (this is to guarantee that divisions are defined).
\end{enumerate}
\FullSpace

Assume that $A_{k}$  is given, and let us partition it as
\[A_{k} = \left(\begin{array}{l l l} z^{k}+f_{1} & w & f_{2} \\
                                        u_{1}^{t} & zI_{m}+ Q & u_{2}^{t} \\
                                        a_{1} & v & z+a_{2}
                                         \end{array}\right)\,
                                         \]
where  $m= (n-k-1)$ and we allow the \iddosml{possibility} that $m=0$. \iddosml{By assumption $f_{1}, w$ and $ f_{2}$  have $z$-degree smaller than $ k$, and $ z $ does not occur in $u_1,u_2,Q, a_1, a_2$ and $v$.}
By \iddosml{Lemma \ref{lem:elop} part \ref{exception}}, we can switch the first and last column to obtain a $\PI$ proof of
\[\Det(A_{k}) = - \Det
\left(\begin{array}{l l l} f_{2}&w& z^{k}+f_{1} \\
                                         u_{2}^{t}& zI_{m}+ Q & u_{1}^{t}  \\
                                           z+a_{2}& v& a_{1}
                                         \end{array}\right)\,. \]
By Lemma \ref{lem:elop} part \ref{partN}, we have
\begin{align*}
\Det(A_{k}) = - a_{1} \Det
\left(\begin{array}{l r} f_{2} - a_{1}^{-1}(z^{k}+f_{1})(z+a_{2})&w-a_{1}^{-1}(z^{k}+f_{1})v  \\
                                         u_{2}^{t}- a_{1}^{-1}u_{1}^{t}(z+a_{2})&zI_{m}+ Q-a_{1}^{-1}u_{1}^{t}v   \\             
                                         \end{array}\right)\,
                                         =\\
                                \Det
\left(\begin{array}{l r}  (z^{k}+f_{1})(z+a_{2})- a_{1}f_{2}&a_{1}w-(z^{k}+f_{1})v  \\
                                         u_{2}^{t}- a_{1}^{-1}u_{1}^{t}(z+a_{2})&zI_{m}+ Q-a_{1}^{-1}u_{1}^{t}v   \\                                     
                                         \end{array}\right)
                                         \,.
        \end{align*}
We can write $(z^{k}+f)(z+a_{2})= z^{k+1}+ (fz+a_{2}z^{k}+fa_{2})$, where the $z$-degree of $(fz+a_{2}z^{k}+fa_{2})$ as well as of every entry of $a_{1}w-(z^{k}+f_{1})v $ is at most $k$. Hence the matrix is of the \iddosml{correct} form, apart from the occurrence of $zu_{1}^{t}$ in the first column. This can be remedied by multiplying by $\left(\begin{array}{l r} 1 & 0\\ -a_{1}^{-1}u_{1}^{t} & I_m
\end{array}\right)$ from the right to obtain $A_{k+1}$ of the required form.

This indicates that, given a circuit computing $A_{k}$, we can compute $A_{k+1}$ using polynomially many additional gates. Altogether, every  $A_{k}$ has a polynomial size circuit. The proof of $\Det (A_{k})=\Det (A_{k+1})$ has a polynomial number of lines and, as it involves polynomial size circuits, also polynomial size.

Finally, we obtain a polynomial size proof of $\Det (A_{n})=\Det( A_{1})= z^{n}+ f$, where $f$ is a circuit with $ z $-degree \iddosml{smaller than} $n $ in which $ z $ is not in the scope of any division gate. Writing $f$ as  $\sum_{i=0}^{n-1}P_{i}z^{i}$ concludes the lemma.
\end{proof}

\begin{proposition}\label{lem:charpoly}
\
\begin{enumerate}
\item \label{char-jedna} If $U$ is a triangular matrix with $u_{1},\dots, u_{n}$ on the diagonal then $\det(U)=u_{1}\cdots u_{n}$ has a polynomial size $\PI$ proof.

\item \label{char-dva} Let $X$ be an {$ n\times n $} matrix of distinct variables. Then
\[\Det(X)=\det(X)\,\]
has a polynomial-size $\PI$ proof.
\end{enumerate}
\end{proposition}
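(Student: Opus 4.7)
For part \ref{char-jedna}, observe that $I+zU$ is triangular with diagonal entries $1+zu_{1},\dots,1+zu_{n}$, each of which is a nonzero rational function (its value at $z=0$ is $1$), so each $(1+zu_{i})^{-1}$ is defined. Proposition \ref{lem:det}(i) then supplies a polynomial-size $\PI$ proof of $\Det(I+zU)=\prod_{i=1}^{n}(1+zu_{i})$. Expanding the product yields $\prod_{i}(1+zu_{i})=\sum_{S\subseteq[n]}z^{|S|}\prod_{i\in S}u_{i}$, a polynomial-size $\PI$ identity. Since the equation $\Det(I+zU)=\prod_{i}(1+zu_{i})$ is defined at $z=0$ (both sides evaluate to $1$), Proposition \ref{prop:Taylor}(3) lets us apply $\Delta_{z^{n}}$ to both sides, and Proposition \ref{prop:Taylor}(2) identifies the $z^{n}$-coefficient of the right hand side as $u_{1}\cdots u_{n}$. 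Combining gives $\det(U)=\Delta_{z^{n}}(\Det(I+zU))=u_{1}\cdots u_{n}$.

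For part \ref{char-dva}, the strategy is to leverage Lemma \ref{lem: CH} with an auxiliary variable. Applied with fresh variable $w$ and fresh matrix $Y$ of distinct variables, Lemma \ref{lem: CH} supplies a polynomial-size $\PI$ proof of
\[
\Det(wI+Y)=w^{n}+P_{n-1}(Y)w^{n-1}+\dots+P_{0}(Y),
\]
defined at $w=0$; substituting $w=0$ identifies $P_{0}(Y)=\Det(Y)$. We now perform two substitutions in this proof: first $Y\mapsto zX$, which is a polynomial substitution whose ring map $\F[Y]\to\F[z,X]$, $y_{ij}\mapsto zx_{ij}$, is injective (monomials go to distinct monomials), so every nonzero divisor in the proof remains nonzero; then $w\mapsto 1$, which is safe because the divisions introduced by Lemma \ref{lem: CH}'s recursion are inverses of the bottom-left entries $a_{1}$ of the intermediate matrices $A_{k}$, and by the form of those matrices the $a_{1}$'s are circuits not containing $w$ at all. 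The combined substitution produces a polynomial-size $\PI$ proof of
\[
\Det(I+zX)=1+P_{n-1}(zX)+P_{n-2}(zX)+\dots+P_{0}(zX).
\]

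To finish, take the $z^{n}$-coefficient of both sides. Because $\Det(wI+Y)$ has total degree $n$ in $(w,Y)$, each $P_{i}(Y)$ is a polynomial of $Y$-degree at most $n-i$; turning this into a $\PI$-provable identity with polynomial overhead uses Proposition \ref{prop: main hom} together with Lemma \ref{lem:bounded degree2}. It then follows that $\Delta_{z^{n}}(P_{i}(zX))=0$ for $i\geq 1$, while $\Delta_{z^{n}}(P_{0}(zX))=\Delta_{z^{n}}(\Det(zX))=\Delta_{z^{n}}(z^{n}\Det(X))=\Det(X)$, the scaling step $\Det(zX)=z^{n}\Det(X)$ being Lemma \ref{lem:elop}(ii) applied once per row. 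Applying $\Delta_{z^{n}}$ to the previous display (via Proposition \ref{prop:Taylor}(3)) and summing yields $\det(X)=\Det(X)$.

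The main obstacle is the bookkeeping of the compound substitution $Y\mapsto zX,\,w\mapsto 1$: beyond the explicit $a_{1}^{-1}$ gates, one must check that every invocation of $\Det$ on an intermediate matrix in Lemma \ref{lem: CH}'s proof stays defined after specialization, and that the $z=0$ evaluations required by Proposition \ref{prop:Taylor}(3) behave properly when passed through the $P_{i}$-circuits. Both points are handled by the algebraic-independence arguments and homogenization tools already developed in Sections \ref{sec:bounding} and \ref{sec: determinant}, but the verification is where the technical bulk of the proof lies.
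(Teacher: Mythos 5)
Your part~(i) is essentially the paper's proof: apply Proposition~\ref{lem:det}(i) to $I+zU$, observe the proof is defined at $z=0$, and extract the $z^n$-coefficient via Proposition~\ref{prop:Taylor}. That is correct.

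For part~(ii), however, your route has a genuine gap in the final step. You substitute $Y\mapsto zX$ and $w\mapsto 1$ into Lemma~\ref{lem: CH} to obtain $\Det(I+zX)=1+\sum_{i=0}^{n-1}P_i(zX)$, and then want to conclude by ``applying $\Delta_{z^n}$ and summing.'' The problem is that the circuits $P_i$ \emph{do} contain the entries of $Y$ inside division gates (the $a_1^{-1}$ gates in Lemma~\ref{lem: CH}'s recursion are inverses of circuits built from entries of $A_k$, which ultimately involve the $Y$-variables), so after the substitution $Y\mapsto zX$ each $P_i(zX)$ has $z$ inside a division gate. For such a circuit, $\Delta_{z^n}$ is defined by Case~2, which does \emph{not} distribute over sums: there is no axiom that gives $\Delta_{z^n}(\sum_i H_i)=\sum_i\Delta_{z^n}(H_i)$ when the $H_i$'s have $z$ under divisions, and your degree-bound appeal to Proposition~\ref{prop: main hom} and Lemma~\ref{lem:bounded degree2} doesn't rescue this, since those results are about division-free $\PC$ circuits and $\PC$ proofs, not about circuits where $z$ occurs inside inverse gates. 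So after the legitimate application of Proposition~\ref{prop:Taylor}(3) you are left with $\det(X)=\Delta_{z^n}\bigl(1+\sum_iP_i(zX)\bigr)$, and proving the right-hand side equals $\Det(X)$ is exactly the technical content you have not supplied.

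The paper sidesteps this difficulty by a different substitution: it applies Lemma~\ref{lem: CH} to $\Det(zI_n+X^{-1})$ (where the substitution $X\mapsto X^{-1}$ is safe because the entries of $X^{-1}$ are algebraically independent rational functions), obtaining $\Det(zI_n+X^{-1})=z^n+\sum_{i<n}Q_iz^i$ with $Q_i$'s \emph{not containing $z$ at all}. Multiplying by $\Det(X)$ via Proposition~\ref{lem:det} yields $\Det(I+zX)=z^n\Det(X)+\sum_{i<n}Q_i'z^i$ where again the $Q_i'$ are free of $z$. On that right-hand side $z$ appears only outside division gates, so Case~1 of $\Delta_{z^n}$ applies and $\Delta_{z^n}$ of the right side literally reads off $\Det(X)$. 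If you want to make your approach work you would essentially need to re-derive such a representation with $z$-free coefficients, which collapses your argument back to the paper's.
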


\begin{proof}
Part \ref{char-jedna} follows from Proposition \ref{lem:det}. For we have $\Det(I_{n}+zU)=(1+zu_{1})\cdots (1+zu_{n})$, and the proof is defined for $z=0$. \Commentdel{Perhaps you need here to state that: ``(we assume that $z$ does not occur in $U$)".\  Otherwise, you can have $1+z\cd\frac{-1}{z}$ which will make the proof undefined.}Thus, by Proposition \ref{prop:Taylor} 
$$
\det(U)=\Coef_{z^{n}}((1+zu_{1})\cdots (1+zu_{n}))= u_{1}\cdots u_{n}
$$
has a polynomial\iddosml{-}size $\PI$ proof. \Commentdel{Maybe we should write that this holds, since there is a short proof of $ (1+zu_{1})\cdots (1+zu_{n})=z^n\cd u_1\cdots u_n +T$, for some $T$ with $ z $-degree $<n$.}

\iddosml{Part} \ref{char-dva} follows from the previous lemma, \iddosml{ as follows}. We obtain polynomial-size $ \PI $ proof\iddosml{s} of the following substitution instance:\mardel{Need to state why is this substitution instance defined?}
\begin{equation}\label{eq:some substitution instance}
        \Det(zI_{n}+X^{-1})= z^{n}+ Q_{n-1}z^{n-1}+\dots+Q_{0},
\end{equation}
where the $Q_i $'s are circuits with divisions \iddosml{that} do not contain the variable $ z $ and the proof is defined for $z=0$.

By Proposition \ref{lem:det} we have a polynomial-size $ \PI$  proof of
\[
        \Det(I_{n}+{zX})= \Det(zI_{n}+X^{-1})\cd \Det (X)\,.
        \]
The proof is defined for $z=0$ (as is witnessed by letting $X:=I_{n}$).\Commentdel{I don't understand this: I think that the proof is defined for z=0 already due to the conditions stated in Proposition 35, namely, because  $X[k]$ and $(0\cd I_n+X^{-1})[k]$ are invertible for every k=1,...,n }
From equation (\ref{eq:some substitution instance}) we get a polynomial-size proof of
\begin{equation*}\label{eq:Detz}\Det(I_{n}+zX)=  z^{n}\Det (X) +  z^{n-1}Q'_{n-1}+\dots +Q'_{0}, \end{equation*}
where $Q'_{n-1},\dots, Q'_{0}$  do not contain $z$. The proof is defined for $z=0$ and so Proposition \ref{prop:Taylor} gives a polynomial-size $ \PI $ proof of
\begin{equation*}
        \Coef_{z^{n}}(\Det(I_{n}+zX))=
\Coef_{z^{n}}(z^{n}\Det (X) +  z^{n-1}Q'_{n-1}+\dots +Q'_{0}).
\end{equation*}
But by the definition of $ \det(X) $,  $\Coef_{z^{n}}(\Det(I+zX)) $ is $ \det (X)$ and by the definition of $ \Coef_{z^n} $, $\Coef_{z^{n}}(z^{n}\Det (X) +  z^{n-1}Q'_{n-1}+\dots +Q'_{0}) $ is $  \Det (X)$, and we are done.
\end{proof}
\FullSpace

\section{Concluding the main theorem}
We can now finally prove Theorem \ref{thm: main det} (Main Theorem), which we rephrase as follows:

\begin{proposition}[Theorem \ref{thm: main det}, rephrased]\label{prop: detc}
Let $X,Y,Z$ be $n\times n$ matrices such that $X,Y$ consist of different variables and $Z$ is a triangular matrix with $z_{11},\dots, z_{nn}$ on the diagonal. Then there exist an arithmetic circuit $\detc$ and a formula $\det_f$ such that:
\begin{enumerate}
\item The identity \label{prop - jedna} $\detc(XY)=\detc(X)\cd\detc(Y)$ and $\detc(Z)= z_{11}\cdots z_{nn}$ have polynomial-size $ O(\log^2 n)$\iddo{-}depth proofs in $\PC$.
\item\label{prop - dva}
The identity $\det_f(XY)=\det_f(X)\cd\det_f(Y)$ and $\det_f(Z)= z_{11}\cdots z_{nn}$ have $\PF$ proofs of size $n^{O(\log n)}$.
\end{enumerate}
\end{proposition}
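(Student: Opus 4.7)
The plan is to combine the structural theorems of Sections \ref{sec:bounding}--\ref{sec:large} with the $\PI$-proofs from Section \ref{sec: determinant}. I take $\detc := [\det]$, the balanced version (Section \ref{sec:balancing proofs}) of the division-free circuit $\det$ defined in (\ref{eq:new-det}); by Lemma \ref{lem: correctness}, since $\det$ has polynomial size and syntactic degree $n$, $\detc$ has polynomial size and depth $O(\log^{2} n)$. The formula $\det_f$ is taken to be $[\det]^{\bullet}$, of size $n^{O(\log n)}$ as in the proof of Theorem \ref{thm: main simulation}. Under these choices, $\detc(XY)$, $\detc(X)$, $\detc(Y)$, $\detc(Z)$ denote the circuits obtained by substituting the entries of the respective matrices into the $n^{2}$ inputs of $\detc$; each is a circuit of polynomial size and depth $O(\log^{2} n)$.

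First I will establish polynomial-size $\PI(\F)$ proofs of $\det(XY) = \det(X)\cd\det(Y)$ and $\det(Z) = z_{11}\cdots z_{nn}$. The triangular identity is Proposition \ref{lem:charpoly}(i) applied to $Z$. For multiplicativity, I invoke Proposition \ref{lem:charpoly}(ii) to rewrite $\det(M) = \Det(M)$ for each $M \in \{X, Y, XY\}$, and combine with $\Det(XY) = \Det(X)\cd\Det(Y)$ from Proposition \ref{lem:det}(ii). The critical verification is that the required substitutions yield \emph{defined} $\PI$ proofs: this is immediate for $M = X$ or $M = Y$, and for $M = XY$ it follows because the $n^{2}$ entries $\sum_{k} x_{ik}y_{kj}$ of $XY$ are algebraically independent rational functions over $\F$, so every inverse gate remains nonzero under the substitution; the same observation verifies the invertibility hypotheses of Proposition \ref{lem:det}(ii). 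Applying Theorem \ref{thm: main divisions} then converts these two identities---both division-free and of polynomial syntactic degree (at most $2n$)---into polynomial-size $\PC(\F)$ proofs.

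Next I invoke the Claim embedded in the proof of Theorem \ref{thm: main balancing}(ii): $F = [F]$ has a polynomial-size $\PC$ proof whenever $F$ has polynomial size and polynomial syntactic degree. Applied to $\det$, this produces a polynomial-size (though possibly high-depth) $\PC$ proof of $\det = \detc$, and substitution yields polynomial-size $\PC$ proofs of $\det(M) = \detc(M)$ for every $M \in \{X, Y, XY, Z\}$. Chaining these with the polynomial-size $\PC$ proofs of the previous paragraph gives polynomial-size $\PC$ proofs of
\[
\detc(XY) = \detc(X)\cd\detc(Y)\qquad\text{and}\qquad \detc(Z) = z_{11}\cdots z_{nn}.
\]
Because both sides of each identity have depth $O(\log^{2} n)$ and polynomial syntactic degree, Theorem \ref{thm: main balancing}(ii) then compresses these proofs to polynomial size and depth $O(\log^{2} n)$, establishing part (\ref{prop - jedna}). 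Part (\ref{prop - dva}) follows by applying Theorem \ref{thm: main simulation} to the resulting polynomial-size $\PC$ proofs of polynomial syntactic degree, yielding $\PF$ proofs of size $n^{O(\log n)}$ with $\det_f := [\det]^{\bullet}$.

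The main obstacle I anticipate lies in verifying definedness of the $\PI$ substitutions, particularly those involving $\Det$. Since $\Det$ has \emph{exponential} syntactic degree, Corollary \ref{cor: subst} does not apply, and definedness must be established by ad hoc inspection of which inverse gates appear in the proofs---essentially the $\dd$-gates and the entries of the $X_{1}^{-1}$ blocks in the recursive construction of $\Det$---combined with the algebraic-independence argument for the entries of $XY$. Once definedness is secured, the remaining chain of invocations of Theorems \ref{thm: main divisions}, \ref{thm: main balancing}(ii), and \ref{thm: main simulation} is largely mechanical.
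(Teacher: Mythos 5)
Your proof is correct and follows the same overall pipeline as the paper (define $\detc := [\det]$ and $\det_f := [\det]^\bullet$, obtain $\PI$ proofs via Propositions \ref{lem:det} and \ref{lem:charpoly}, eliminate divisions via Theorem \ref{thm: main divisions}, then balance). But there is one genuinely different step, and it is worth pointing out. The paper applies Theorem \ref{thm: main balancing}(i) to the $\PC$ proof of $\det(XY)=\det(X)\cdot\det(Y)$ to get $\tr{\det(XY)}=\tr{\det(X)\cd\det(Y)}$, and then has to bridge the gap between $\tr{\det(XY)}$ (the balancing operator applied to the substituted circuit) and $\detc(XY)=\tr{\det}(XY)$ (substitution applied after balancing): these are not the same circuit, and the paper needs a separate Claim---commuting $[\cdot]$ with substitution---to close this gap. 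You sidestep this entirely: you first prove $\det=\detc$ in unbalanced $\PC$ via the Claim embedded in the proof of Theorem \ref{thm: main balancing}(ii), substitute (closure of $\PC$ proofs under substitution gives $\det(M)=\detc(M)$ with only a polynomial blowup), chain in unbalanced $\PC$ to obtain $\detc(XY)=\detc(X)\cd\detc(Y)$, and only then apply Theorem \ref{thm: main balancing}(ii) once to compress the entire proof to depth $O(\log^2 n)$. Because the endpoints $\detc(XY)$, $\detc(X)\cd\detc(Y)$, $\detc(Z)$, $z_{11}\cdots z_{nn}$ all have depth $O(\log^{2}n)$ and degree $O(n)$, the parameters check out. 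This is a cleaner packaging that avoids the auxiliary claim, at no extra cost.

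One small imprecision: for part (\ref{prop - dva}) you cite Theorem \ref{thm: main simulation} directly, but that theorem is stated for \emph{formulas} $F,G$, while $\detc(XY)$ etc.\ are circuits. What you actually need is Claim 1 from its proof (an $O(p\,2^{k})$-size $\PF$ proof of $H_1^\bullet = H_2^\bullet$ from a $p$-line, depth-$k$ $\PC$ proof of $H_1=H_2$), applied to the $O(\log^{2}n)$-depth $\PC$ proof from part (\ref{prop - jedna}); observing that $\detc(M)^\bullet = \det_f(M)$ since the entries of $M$ are formulas. This is exactly what the paper does and it is what your argument implicitly relies on, so it is a citation slip rather than a gap.
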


\begin{proof}
Let $\det(X)=\Delta_{z^n}\Det(I+zX)$ be the circuit defined in (\ref{eq:new-det}).
Lemma \ref{lem:charpoly} part \ref{char-dva} and Proposition \ref{lem:det} imply that the equations
\begin{equation}
\det(XY)=\det(X)\cd\det(Y)\ \ \ \ \ \mbox{and}\ \ \ \ \ \det(Z)= z_{11}\cdots z_{nn} \label{eq: detXY}
\end{equation}
have polynomial-size $\PI$ proofs. By definition, the syntactic degree of $\det(X)$ is at most $ n$.
Hence, by Theorem \ref{thm: main divisions} the identities in (\ref{eq: detXY}) have polynomial-size $\PC$ proofs.
This almost concludes part \ref{prop - jedna}, except for the bound on the depth. To bound the depth, let
\[\det\nolimits_c(X):= [\det(X)], \]
where $[F]$ is the balancing operator as defined in Section \ref{sec:balancing proofs}. Thus, Theorem \ref{thm: main balancing} implies that
\[
         \tr{\det(XY)}=\tr{\det(X)\cd\det(Y)}\qquad\mbox{and}\qquad\tr{\det(Z)}= \tr{z_{11}\cdots z_{nn}}
\]
have $\PC$ proofs of polynomial-size and depth $O(\log^2 n)$. By means of Lemma \ref{lem:simulation}, we have such proofs also for
\[
        \tr{\det(X)\cd\det(Y)}=\tr{\det(X)}\cd\tr{\det(Y)}=
                \det\nolimits_c(X)\cd\det\nolimits_c(Y) \qquad\mbox{and}\qquad \tr{\det(Z)}= z_{11}\cdots z_{nn}.
\]
Hence it is sufficient to construct (polynomial-size and $O(\log^2 n)$ depth proofs) of
\[\tr{\det(XY)}= \det\nolimits_c(XY) \qquad\mbox{and}\qquad \tr{\det(Z)}= \det\nolimits_c(Z)\]
{(note that defining $ \detc(X) $ as $ \tr{\det(X)} $ does not imply that $ \tr{\det(XY)}= \det\nolimits_c(XY) $)}. This follows from the following more general claim:

\begin{claim*} Let $F(x_1/g_1,\dots, x_n/g_n )$ be a circuit of size $s$ and syntactic degree $d$. Then
\[\tr{F(x_1/g_1,\dots, x_n/g_n )}= \tr{F(x_1,\dots, x_n)}(x_1/\tr{g_1},\dots, x_n/\tr{g_n})\]
has a $\PC$ proof of size $\poly(n,d)$ and depth $O(\log d\log s+\log^2d)$.
\end{claim*}
\begin{proof}
This follows by induction using Lemma \ref{lem:simulation}. We omit the details.
\end{proof}
\medskip
To prove part \ref{prop - dva}, recall the definition  of $F^\bullet$ from Remark \ref{def:similar}. Let {$\det_f(X):=(\detc(X))^\bullet$}. Then the statement follows from part \ref{prop - jedna}
 and Claim 1 \iddosml{in} the proof of Theorem \ref{thm:5-restated-is now flying}.
\end{proof}

We should note that in the $\PC$-proof of the equation $\det(XY)=\det(X)\cd\det(Y)$ no divisions occur and so it is defined for any substitution. In particular,
\[\det(AX)= \det(A)\cd\det(X)= a\iddo{\cd}\det(X)\]
has a short $\PC $ proof for any matrix $A$ of field elements whose determinant is $a\in \F$.
Similarly, the elementary Gaussian operations stated in Lemma \ref{lem:elop} carry over to polynomial-size $\PC$ proofs of the corresponding properties of $\det$.

\section{Applications}\label{sec:applications}

In this section, we prove Propositions \ref{prop:Valiant} and  \ref{prop:XY}, as well as a $\PC$-version of Cayley-Hamilton theorem.
First, one should show that the cofactor expansion of the determinant has short proofs. For an $n\times n$ matrix $X$ and $ i,j \in [n]$, let $X_{i,j}$ denote the $(n-1)\times (n-1)$-matrix obtained by removing the $i$th row and $j$th column from $X$. Let $\Adj(X)$ be the $n\times n$ matrix whose $(i,j)$-th entry is $(-1)^{i+j}\detc(X_{j,i})$ (where $\detc$ is the circuit from Proposition \ref{prop: detc}).

\begin{proposition}[Cofactor expansion]\label{prop:cofactor}
Let $X=\{x_{ij}\}_{i,j\in [n]}$ be an $n\times n$ matrix, {for variables $ x_{ij} $}. Then the following identities have polynomial-size $ O(\log^2 n) $-depth $\PC$ proofs:
\begin{enumerate}[(i)]
\item $\detc(X)= \sum_{j=1}^n (-1)^{i+j}x_{ij}{\detc}(X_{i,j})$,\, for any $i\in [n] $;\label{cof1}
\item \label{cor:adj} $X\cdot \Adj(X)=\Adj(X)\cdot X= \detc(X)\cdot I$.\label{cof2}
\end{enumerate}
\end{proposition}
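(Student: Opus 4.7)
My plan is to derive (i) as a consequence of (ii) and to prove (ii) by first working in $\PI$ and then applying the structural machinery already developed. The reduction is direct: the $(i,i)$-entry of $X\cd\Adj(X)$ equals
\[
 (X\cd\Adj(X))_{ii} \;=\; \sum_{j=1}^n x_{ij}\,\Adj(X)_{ji} \;=\; \sum_{j=1}^n (-1)^{i+j} x_{ij}\, \detc(X_{i,j}),
\]
so once (ii) has a polynomial-size $O(\log^2 n)$-depth $\PC$ proof, reading the diagonal entries yields (i) within the same bounds.

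For (ii) itself, I would first prove an analogue for the rational-function determinant $\Det$ in $\PI$, namely Cramer's rule
\[
 \Det(X)\cd X^{-1} \;=\; \Adj_\Det(X), \qquad \Adj_\Det(X)_{ij} := (-1)^{i+j}\Det(X_{j,i}),
\]
by induction on $n$, using the block formula (\ref{eq:inverse}) for $X^{-1}$, the Schur-complement expression $\Det(X)=\Det(X_1)\cd\dd(X)$, and Lemma~\ref{lem:elop}. The base $n=1$ is the axiom $x_{11}\cd x_{11}^{-1}=1$. Multiplying both sides of Cramer's rule by $X$ and invoking Proposition~\ref{prop: inverse} then yields a polynomial-size $\PI$ proof of
\[
 X\cd\Adj_\Det(X) \;=\; \Adj_\Det(X)\cd X \;=\; \Det(X)\cd I_n.
\]

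Next, using Proposition~\ref{lem:charpoly}(\ref{char-dva}) (which gives a polynomial-size $\PI$ proof of $\Det(Y)=\det(Y)$ for any matrix $Y$ of distinct variables) applied to $X$ and to every minor $X_{j,i}$, I replace $\Det$ by $\det$ throughout, producing a polynomial-size $\PI$ proof of the division-free identity
\[
 X\cd\Adj(X) \;=\; \Adj(X)\cd X \;=\; \det(X)\cd I_n,
\]
whose syntactic degree is $O(n)$. Theorem~\ref{thm: main divisions} then converts this $\PI$ proof into a polynomial-size $\PC$ proof, and Theorem~\ref{thm: main balancing}(i) balances it into a polynomial-size $\PC$ proof of depth $O(\log^2 n)$. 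Finally, replacing the polynomial determinant circuit by its balanced version $\detc$ (as in the proof of Proposition~\ref{prop: detc}, through the map $[\cd]$ and Lemma~\ref{lem:simulation}) yields the claimed identity with $\detc$.

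The main obstacle I anticipate will be the inductive step for Cramer's rule. The block formula (\ref{eq:inverse}) couples every entry of $\Det(X)\cd X^{-1}$ with both $X_1^{-1}$ and the Schur complement $\dd(X)$, so matching the resulting expression against $(-1)^{i+j}\Det(X_{j,i})$ requires a secondary cofactor expansion of the minor $X_{j,i}$ along its last row or column, which itself must be handled by the induction hypothesis applied to $X_1$ (combined with row and column swaps from Lemma~\ref{lem:elop}(\ref{exception})). The sign and indexing bookkeeping is where the care is needed; however, since all intermediate circuits have polynomial size and the induction depth is $n$, the overall $\PI$ proof remains polynomial, after which the appeals to Theorems~\ref{thm: main divisions} and~\ref{thm: main balancing} are purely mechanical.
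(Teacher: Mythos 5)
You take the reverse route from the paper: you derive (i) from (ii), and you establish (ii) by first proving Cramer's rule $\Det(X)\cd X^{-1}=\Adj_\Det(X)$ in $\PI$ by induction on the block decomposition, then passing from $\Det$ to $\det$ to $\detc$. The paper instead proves (i) first, directly: for row $1$ it establishes multi-linearity of $\detc$ in the first row (writing $\detc(X)=\sum_j \detc(X_j)$, where $X_j$ zeroes out all but the $j$-th entry of row $1$), and then proves each $\detc(X_j)=(-1)^{1+j}x_{1j}\detc(X_{1,j})$ via an $LU$-decomposition of $X_{1,1}$ and the multiplicativity of $\detc$; the general row $i$ follows by a permutation. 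Part (ii) then falls out immediately: the diagonal entries of $X\Adj(X)$ are the row-$i$ cofactor expansions, and each off-diagonal entry is $\detc(Y)$ for a matrix $Y$ with two equal rows, which is shown to vanish by factoring $Y=AJY$ with $J$ having a zero on its diagonal.

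Both approaches are in principle workable, but yours puts the weight on a substantially heavier inductive step. Matching the $n^2$ entries of $\Det(X)\cd X^{-1}$ from the block formula \eqref{eq:inverse} against the cofactors $(-1)^{i+j}\Det(X_{j,i})$ is routine only on the last row and column; for the top-left $(n-1)\times(n-1)$ block you get
\[
\Det(X)\cd(X^{-1})_{ij} \;=\; \bigl(x_{nn}-v_2 X_1^{-1}v_1^{t}\bigr)\Adj_\Det(X_1)_{ij} + \Det(X_1)\,(X_1^{-1}v_1^{t})_i\,(v_2 X_1^{-1})_j ,
\]
and showing that this equals $(-1)^{i+j}\Det(X_{j,i})$ requires a secondary Laplace expansion of $\Det(X_{j,i})$ along its last row \emph{and} last column and a careful recombination — each such expansion itself must be derived by instantiating the induction hypothesis at the generic $(n-1)\times(n-1)$ minor $X_{j,i}$ and reading off the diagonal/off-diagonal entries, plus handling the rank-one correction term. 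You flag this as the main obstacle and I agree: it is the real content, and your sketch does not spell it out. By comparison, the paper's route — multi-linearity plus one $LU$ step — confines all the determinant-specific work to the very last row and column of a single $LU$-factored matrix, so the bookkeeping is far lighter. The paper's trick for the off-diagonal entries of $X\Adj(X)$ (repeated rows $\Rightarrow$ a zero-diagonal factor) is also cleaner than anything that falls out of your Cramer-first order. I would not call your proposal wrong, but to be a complete proof the inductive step needs to be carried out in full, and the sign and index bookkeeping there is substantially more involved than you suggest.
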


\begin{proof}
For part \iddosml{(\ref{cof1})} we prove
 ${\det}_{c}(X)= \sum_{j=1}^n (-1)^{1+j}x_{1j}{{\det}_{c}}(X_{1,j})$. The general case follows if we multiply $X$ by an appropriate permutation matrix using Proposition \ref{prop: detc}.
It is sufficient to construct a polynomial size $\PI$ proof,  for we can then eliminate the division gates by means of Theorem \ref{thm: main divisions} and bound the depth of the proof by means of Theorem \ref{thm: main balancing}.

For $j\in \{1,\dots,n\}$, let $X_{j}$ be the matrix obtained by replacing $x_{1i}$ by $0$ in $X$, for every $i\not=j$.
We want to show that
\begin{eqnarray} \detc(X)&=&\detc(X_{1})+\dots +\detc(X_{n}) \label{cof3}\\
                                                   \detc(X_{j})&=& (-1)^{1+j}x_{1j}\detc(X_{1,j})\,, \,\, j\in \{1,\dots,n\} \label{cof4}
\end{eqnarray}
have polynomial size $\PI$ proofs.

For (\ref{cof4}), it is sufficient to consider $j=1$, the other cases follow by an permutation of rows.
By Proposition \ref{lem: LU} we there exist a lower resp. upper triangular matrix $L$ and $U$ such that $X_{1,1}=LU$ has a polynomial size proof.
If $w:=(x_{21},\dots, x_{(n-1)1})$, we have
\[X_{1}= \left(\begin{array}{l r} x_{11}& 0\\ w^{t} & X_{1,1}\end{array}\right)=\left(\begin{array}{l r} x_{11}& 0\\ w^{t}& L\end{array}\right)\left(\begin{array}{l r} 1& 0\\ 0 &U\end{array}\right)\, \]
and so by Proposition \ref{prop: detc}
\[\detc(X_{1})= x_{11}\detc(L)\detc(U)= x_{11}\detc(LU)=x_{11}\detc(X_{1,1})\,.\]

\iddosml{Equation} (\ref{cof3}) follows from the  general identity
\[\detc (X[u+v])=\detc(X[u])+\detc(X[v])\,,\] where $X[v]$ denotes the matrix obtained by replacing the first row of $X$ by the vector $v$.
Writing $u=(u_{1},\bar u)$ and  $v=(v_{1},\bar v)$, we have
\[ X[u]=
\left(\begin{array}{l r}u_{1}& \bar u\\ w^{t} & X_{1,1}\end{array}\right)= \left(\begin{array}{l r}u_{1}-\bar u X_{1,1}^{-1}w^{t}& \bar u X_{1,1}^{-1}\\ 0 & I_{n}\end{array}\right) \cdot \left(\begin{array}{l r}1& 0 \\ w^t &X_{1,1} \end{array}\right)\,.
\]
Hence
$X[u]=\detc(X_{1,1})(u_{1}-\bar u X_{1,1}^{-1}w^{t})$,
and similarly for $X[v]$ and $X[u+v]$.
Therefore
\begin{align*}
\detc(X[u+v])&=\detc(X_{1,1})(u_{1}+v_{1}-(\bar u+\bar v) X_{1,1}^{-1}w^{t})
\\& = \detc(X_{1,1})(u_{1}-\bar u X_{1,1}^{-1}w^{t})+ \detc(X_{1,1})(v_{1}-\bar v X_{1,1}^{-1}w^{t})\\
& = \detc(X[u])+\detc(X[v])\,.
\end{align*}

Part \iddosml{(\ref{cof2})} is an application of \iddosml{part (\ref{cof1})}. The $i,j$-entry of $X\cdot \Adj(X)$
is \[a_{ij}= \sum_{k=1}^{n}(-1)^{i+k}x_{ik}\detc(X_{j,k})\,.\]
Hence we already know that $a_{ij}=\detc(X)$ whenever $i=j$ and it remains to show that $a_{ij}=0$ if $i\not=j$. By part \ref{cof2}
$\sum_{k=1}^{n}(-1)^{i+k}x_{ik}\detc(X_{j,k})=\detc(Y)$, where $Y$ is the matrix obtained by replacing the $j$-th row in $X$ by $(x_{i1},\dots, x_{in})$.
I.e., if $i\not =j$, $Y$ contains two identical rows. Then  $Y$ can be written as $Y= AJY$, where $J$ is a diagonal matrix with some entry on the diagonal equal to zero, and so $\detc(Y)= \detc(A)\detc(J)\detc(Y)= 0$. The proof for $\Adj(X)X$ is similar, or note that we can now conclude $\Adj(X)= \detc(X)X^{-1}$.\mardel{Why did you write here "\underline{or}, note that.."?}

\end{proof}

\begin{proposition}[Proposition \ref{prop:XY} restated]
The identities $ YX=I_{n}$ have polynomial-size and $ O(\log^2 n)$-depth $ \PC$ proofs from the equations $XY=I_{n}$. In the case of\, $\PF$, the proofs have quasipolynomial-size.
\end{proposition}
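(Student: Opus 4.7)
The plan is to formalize the standard linear-algebraic derivation of $YX=I_{n}$ from $XY=I_{n}$ inside $\PC$, using the determinant circuit $\detc$ of Proposition \ref{prop: detc} and the adjugate identity of Proposition \ref{prop:cofactor} as the main tools. The key observation is that once we have access to $\detc$ and $\Adj$ as circuits with short $\PC$ proofs of their defining properties, the inversion argument reduces to a handful of ring manipulations on matrices (each of which has a small $\PC$ proof by Lemma \ref{lem:matrices}).

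Concretely, I would proceed in four steps, keeping all auxiliary circuits of polynomial size:
\textbf{Step 1 (determinants multiply to $1$).} Apply $\detc$ to both sides of the assumption $XY=I_{n}$ to obtain $\detc(XY)=\detc(I_{n})$. By Proposition \ref{prop: detc}, $\detc(XY)=\detc(X)\cd\detc(Y)$ has a short $\PC$ proof, and by part \ref{prop - jedna} of the same proposition (applied to the triangular matrix $I_{n}$), $\detc(I_{n})=1\cdots 1=1$ has a short $\PC$ proof. Combining, we obtain a short $\PC$ proof of
\[
\detc(X)\cd\detc(Y)=1.
\]
\textbf{Step 2 (adjugate gives a left inverse).} By Proposition \ref{prop:cofactor} part \ref{cof2}, $\Adj(X)\cdot X=\detc(X)\cd I_{n}$ has a polynomial-size $\PC$ proof. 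Multiplying the assumption $XY=I_{n}$ on the left by $\Adj(X)$ and using associativity (Lemma \ref{lem:matrices}), we obtain
\[
\detc(X)\cd Y=\Adj(X)\cdot X\cdot Y=\Adj(X)\cdot I_{n}=\Adj(X).
\]
\textbf{Step 3 (right-multiply by $X$).} Multiplying the last identity on the right by $X$ and again invoking associativity together with $\Adj(X)\cdot X=\detc(X)\cd I_{n}$, we get
\[
\detc(X)\cd(YX)=\Adj(X)\cdot X=\detc(X)\cd I_{n}.
\]
\textbf{Step 4 (cancel using Step 1).} Multiply both sides by $\detc(Y)$ and use commutativity and associativity of scalar multiplication on matrices to rearrange the left-hand side as $(\detc(X)\cd\detc(Y))\cd(YX)$ and the right-hand side as $(\detc(X)\cd\detc(Y))\cd I_{n}$. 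Substituting the equation from Step 1 on both sides yields $YX=I_{n}$, as required.

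Each of the four steps is a short derivation involving polynomially many matrix operations on circuits of polynomial size. Combined with Proposition \ref{prop: detc}, Proposition \ref{prop:cofactor} and Lemma \ref{lem:matrices}, this gives a polynomial-size $\PC$ proof from the assumption $XY=I_{n}$. The depth bound $O(\log^{2}n)$ follows either from using the balanced circuit $\detc$ and balanced versions of the adjugate identity directly, or by applying Theorem \ref{thm: main balancing} to the proof we just built (the syntactic degree of every equation is $O(n)$). For the $\PF$ statement, we invoke Theorem \ref{thm: main simulation}: since all equations have syntactic degree at most $\poly(n)$, the polynomial-size $\PC$ proof translates to a $\PF$ proof of size $n^{O(\log n)}$.

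The only delicate point\,---\,and the one I expect to require the most care\,---\,is ensuring that the cancellation in Step 4 is carried out with a circuit-level (not rational-function-level) version of $\detc$, so that no division is used. This is exactly why we use the division-free circuit $\detc$ from Section \ref{sec:det as polynomial} rather than the rational function $\Det$: the scalar $\detc(X)\cd\detc(Y)$ can be literally substituted by $1$ using the equation from Step 1, avoiding any appeal to $\PI$ inside the final proof.
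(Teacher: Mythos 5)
Your proof is correct and follows essentially the same four-step derivation as the paper: obtain $\detc(X)\detc(Y)=1$, left-multiply $XY=I_n$ by $\Adj(X)$ to get $\detc(X)Y=\Adj(X)$, right-multiply by $X$ to get $\detc(X)\cd YX=\detc(X)\cd I_n$, and cancel using the first equation. Your closing remark about using the division-free circuit $\detc$ (so the argument is a genuine $\PC$ proof from assumptions, not a $\PI$ proof) matches the paper's opening caution exactly.
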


\begin{proof} Note that we are dealing with a $\PC$ proof from assumptions, and hence we are not allowed to use division gates. The proof is constructed as follows. Assume  $XY=I_{n}$. By Proposition \ref{prop: detc}, this gives $\detc(X)\detc(Y)=1$. By Proposition \ref{prop:cofactor}, we can multiply from left both sides of $XY=I_{n}$ by $\Adj(X)$, to obtain $\detc(X) Y= \Adj(X)$. Hence,
\[{\det}_{c}(X) Y X= \Adj(X) X= {\det}_{c}(X)I_{n} ,\]
and so
\[{\det}_{c}(Y) {\det}_{c}(X) Y X= \detc(Y) \detc(X) I_{n} ,\]
which, using $ \detc(X)\detc(Y)=1$ gives $YX=I_{n}$.
The $\PF$ proof is identical, except that the steps involving the determinant require a quasipolynomial size.
\end{proof}
\FullSpace
\begin{proof}
[Proof of Proposition \ref{prop:Valiant}]
{The proof proceeds via a simulation of the construction in \cite{Val79:ComplClass}} (compare also with the presentation in \cite{HWY10:CCC}). The matrix $M$ is constructed inductively with respect to the size of the formula. It is convenient to maintain the property
\begin{equation*} \label{eq:weakly}M_{i,i+1}= 1 \qquad \hbox{and}\qquad M_{i,j}=0, \ \hbox{if}\  j>i+1\,. \end{equation*}

Let us call matrices of this form  \emph{nearly triangular}. Let $M_1, M_2$ be nearly triangular matrices of dimensions $s_1\times s_1$ and $s_2\times s_2$, respectively. In order to prove the correctness of the simulation of Valiant's construction \cite{Val79:ComplClass},
it is sufficient to show that the following equations have polynomial-size $\PC$ proofs:

\begin{enumerate}
\item $\detc(M)=\detc(M_1)\cdot\detc (M_2)$, where
$$M = \left( \begin{array}{cc}
M_1 & E \\
0 & M_2 \\
\end{array} \right) \,, $$ and $E$ has $1$ in the lower left corner and $0$ otherwise.
\item $\detc(M)= \detc(M_1)+\detc(M_2)$, with
$$M = \left( \begin{array}{cccc}
1 &  v & 0 & 0 \\
0 & M_1 & v_1 & 0  \\
M_2[1] & 0 & v_2 & M_2[2^{+}] ,
\end{array} \right) , $$
where $v$ is a row vector with $1$ in the leftmost entry and $0$ elsewhere,
$v_1$ is a column vector with $1$ in the bottom entry and $0$ elsewhere, $v_2$ is a column vector with $(-1)^{s_2+1}$ in the bottom entry and $0$ elsewhere,
$M_2[1]$ is the first column of $M_2$,
and $M_2[2^{+}]$ is the matrix $M_2$ without the first column.
\end{enumerate}
Both parts are an application of Proposition \ref{prop:cofactor}.
\end{proof}
\mardel{I couldn't see why M has only variables and field elements as entries.}

\new{\subsection*{Cayley-Hamilton theorem}}

Let $X=\{x_{i,j}\}_{i,j\in [n]}$ be an $n\times n$ matrix of distinct variables. For $i\in \{0,\dots, n\}$, let $p_{i}$ be the circuit in variables $X$
defined by
\[p_{i}:= \Coef_{z^{i}}(\detc(zI_{n}-X))\,\]
and let $P_{X}(z)$ be the circuit
\[P_{X}(z):= \sum_{i=0}^{n}p_{i}z^{i}\,.\]
 $P_{X}(z)$ computes the characteristic polynomial of the matrix $X$ and we can prove the following version of Cayley-Hamilton theorem:

\begin{proposition} \label{prop: CH} \[P_{X}(X)= \sum_{i=0}^{n}p_{i}X^{i}=0\] has a polynomial-size $\PC$-proof.
\end{proposition}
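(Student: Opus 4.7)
The plan is to formalize the standard adjugate-based proof of the Cayley--Hamilton theorem inside $\PC$. First I would apply Proposition \ref{prop:cofactor} (part \ref{cof2}) to the matrix $zI_{n}-X$, which yields a polynomial-size $O(\log^{2}n)$-depth $\PC$ proof of
\[
        (zI_{n}-X)\cd\Adj(zI_{n}-X) = \detc(zI_{n}-X)\cd I_{n}.
\]
Both sides of this identity are polynomial in $z$ of $z$-degree at most $n$, with division-free coefficients depending only on $X$. Using Proposition \ref{prop:Taylor} applied entry-wise, I would define $B_{j}:=\Coef_{z^{j}}(\Adj(zI_{n}-X))$ for $j=0,\dots,n-1$ and produce a short $\PC$ proof of $\Adj(zI_{n}-X)=\sum_{j=0}^{n-1}B_{j}z^{j}$; by the definition of the $p_{i}$'s we already have $\detc(zI_{n}-X)=\sum_{i=0}^{n}p_{i}z^{i}$.

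The next step is to expand the product $(zI_{n}-X)\cd\sum_{j}B_{j}z^{j}$ using the basic matrix-ring axioms of Lemma \ref{lem: matrices} to obtain $\sum_{j=0}^{n-1}B_{j}z^{j+1}-\sum_{j=0}^{n-1}B_{j}X\,z^{j}$, and then to equate coefficients of $z^{i}$ on both sides via Proposition \ref{prop:Taylor}. With the convention $B_{-1}=B_{n}=0$, this gives matrix identities
\[
        B_{i-1}-B_{i}X = p_{i}I_{n}, \qquad i=0,\dots,n,
\]
each with a polynomial-size $\PC$ proof (the boundary cases $i=0$ and $i=n$ yielding $-B_{0}X=p_{0}I_{n}$ and $B_{n-1}=p_{n}I_{n}=I_{n}$). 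Finally, I would multiply the $i$-th identity on the right by $X^{i}$ and sum, producing a telescoping cancellation:
\[
        \sum_{i=0}^{n}p_{i}X^{i} \;=\; \sum_{i=0}^{n}(B_{i-1}-B_{i}X)X^{i} \;=\; \sum_{i=1}^{n}B_{i-1}X^{i}-\sum_{i=0}^{n-1}B_{i}X^{i+1} \;=\; 0,
\]
where the last equality is a term-by-term cancellation after reindexing the first sum.

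The main obstacle, which I expect to be more tedious than deep, is the careful bookkeeping required when applying $\Coef_{z^{j}}$ entry-wise to a matrix of circuits and combining these extractions with the matrix arithmetic of Lemma \ref{lem: matrices}; one must verify that every intermediate manipulation costs only a polynomial factor and that the circuits involved remain division-free throughout (so that we stay inside $\PC$ rather than $\PI$). Since $\detc(zI_{n}-X)$ and $\Adj(zI_{n}-X)$ are polynomial-size, division-free circuits of polynomial $z$-degree, and each application of Proposition \ref{prop:Taylor} blows up size only polynomially, the final $\PC$ proof of $P_{X}(X)=0$ has polynomial size, as required.
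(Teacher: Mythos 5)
Your plan is the same adjugate-and-telescoping argument used in the paper's proof: apply the cofactor expansion to $zI_n - X$, extract the $z$-coefficients of $\Adj(zI_n - X)$ via Proposition~\ref{prop:Taylor}, equate coefficients, and telescope. The one real flaw is a noncommutativity slip in the expansion step. You start from $(zI_n - X)\cdot\Adj(zI_n - X)$, but then claim that expanding $(zI_n - X)\cdot\sum_j B_j z^j$ yields $\sum_j B_j z^{j+1} - \sum_j B_j X\, z^j$. In the matrix ring this is not what you get; the correct expansion is $\sum_j B_j z^{j+1} - \sum_j X B_j\, z^j$, so the coefficient identities should read $B_{i-1} - X B_i = p_i I_n$. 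With that corrected relation, multiplying on the \emph{right} by $X^i$ does not telescope (you would face $\sum_i X B_i X^i$ against $\sum_i B_{i-1}X^i$, and you have no commutation available); you must instead multiply on the \emph{left} by $X^i$, obtaining $X^i B_{i-1} - X^{i+1}B_i = p_i X^i$, and then telescope.

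The paper sidesteps this entirely by starting from the other ordering supplied by Proposition~\ref{prop:cofactor}, namely $\Adj(zI_n - X)\cdot(zI_n - X) = \detc(zI_n - X)I_n$; expanding $\bigl(\sum_i A_i z^i\bigr)(zI_n - X)$ correctly gives $A_{i-1} - A_i X = p_i I_n$, which is the relation you wrote, and then right-multiplication by $X^i$ telescopes as you intended. So your proof is salvageable by either swapping to the paper's ordering of the adjugate product, or keeping your ordering and switching the telescoping to left-multiplication; as written, however, the expansion step and the telescoping step are inconsistent with one another.
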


As \iddosml{before}, if we replace the $p_{i}$'s by their balanced versions, we can obtain a polynomial-size $\PC$-proof of depth $O(\log^{2}(n))$.

\begin{proof}
Since\, $\detc(zI_{n}-X)$ \,has a syntactic degree $n$,  we have a polynomial\iddo{-}size proof of
$\detc(zI_{n}-X)= P_{X}(z)$ by Proposition \ref{prop:Taylor}.
Proposition \ref{prop:cofactor} gives
\[\Adj(zI_{n}-X)\cdot (zI_{n}- X)= \detc(zI_{n}-X)I_{n}= P_{X}(z)I_{n}\,.\]
Since every entry of $\Adj$ has a syntactic degree \iddosml{less than} $ n$, we can write
$\Adj(zI_{n}-X)= \sum_{i=0}^{n-1}A_{i}z^{i}$,
where the matrices $A_{i}$ do not contain $z$. Hence we also have
\[
        \left(\sum_{i=0}^{n-1}A_{i}z^{i}\right)\cdot (zI_{n}-X)=P_{X}(z)I_{n}\,.
\]
Expanding the left-hand side and collecting terms with the same power of $z$ gives
\begin{equation}
-A_{0}X+ \sum_{i=1}^{n-1}(A_{i-1}-A_{i}X)z^{i}+A_{n-1}z^{n}= p_{X}(z)I_{n}\,.\label{eq: CH}
\end{equation}
Since $P_{X}(z)=\sum_{i=0}^{n}p_{i}z^{i}$, where the $p_{i}$'s do not contain $z$, we
can compare the coefficients on the left and right-hand side of (\ref{eq: CH}) (see Proposition \ref{prop:Taylor}) to conclude
\[
p_{0}I_{n}= -A_{0}X\,,~~~~~ p_{i}I_{n}= A_{i-1}-A_{i}X\,\,\hbox{ ~~if~ }i\in \{1,\dots, n-1\}\,,~~~\,\, p_{n}I_{n}= A_{n-1}\,. 
\]
Hence 
\begin{align*}
\sum_{i=0}^{n}p_{i}X^{i} &= p_{0}I_{n}+p_{1}X+p_{2}X^{2}\iddosml{+}\dots + p_{n-1}X^{n-1}+ p_{n}X^{n}\\
&=-A_{0}X+ (A_{0}-A_{1}X)X+(A_{1}-A_{2}X)X^{2}\iddosml{+}\dots +(A_{n-2}-A_{n-1}X)X^{n-1}+A_{n-1}X^{n}\\ &=(-A_{0}X+A_{0}X)+ (-A_{1}X^{2}-A_{1}X^{2})+\dots\iddosml{+} (-A_{n-1}X^{n}+A_{n-1}X^{n})\\
&=0\,.
\end{align*}
\end{proof}

\bibliographystyle{alpha}
\bibliography{PrfCmplx-Bakoma}

\end{document}